\newtheorem{theorem}{Theorem}
\newtheorem{lemma}[theorem]{Lemma}
\newtheorem{corollary}[theorem]{Corollary}
\newdefinition{definition}[theorem]{Definition}
\newdefinition{question}[theorem]{Open Question}
\newdefinition{remark}[theorem]{Remark}
\newdefinition{example}[theorem]{Example}
\newcommand{\F}{F_{\mathcal{A}}}
\newcommand{\FB}{F_{\mathcal{B}}}
\newcommand{\fa}{f_{\mathcal{A}}}
\newcommand{\fb}{f_{\mathcal{B}}}
\newcommand{\fai}{f_{\mathcal{A}_i}}
\newcommand{\faij}{f_{\mathcal{A}_{ij}}}
\newcommand{\faj}{f_{\mathcal{A}_j}}
\newcommand{\Fa}{F_{\mathcal{A}}}
\newcommand{\Fav}{F_{\mathcal{A}_v}}
\newcommand{\Fai}{F_{\mathcal{A}_{i}}}
\newcommand{\Faij}{F_{\mathcal{A}_{ij}}}
\newcommand{\Fap}{F_{\mathcal{A}'}}
\newcommand{\fh}{\hat{f}}
\newcommand{\gh}{\hat{g}}
\newcommand{\fah}{\hat{f}_{\mathcal{A}}}
\newcommand{\faijh}{\hat{f}_{\mathcal{A}_{ij}}}
\newcommand{\fbh}{\hat{f}_{\mathcal{B}}}
\newcommand{\A}{\mathcal{A}}
\newcommand{\B}{\mathcal{B}}
\newcommand{\C}{\mathcal{C}}
\newcommand{\D}{\mathcal{D}}
\newcommand{\So}{\Sigma^{\omega}}
\newcommand{\Ss}{\Sigma^ *}
\newcommand{\Sow}{{\Sigma^{\omega}\setminus\Ss 1^{\omega}}}
\newcommand{\ds}{\displaystyle}
\let\epsilon\varepsilon
\def\picture#1#2#3{\begin{figure}
\begin{center}
\includegraphics{#2}
\caption{#1}
\def\tmp{#3}
\ifx\tmp\empty\else\label{#3}\fi
\end{center}
\end{figure}}
\journal{Linear Algebra and its Applications}
\begin{document}

\begin{frontmatter}

\title{On Continuous Weighted Finite Automata\tnoteref{abbr}}

\tnotetext[abbr]{Abbreviations used: ap (average preserving), RCP (right-convergent product), WFA (weighted
finite automata)}

\author[utu]{Jarkko Kari}
\ead{jkari@utu.fi}
\author[uk]{Alexandr Kazda\corref{cor}}
\ead{alexak@atrey.karlin.mff.cuni.cz}
\author[utu]{Paula Steinby}

\address[utu]{Department of Mathematics, University of Turku\\ Vesilinnantie 5, FI-20014, Turku, Finland}
\address[uk]{Department of Algebra, Charles University\\ Sokolovsk\'a 83, 186 75, Praha 8, Czech Republic}
\cortext[cor]{Corresponding author}

\begin{abstract}
We investigate the continuity of the $\omega$-functions and real functions
defined by weighted finite automata (WFA). We concentrate on the case of average
preserving WFA. We show that every continuous
$\omega$-function definable by some WFA can be defined by an average preserving
WFA and then characterize minimal average preserving WFA whose $\omega$-function 
or $\omega$-function and real function are continuous. 

We obtain several algorithmic reductions for WFA-related decision problems. In
particular, we show that deciding whether the $\omega$-function and real
function of an average preserving WFA are both continuous is computationally
equivalent to deciding stability of a set of matrices.

We also present a method for constructing WFA that compute continuous real
functions.
\end{abstract}

\begin{keyword}
continuity
\sep
decidability
\sep
matrix semigroup
\sep
stability
\sep
weighted finite automaton

\MSC[2010] 68Q17\sep 26A15 \sep 20M35 \sep 15A99
\end{keyword}

\end{frontmatter}

\section{Introduction}
Weighted finite automata (WFA) over $\mathbb R$ are finite
automata with transitions labelled by real numbers. They can be viewed
as devices to compute functions from words to real numbers, or
even as a way to define real functions. Weighted finite automata and transducers
have many nice applications in natural language processing, image manipulation
etc, see~\cite{handbook,Hafner98weightedfinite,Kari,Kari2,tarragonabook,mohri1,mohri2} and references therein.
On the other hand, weighted automata over more general semi-rings have been
extensively studied as a natural extension of ordinary automata. A good source for more background
information on various aspects of weighted automata is
a forthcoming handbook on the field~\cite{handbook}.

WFA provide a natural and intrinsic description for some
self-similar real functions.
Smooth real functions defined by WFA are limited to
polynomials~\cite{Karh3,Droste}. However, many more functions in
lower differentiability classes can be generated. In this paper,
we study those WFA functions that are continuous. We have two
concepts of continuity: the continuity of the function $\fa$ that
assigns real numbers to infinite words, and the continuity of the corresponding
function $\fah$ that assigns values to points of the unit interval.

Culik and Karhum\"aki have stated various results about real functions defined
by the so called \emph{level automata} in \cite{Karh}. In this work, we generalize
many of these results to apply to the general setting. We also use the closely related
theory of  right-convergent product (RCP)
sets of matrices as developed in \cite{Daubechies,Daubechies2} and \cite{Wang}.

The paper is organized as follows: In Section~\ref{preliminaries} we give basic
definitions, introduce the concepts of stable sets and RCP sets
of matrices, provide some key results on RCP sets from~\cite{Daubechies,Daubechies2},
define weighted finite automata and discuss the important concepts of minimality
and average preservation (ap).

In Section~\ref{omegasection} we study WFA as devices that assign real numbers to
infinite words. We prove that any continuous function that can be defined by a WFA can, in fact,
be defined using average preserving WFA, so restricting the attention to ap WFA is
well motivated. We establish a canonical form for the average preserving
WFA whose $\omega$-function $\fa$ is continuous (Corollary~\ref{matrix-reloaded}).
We obtain algorithmic reductions between the decision problems of determining
convergence and continuity of $\fa$, and the stability, product convergence and
continuous product convergence
of matrix sets.  If stability of finite sets of square matrices is undecidable (which
is not presently known) then all questions considered are undecidable as well.

In Section~\ref{realfunctionsection} we consider the real functions defined by WFA.
Connections between the continuity of the $\omega$-function and the corresponding real function
are formulated. We specifically look into those ap WFA whose $\omega$- and real functions are both
continuous. If the $\omega$-function is continuous then there is a simple and effectively testable additional
condition for the continuity of the corresponding
real function. Again, we see that the stability of matrix products plays an
important role in algorithmic questions. Finally, we provide a method to generate continuous
ap WFA when a stable pair of matrices is given.

\section{Preliminaries}
\label{preliminaries}
Let $\Sigma$ be a non-empty finite set. In this context, we call
$\Sigma$ an \emph{alphabet} and its elements \emph{letters}. With
concatenation as the binary operation and the \emph{empty word}
$\epsilon$ as the unit element, $\Sigma$ generates the monoid $\Ss$,
the elements of which are called \emph{words}.

We denote by $|v|$ the \emph{length} of the word $v\in\Ss$. Denote the $i$-th
letter of the word $v$ by $v_i$ and the factor $v_iv_{i+1}\cdots v_j$ by
$v_{[i,j]}$. By $pref_k(v)$ we denote the prefix of length $k$ of the word $v$. An
\emph{infinite word} $w$ is formally a mapping $w:\mathbb{N}\rightarrow$
${\Sigma}$. Denote the set of all infinite words by $\So$.

The set $\So$ is a metric space with the Cantor metric (or prefix metric)
defined as follows:
\[
d_C(w,w')=\Big\{
\begin{array}{ll}
    0 & \textrm{ if }w=w',\\
    \frac{1}{2^k}&\textrm{ otherwise,}
\end{array}
\]
where $k$ is the length of the longest common prefix of $w$ and $w'$. The space
$\So$ is a product of the compact spaces $\Sigma$, therefore $\So$ itself is
compact.

The set of reals $\mathbb{R}$ is a complete metric space with the usual Euclidean metric
\[d_E(x,y)=|x-y|\textrm{ for all }x,y\in\mathbb{R}.\]

We denote by $E$ the unit matrix (of appropriate size). We use the
same notation $\|A\|$ both for the usual $l^2$ vector norm, if $A$
is a vector, and for the corresponding matrix norm, if $A$ is a
matrix.

Assume that for each letter $a\in\Sigma$
we have an $n\times n$ square matrix $A_a$. Then for $v\in\Ss$ let $A_v$ denote the matrix product
$A_{v_1}A_{v_2}\ldots A_{v_k}$. If $v$ is empty, let $A_v=E$. If $w\in\So$,
we let $A_w=\ds\lim_{k\to\infty}A_{pref_k(w)}$ if the limit exists.

In this paper, we assume that the elements of all matrices are defined in such
a way that we can algorithmically perform precise operations of addition,
multiplication and division as well as decide equality of two numbers. We can obtain such effective
arithmetics by limiting ourselves, for example, to matrices and vectors with
rational elements.

\begin{definition} Let $B=\{A_a\,|\,a\in\Sigma\}$ be a nonempty set of $n\times n$ matrices
such that $A_w=0$ for every $w\in\So$. Then we call $B$ a \emph{stable set}.
\end{definition}

Given a finite set $B$ of matrices, we will call the algorithmic
question ``Is $B$ stable?'' the \emph{Matrix Product Stability}
problem. For $|B|=1$, Matrix Product Stability is easy to solve
using eigenvalues and Lyapunov equation (see \cite[page
169]{Mahmoud}). Moreover, there is a semi-algorithm that halts iff
$B$ is a stable set (idea of this algorithm is to check whether the
joint spectral radius of $B$ is less than 1, see \cite{Daubechies}).
However, it is not known whether there exists an algorithm deciding
Matrix Product Stability; even the binary ($|B|=2$) case is as hard
as the general stability problem, see \cite{unsolvedblondel} and
\cite{Blondel2} (we also prove this in
Lemma~\ref{reductiontosizetwo}).

The following Lemma is stated as Corollary 4.1a in \cite{Daubechies}. For the sake
of completeness we offer
a short proof here.

\begin{lemma}\label{uniform}
Let $\{A_a\,|\,a\in\Sigma\}$ be a stable set of $n\times n$ matrices.
Then the convergence of $A_{pref_k(w)}$ to zero is uniform. That is,
for every $\epsilon>0$ there exists $k_0$ such that for any
$v\in\Ss$ with $|v| > k_0$ we have $\|A_v\|<\epsilon$.
\end{lemma}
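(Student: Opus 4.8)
The plan is to combine the compactness of $\So$ with a greedy ``prefix factorization'' argument that produces uniform geometric decay. Fix a contraction constant, say $c=\tfrac12$. Because the set is stable, for every $w\in\So$ the matrices $A_{pref_k(w)}$ tend to $0$, so there is an index $k(w)$ with $\|A_{pref_{k(w)}(w)}\|<\tfrac12$. The cylinders $[pref_{k(w)}(w)]=\{w'\in\So:\ pref_{k(w)}(w')=pref_{k(w)}(w)\}$ are open in the Cantor metric and cover $\So$; since $\So$ is compact, finitely many of them suffice. This produces a finite set $P\subseteq\Ss$ of words such that every infinite word has a prefix in $P$ and $\|A_p\|<\tfrac12$ for every $p\in P$ (note $\|E\|=1$ forces each such $p$ to be nonempty).

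Next I would convert the covering property of $P$ into a statement about finite words. Put $L=\max_{p\in P}|p|$. Then every $v\in\Ss$ with $|v|>L$ has a prefix lying in $P$: extend $v$ to any infinite word $w$; the prefix of $w$ that belongs to $P$ has length at most $L<|v|$, hence it is already a prefix of $v$. Now factor an arbitrary long word $v$ greedily — peel off a prefix $p_1\in P$, then a prefix $p_2\in P$ of the remaining suffix, and so on, stopping once the remaining suffix $v'$ has length $\le L$. This writes $v=p_1p_2\cdots p_m v'$ with all $p_i\in P$ and $|v'|\le L$, and since each $|p_i|\le L$ we get $|v|\le (m+1)L$, i.e.\ $m\ge |v|/L-1$.

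Finally, from $A_v=A_{p_1}\cdots A_{p_m}A_{v'}$ and submultiplicativity of the norm, $\|A_v\|\le 2^{-m}\|A_{v'}\|\le 2^{-m}D$, where $D=\max\{\|A_u\|:\ |u|\le L\}$ is a finite constant. Since $m\ge |v|/L-1\to\infty$ as $|v|\to\infty$, given $\epsilon>0$ one simply chooses $k_0$ large enough that $2^{-(k_0/L-1)}D<\epsilon$; then $|v|>k_0$ forces $\|A_v\|<\epsilon$, which is exactly the claimed uniformity. (As a byproduct the same estimate bounds every product $A_v$, which one might otherwise have expected to prove separately; the case $D=0$ is trivial.)

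I expect the only genuine subtlety to be the step that turns the compactness-given cover, which a priori only controls infinite words, into the uniform ``escape bound'' $L$ for finite words — the observation that a finite word too long to avoid $P$ among its extensions cannot avoid $P$ among its own prefixes. Once that is in hand, the greedy factorization and the geometric-series estimate are routine, which fits the remark in the text that this is a short argument.
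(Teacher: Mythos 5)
Your proof is correct, but it takes a genuinely different route from the paper's. The paper argues by contradiction in two stages: first it shows that failure of uniformity would make the set $\{\|A_v\|\,|\,v\in\Ss\}$ unbounded (by compactness of $\So$ there is a limit word $u$ with arbitrarily large products hanging off its prefixes, and since $A_u=0$ the tail factors must blow up), and then it invokes the Gurvits-style argument: picking for each $k$ a norm-maximal word $v^{(k)}$ of length at most $k$, showing every prefix of $v^{(k)}$ has norm at least $1$, and extracting by compactness an infinite word all of whose prefixes have norm at least $1$, contradicting stability. You instead give a direct argument: compactness produces a finite set $P$ of cylinders on which the product already contracts below $\tfrac12$, the observation that a word longer than $L=\max_{p\in P}|p|$ must itself contain a prefix from $P$ converts this into a statement about finite words, and the greedy factorization $v=p_1\cdots p_m v'$ yields the explicit bound $\|A_v\|\le D\,2^{-(|v|/L-1)}$. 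Your version buys strictly more than the statement asks for: it shows the products are uniformly bounded and decay at a geometric rate, facts the paper only obtains later (for continuous RCP sets) by citing Corollary~\ref{cont-conv} from \cite{Daubechies}; the paper's version is shorter on the page mainly because it leans on the quoted lemma of Gurvits. The one step you flag as subtle --- passing from a cover of $\So$ to the escape bound for finite words --- is handled correctly, as is the non-emptiness of each $p\in P$ via $\|E\|=1$.
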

\begin{proof}
Assume that the statement is not true.
Then there is an $\epsilon>0$ such that there exist arbitrarily long $v\in\Ss$
such that
$\|A_v\|\geq \epsilon$. From compactness of $\So$ we obtain that
there exists an infinite word $u\in \So$ with the property that for
each $l$ there exists a $v_l\in \Ss$ such that
$\|A_{u_{[1,l]}v_l}\|\geq \epsilon$. But then
\begin{align*}
\|A_{u_{[1,l]}}\|\|A_{v_l}\|&\geq\|A_{u_{[1,l]}}A_{v_l}\|\geq \epsilon, \mbox{ so} \\
\|A_{v_l}\|&\geq \frac\epsilon{\|A_{u_{[1,l]}}\|}.
\end{align*}

Because $A_u=0$, we have that the set $\{\|A_v\|\,|\,v\in\Ss\}$ is unbounded. In
the rest of the proof, we use the reasoning from \cite{Gurvits}
(proof of Lemma~1.1).

For each $k$, let $v^{(k)}$ be a  word of length at most $k$ such
that $\|A_{v^{(k)}}\|$ is maximal (note that $v^{(k)}$ might be
empty). Denote $l=|v^{(k)}|$. First, we show that
$\|A_{v^{(k)}_{[1,i]}}\|\geq1$ for all $1\leq i\leq l$. If for some
such $i$ we would have $\|A_{v^{(k)}_{[1,i]}}\|<1$, then
\begin{align*}
\|A_{v^{(k)}_{[1,i]}}\|\|A_{v^{(k)}_{[i+1,l]}}\|&\geq \|A_{v^{(k)}}\|, \mbox{ so}\\
\|A_{v^{(k)}_{[i+1,l]}}\|&>\|A_{v^{(k)}}\|,
\end{align*}
contradicting the maximality of $\|A_{v^{(k)}}\|$. We conclude that
$\|A_{v^{(k)}_{{[1,i]}}}\|\geq 1$ for all $i$.

As the norm of matrices $A_v$ is unbounded, the length of $v^{(k)}$
goes to infinity. Then we obtain from the compactness of $\So$ that
there exists a word $w\in\So$ such that for each $i$ we can find
$k_i$ such that $w_{[1,i]}=v^{(k_i)}_{[1,i]}$. But this means that
$\|A_{w_{[1,i]}}\|\geq 1$ for each $i$, a contradiction with the
stability of $\{A_a\,|\,a\in\Sigma\}$.
\end{proof}

\begin{definition}
A set of matrices $\{A_a\,|\,a\in\Sigma\}$ is called
\emph{right-convergent product set} or RCP set if the function
$A_w:w\mapsto \ds\lim_{k\to \infty} A_{pref_k(w)}$ is defined on the
whole set $\So$. If $A_w$ is continuous on $\So$, we say that the
set is \emph{continuous RCP}.
\end{definition}

Clearly every stable set is a continuous RCP set.
In \cite{Daubechies}, the authors prove several results about RCP sets of
matrices. Most importantly, Theorem 4.2 from \cite{Daubechies} (with errata
from \cite{Daubechies2}) gives us a
characterization of continuous RCP sets of matrices. For $V,E_1$ subspaces of
${\mathbb R}^n$ such that $R^n=V\oplus E_1$ denote by $P_V:{\mathbb R}^n\to
{\mathbb R}^n$ the projection to $V$ along $E_1$.

\begin{theorem}[Theorem 4.2 in \cite{Daubechies}]\label{Daubechies}
Let $B=\{A_a\,|\,a\in\Sigma\}$ be a finite set of $n\times n$ matrices. Then the following
conditions are equivalent:
\begin{enumerate}[(1)]
\item The set $B$ is a continuous RCP set.
\item All matrices $A_a$ in $B$ have the same left 1-eigenspace $E_1 =
E_1(A_a)$, and this eigenspace is simple for all $A_a$. There exists
a vector space $V$ with ${\mathbb R}^n = E_1 \oplus V$, having the
property that $P_{V} B P_V$ is a stable set.
\item The same as (2), except that $P_V B P_V$ is a stable set
 for all vector spaces $V$ such that ${\mathbb R}^n = E_1 \oplus V$.
\end{enumerate}
\end{theorem}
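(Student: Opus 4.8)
Since this theorem is quoted from \cite{Daubechies,Daubechies2}, I indicate the line of proof one would reconstruct. Throughout I think of each $A_v$ as acting on row vectors by right multiplication, $\pi\mapsto\pi A_v$, so that the left $1$-eigenspace $E_1(A_a)=\{\pi\mid\pi A_a=\pi\}$ is fixed pointwise by $A_a$. The implication $(3)\Rightarrow(2)$ is trivial, so the content is $(2)\Rightarrow(1)$ and $(1)\Rightarrow(2)$, the latter in fact yielding $(3)$. For $(2)\Rightarrow(1)$: let $E_1$ be the common left $1$-eigenspace, $d=\dim E_1$, and fix $V$ with ${\mathbb R}^n=E_1\oplus V$ and $P_VBP_V$ stable. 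Extending a basis of $E_1$ by a basis of $V$, the pointwise-fixing of $E_1$ forces each $A_a=\left(\begin{smallmatrix}E&0\\ R_a&S_a\end{smallmatrix}\right)$ (block lower triangular), and $P_VA_aP_V=\left(\begin{smallmatrix}0&0\\0&S_a\end{smallmatrix}\right)$, so the set $\{S_a\}$ of bottom-right blocks is exactly $P_VBP_V$ read on $V$, hence stable. Multiplying, $A_v=\left(\begin{smallmatrix}E&0\\ \tilde R_v&S_v\end{smallmatrix}\right)$ with $S_v=S_{v_1}\cdots S_{v_{|v|}}$ and $\tilde R_v=\sum_j S_{v_{[1,j-1]}}R_{v_j}$. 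By Lemma~\ref{uniform}, $\|S_{pref_k(w)}\|\to0$ geometrically and uniformly in $w$; hence the partial sums $\tilde R_{pref_k(w)}$ are uniformly Cauchy, so $A_{pref_k(w)}$ converges uniformly in $w$. Each map $w\mapsto A_{pref_k(w)}$ is locally constant, hence continuous, so the uniform limit $A_w$ is defined on all of $\So$ and continuous: $B$ is a continuous RCP set.

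For $(1)\Rightarrow(2)$, the first stage is to extract the common $E_1$. For every finite word $u$ the periodic word $u^\omega$ lies in $\So$, so $A_{pref_k(u^\omega)}$ converges; its subsequence obtained by taking $k=m|u|$ is $(A_u^m)_m$, so $P_u:=\lim_m A_u^m$ exists. Elementary Jordan theory then forces ${\rm spec}(A_u)\subseteq\{\lambda:|\lambda|<1\}\cup\{1\}$ with the eigenvalue $1$ (if present) semisimple, $P_u$ being the associated spectral projection, $A_uP_u=P_uA_u=P_u$. Applying continuity of $A_w$ along $u^mv^\omega\to u^\omega$ gives $A_u^mP_v=A_{u^mv^\omega}\to A_{u^\omega}=P_u$, while also $A_u^mP_v\to P_uP_v$; hence $P_uP_v=P_u$, and by symmetry $P_vP_u=P_v$, for all finite $u,v$. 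Consequently $\pi P_u=\pi$ implies $\pi P_v=\pi P_uP_v=\pi$, so the fixed space of $P_u$ is independent of $u$; calling it $E_1$ and using $P_uA_u=P_u$ one checks this fixed space equals the left $1$-eigenspace of $A_u$. Thus all $A_u$ share $E_1$, and (by the semisimplicity above) $E_1$ is simple for each; moreover ${\rm rank}\,P_u=\dim E_1=:d$, so $1$ has multiplicity exactly $d$ in every $A_u$.

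Now pass to the quotient $\mathbb{R}^n/E_1$ (legitimate since $E_1$ is fixed by each $A_a$), writing $\bar A_a$ for the induced maps; in the block form above $\bar A_a$ is represented by $S_a$, so $\bar A_u=\bar A_{u_1}\cdots\bar A_{u_{|u|}}$, and ${\rm spec}(A_u)=\{1^{(d)}\}\sqcup{\rm spec}(\bar A_u)$ with all $d$ copies of $1$ accounted for; hence ${\rm spec}(\bar A_u)\subseteq\{\lambda:|\lambda|<1\}$, i.e.\ $\bar A_u$ has spectral radius $<1$, for \emph{every} finite word $u$. Therefore $\bar A_{u^\omega}=\lim_m\bar A_u^m=0$. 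The set $\{\bar A_a\}$ is RCP ($\bar A_w:=\lim_k\bar A_{pref_k(w)}$ exists, being the image of $A_w$ under a fixed linear map) and $w\mapsto\bar A_w$ is continuous; so $\{w\in\So:\bar A_w=0\}$ is closed and contains the dense set of periodic words $\{u^\omega:u\in\Ss\}$, hence equals $\So$. Thus $\{\bar A_a\}$ is stable. Realizing the quotient dynamics on any complement $V$ of $E_1$ identifies $\{\bar A_a\}$ with $P_VBP_V$; since these realizations for different $V$ differ by conjugation, $P_VBP_V$ is stable for \emph{every} such $V$, which is $(3)$ (and a fortiori $(2)$).

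The principal obstacle is this last stage: the hypothesis only gives, at the level of periodic words, that every finite product $\bar A_u$ has spectral radius $<1$, which is far weaker than joint stability of $\{\bar A_a\}$; it is precisely the density of periodic words together with the continuity of $w\mapsto A_w$ that closes the gap. The supporting bookkeeping with the spectral projections $P_u$ — used to pin down the common $E_1$, its simplicity, and the multiplicity $d$ — is the other delicate ingredient; the $(2)\Rightarrow(1)$ direction is comparatively routine once Lemma~\ref{uniform} supplies the uniformity needed for continuity of the limit.
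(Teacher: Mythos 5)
The paper offers no proof of this statement: it is imported verbatim from \cite{Daubechies} (with the errata of \cite{Daubechies2}), so there is no in-house argument to compare yours against. Judged on its own, your reconstruction is sound and follows the standard Daubechies--Lagarias line. In $(2)\Rightarrow(1)$ the block form $A_a=\left(\begin{smallmatrix}E&0\\R_a&S_a\end{smallmatrix}\right)$ and the Cauchy estimate on $\tilde R_{pref_k(w)}$ are correct; the one imprecision is attributing a \emph{geometric} rate to Lemma~\ref{uniform}, which only asserts uniform convergence. What the Cauchy estimate actually needs is summability of $\|S_{pref_j(w)}\|$, and geometric decay does follow from Lemma~\ref{uniform} together with submultiplicativity of the norm (split a long word into blocks of a fixed length on which the norm is below $\tfrac12$ --- exactly the trick the paper uses in part (1) of Lemma~\ref{matrix2}), so this is a one-line repair rather than a gap. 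In $(1)\Rightarrow(3)$ the two key ideas are right and carry the whole argument: first, $P_u=\lim_m A_u^m$ exists for every finite $u$ because $u^\omega\in\So$, which forces ${\rm spec}(A_u)\subseteq\{|\lambda|<1\}\cup\{1\}$ with $1$ semisimple, and the relation $P_uP_v=P_u$ (obtained by letting $u^mv^\omega\to u^\omega$ and invoking continuity of $w\mapsto A_w$) pins down a common simple left $1$-eigenspace $E_1$ of the same rank $d$ for \emph{all} finite products; second, the quotient maps $\bar A_u$ then all have spectral radius strictly below $1$, and the closure/density argument ($\{w:\bar A_w=0\}$ is closed by continuity and contains the dense set of periodic words $u^\omega$) legitimately upgrades this pointwise spectral information to joint stability --- you correctly identify this as the step where continuity of $w\mapsto A_w$ is indispensable. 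Finally, realizing the quotient action on an arbitrary complement $V$ shows the realizations for different $V$ are conjugate, which yields (3) and not merely (2). I see no genuine gap.
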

From Theorem~\ref{Daubechies} and Lemma~\ref{uniform} follows a corollary
(stated as Corollary 4.2a in \cite{Daubechies}), which generalizes Lemma~\ref{uniform}:
\begin{corollary}\label{cont-conv}
If $B=\{A_a\,|\,a\in\Sigma\}$ is a continuous RCP set then all the products
$A_{pref_k(w)}$ for $w\in\So$ converge uniformly at a geometric rate.
\end{corollary}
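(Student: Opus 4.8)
The plan is to combine the structural decomposition provided by Theorem~\ref{Daubechies}(2) with the uniform-convergence statement of Lemma~\ref{uniform}. Let $B=\{A_a\,|\,a\in\Sigma\}$ be a continuous RCP set. By Theorem~\ref{Daubechies}(2), all the $A_a$ share a common simple left $1$-eigenspace $E_1$, and there is a complementary subspace $V$ with $\mathbb{R}^n=E_1\oplus V$ such that $P_V B P_V=\{P_V A_a P_V\,|\,a\in\Sigma\}$ is a stable set. Write $P=P_V$ and $Q=E-P$ (the projection onto $E_1$ along $V$); note $QA_a=Q$ for every $a$, since row vectors in $E_1$ are fixed by each $A_a$, and hence $QA_v=Q$ for every $v\in\Ss$.

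First I would decompose an arbitrary product $A_v$ using $E=P+Q$. The idea is that, because each $A_a$ maps $V$ into $\ldots$ well, not quite into $V$, one must be slightly careful: the useful identity is $PA_aP=PA_a$ is false in general, so instead I would show by induction that $PA_v = (PA_{v_1}P)(PA_{v_2}P)\cdots(PA_{v_k}P) + (\text{terms that have already collapsed into } Q)$. Concretely, writing $A_a=PA_aP+PA_aQ+QA_a$ and using $QA_a=Q$, one expands $A_v=A_{v_1}\cdots A_{v_k}$ and checks that every monomial in the expansion either equals $(P A_{v_1} P)\cdots(P A_{v_k}P)$ or contains a factor $Q$ occurring at some position, after which all subsequent factors contribute only bounded ($Q A_{v_{i+1}}\cdots=Q$) quantities. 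Collecting, $A_v = (PB P)_v + Q + R_v$, where $(PBP)_v := (PA_{v_1}P)\cdots(PA_{v_k}P)$ and $R_v$ is a sum of products each of which contains a prefix of the form $(PA_{v_1}P)\cdots(PA_{v_i}P)(PA_{v_{i+1}}Q)$ followed by a bounded tail. Apply Lemma~\ref{uniform} to the stable set $PBP$: for every $\epsilon>0$ there is $k_0$ with $\|(PBP)_u\|<\epsilon$ whenever $|u|>k_0$. Using this to bound $\|(PBP)_v\|$ and, in $R_v$, to bound each prefix block of length exceeding $k_0$, one gets that $\|A_v - Q\|$ is small once $|v|$ is large — uniformly in $v$. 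Since $Q$ is the common rank-$1$ projection onto $E_1$, this identifies $\lim_{k}A_{pref_k(w)}=Q$ for every $w$ and shows the convergence $A_{pref_k(w)}\to Q$ is uniform in $w$.

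It remains to upgrade uniform convergence to a \emph{geometric} rate. Here I would invoke the quantitative form of Lemma~\ref{uniform} for stable sets: if $PBP$ is stable then its joint spectral radius $\rho$ satisfies $\rho<1$ (this is the content of the cited criterion from \cite{Daubechies}, and also follows from Lemma~\ref{uniform} applied to a suitable power of $B$ together with submultiplicativity), so there are constants $C>0$ and $0<\lambda<1$ with $\|(PBP)_u\|\le C\lambda^{|u|}$ for all $u\in\Ss$. Feeding these geometric bounds into the decomposition $A_v = (PBP)_v + Q + R_v$ — splitting $R_v$ according to the position $i$ of the first $Q$, which contributes a factor $\|(PBP)_{v_{[1,i]}}\|\le C\lambda^{i}$ times a bounded tail, and summing the geometric series over $i$ — yields $\|A_v - Q\| \le C'\mu^{|v|}$ for suitable $C'>0$, $0<\mu<1$, independently of $v$. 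In particular $\|A_{pref_k(w)}-A_w\| = \|A_{pref_k(w)}-Q\|\le C'\mu^{k}$ for all $w\in\So$, which is the claimed uniform geometric convergence.

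The main obstacle is the bookkeeping in the second step: controlling the ``cross terms'' $R_v$ that arise because the blocks $PA_aP$ do not simply multiply to give $PA_vP$. The key simplifying fact that makes this tractable is $QA_a=Q$, so that once a $Q$ appears in a monomial everything to its right collapses to a single bounded factor; combined with the geometric decay of the leading $(PBP)$-block, the sum over the position of that first $Q$ is a convergent geometric series and the estimate closes. Everything else is a routine application of Lemma~\ref{uniform} and the joint-spectral-radius characterization of stability.
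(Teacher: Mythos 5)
Your overall strategy (decompose along $E_1\oplus V$ via Theorem~\ref{Daubechies}(2), use $QA_a=Q$, and feed the geometric decay of the stable set $P_VBP_V$ into the cross terms) is the right one — the paper itself gives no proof, citing Corollary 4.2a of \cite{Daubechies} and noting it follows from Theorem~\ref{Daubechies} and Lemma~\ref{uniform}. But there is a genuine error in your execution: the limit $\lim_k A_{pref_k(w)}$ is \emph{not} $Q$, and the bound $\|A_v-Q\|\le C'\mu^{|v|}$ you claim is false. Writing $\tilde B_a=PA_aP$ and $C_a=PA_aQ$, your own expansion gives
\begin{equation*}
A_v=\tilde B_{v_1}\cdots\tilde B_{v_k}+\sum_{i=0}^{k-1}\tilde B_{v_1}\cdots\tilde B_{v_i}\,C_{v_{i+1}}+Q ,
\end{equation*}
and the terms of the middle sum with small $i$ do not decay as $|v|\to\infty$: the $i=0$ term is the fixed matrix $C_{v_1}$. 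Summing your estimate $C\lambda^i\cdot(\text{bounded})$ over $i=0,\dots,|v|-1$ produces a bounded constant, not a quantity of order $\mu^{|v|}$. A one-matrix counterexample: $A_0=\bigl(\begin{smallmatrix}1/2&1\\0&1\end{smallmatrix}\bigr)$ is a continuous RCP singleton with $A_0^k\to\bigl(\begin{smallmatrix}0&2\\0&1\end{smallmatrix}\bigr)\neq Q=\bigl(\begin{smallmatrix}0&0\\0&1\end{smallmatrix}\bigr)$. (This is also visible in the paper's canonical form (\ref{preform}), where $A_w$ carries a $w$-dependent column $\mathbf b_w$.)

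The fix is easy and stays within your framework: the limit is $A_w=Q+\sum_{i=0}^{\infty}\tilde B_{w_{[1,i]}}C_{w_{i+1}}$ (the series converges absolutely by the geometric bound on $\tilde B$), and the quantity you must estimate is $A_w-A_{pref_k(w)}$, which equals $-\tilde B_{pref_k(w)}$ plus the \emph{tail} $\sum_{i\ge k}\tilde B_{w_{[1,i]}}C_{w_{i+1}}$. Both pieces are bounded by $C\lambda^k$ times a constant (the tail by summing the geometric series from $i=k$), which gives the claimed uniform geometric convergence. Your justification that stability of $P_VBP_V$ yields constants $C,\lambda<1$ with $\|\tilde B_u\|\le C\lambda^{|u|}$ (via Lemma~\ref{uniform} plus submultiplicativity) is fine and can be kept as is.
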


Another important result that we will need is part (a) of Theorem I
in \cite{Wang}, stated below (slightly modified) as Theorem~\ref{wang2}.
A set of matrices $B$ is called
\emph{product-bounded} if there exists a constant $K$ such that the
norms of all finite products of matrices from $B$ are less than $K$.
Notice that as all matrix norms on $n\times n$ matrices are
equivalent, being product-bounded does not depend on our choice of
matrix norm.

\begin{theorem}\label{wang2}
Let $B$ be an RCP set of matrices. Then $B$ is
product-bounded.
\end{theorem}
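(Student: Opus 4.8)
The guiding principle is that whenever one exhibits the start of an infinite product that escapes to infinity, the RCP property fails; so the task is to rule out such escapes. The naive approach is to pick, for each $k$, a word $v^{(k)}$ of length at most $k$ with $\|A_{v^{(k)}}\|$ maximal; arguing exactly as in the proof of Lemma~\ref{uniform}, maximality forces every factor of $v^{(k)}$ to have norm at least $1$, and compactness of $\So$ yields a limit word $w$ with $\|A_{pref_k(w)}\|\ge 1$ for all $k$. This is not yet a contradiction: the growth of $\|A_v\|$ may be spread over incomparable branches of $\Ss$, and submultiplicativity of $\|\cdot\|$ alone does not produce a single ``heavy'' branch. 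So genuine linear algebra is needed.

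First reduce via the joint spectral radius $\hat\rho(B)=\lim_{k\to\infty}\bigl(\max_{|v|=k}\|A_v\|\bigr)^{1/k}$. If $\hat\rho(B)<1$ then $\|A_v\|\le Kq^{|v|}$ for some $q<1$, so $B$ is stable, hence product-bounded. If $\hat\rho(B)>1$, the Berger--Wang spectral-radius formula gives a finite word $v$ with $\rho(A_v)>1$, and then $\|A_{pref_{m|v|}(v^{\omega})}\|=\|A_v^m\|\to\infty$, so $A_{v^{\omega}}$ is undefined, contradicting RCP. Hence it suffices to treat the case $\hat\rho(B)=1$.

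Now induct on the matrix size $n$, splitting on reducibility. If $B$ is irreducible, an irreducible set with joint spectral radius at most $1$ admits an extremal norm $\|\cdot\|_*$ with $\|A_a\|_*\le 1$ for every letter $a$; hence $\|A_v\|_*\le 1$ for all $v$, and product-boundedness follows by equivalence of norms. If $B$ is reducible, choose a basis in which every $A_a$ is block upper-triangular with diagonal blocks forming smaller sets $B_1,B_2$; each $B_i$ is again RCP (its partial products are the restrictions and quotients of those of $B$) with joint spectral radius at most $1$, so by induction $B_1$ and $B_2$ are product-bounded. The \emph{main obstacle} is the off-diagonal block: a block upper-triangular family with product-bounded diagonal blocks can still be unbounded --- the $\bigl(\begin{smallmatrix}1&1\\0&1\end{smallmatrix}\bigr)$ phenomenon --- and the RCP hypothesis must be used to exclude exactly this. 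One shows that if the $(1,2)$-block of $A_v$ were unbounded over $v\in\Ss$, then both diagonal blocks must have partial products of norm bounded below, and (after passing to a limit) the off-diagonal part must couple a left $1$-eigenvector on one side to a right $1$-eigenvector on the other; splicing the corresponding pattern into a single eventually-periodic word $u$ then makes the $(1,2)$-block of the partial products of $u^{\omega}$ grow linearly, so $A_{u^{\omega}}$ is undefined --- a contradiction. Making this precise, uniformly over the possible diagonal blocks and over the ways the off-diagonal parts couple their $1$-eigenspaces, is the heart of the matter; it is in effect (part of) the Daubechies--Lagarias structure theory for RCP sets, which may instead be invoked directly. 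Once the off-diagonal block is bounded, $A_v$ is block-triangular with bounded blocks, and $B$ is product-bounded.
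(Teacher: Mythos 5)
First, a point of reference: the paper does not prove this statement at all --- it is imported verbatim as part (a) of Theorem I of \cite{Wang}, so there is no internal proof to compare against. Your proposal is therefore an attempt at a proof the authors deliberately outsourced, and it should be judged on its own. The outer scaffolding is sound: the trichotomy on the joint spectral radius is correct (if $\hat\rho(B)>1$, Berger--Wang produces a finite $v$ with $\rho(A_v)>1$ and then $A_{v^\omega}$ cannot exist, since the subsequence $A_{v^m}$ of its partial products diverges; if $\hat\rho(B)<1$ the set is stable), the observation that the naive Gurvits-style extraction from Lemma~\ref{uniform} only yields a branch with norms bounded below by $1$ (not a contradiction with RCP) is exactly right, and the irreducible case via an extremal norm is a correct standard argument.

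The genuine gap is where you say it is, and saying where it is does not close it. In the reducible case you must show that if the off-diagonal block $C_v$ is unbounded over $v\in\Ss$, then RCP fails; your sketch asserts that one can ``splice'' the witnesses into a single eventually periodic word $u$ with $\|C_{pref_k(u^\omega)}\|$ growing linearly. The missing step is precisely the localization of unboundedness to \emph{one} infinite branch: unboundedness of $\|C_v\|$ over the tree $\Ss$ does not, by any soft K\"onig/compactness argument, yield an infinite word along whose prefixes $\|C\|$ diverges --- the large values may live on pairwise incomparable words, and the recursion $C_{uv}=P_uC_v+C_uQ_v$ allows cancellation, so ``every prefix extends to a word with large norm'' does not imply ``some branch has unbounded prefix norms.'' The eigenvector-coupling heuristic (left $1$-eigenvector of one diagonal family meeting a right $1$-eigenvector of the other) is a plausible picture of \emph{why} it is true, but turning it into a proof requires the quantitative structure theory of Daubechies--Lagarias for the diagonal blocks, which you then propose to invoke wholesale. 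At that point the argument is circular in spirit: the cleanest honest move is to do what the paper does and cite the result, or to reproduce the actual argument from \cite{Wang}/\cite{Daubechies} rather than gesture at it. As submitted, the proposal is a correct roadmap with its central step unproved.
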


So we have the following sequence of implications:
$$
\mbox{$B$ stable } \Longrightarrow \mbox{ $B$ continuous RCP }
\Longrightarrow \mbox{ $B$ RCP } \Longrightarrow \mbox{ $B$ product
bounded}
$$
The problem of determining whether a given finite $B$ is product
bounded is undecidable~\cite{Blondel2}, while it is not known
whether it is decidable if $B$ is stable, RCP or continuous RCP. We
will explore the relationship between RCP sets and WFA later in the
paper.

\subsection*{Weighted Finite Automata}
A \emph{weighted finite automaton} (WFA) $\A$ is a
quintuple $(Q,\Sigma,I,F,\delta)$. Here $Q$ is the state set,
$\Sigma$ a finite alphabet, $I:Q\rightarrow\mathbb{R}$ and
$F:Q\rightarrow\mathbb{R}$ are the \emph{initial} and \emph{final
distributions}, respectively, and $\delta:Q\times\Sigma\times
Q\rightarrow \mathbb{R}$ is the \emph{weight function}. If
$\delta(p,a,q)\neq 0$ for $a\in\Sigma$, $p,q\in Q$, we say that there
is a transition from $p$ to $q$ labelled by $a$ of weight $\delta(p,a,q)$.
We denote the cardinality of the state set by $|Q|=n$.
Note that we allow $Q$ to be empty.

A more convenient representation of $\A$ is by vectors
$I\in\mathbb{R}^{1\times n}$, $F\in\mathbb{R}^{n\times 1}$ and a
collection of \emph{weight matrices} $A_a\in \ds\mathcal{M}_{n\times
n}(\mathbb{R})$ defined by
\[\forall a\in\Sigma,\, \forall i,j\in Q:\; (A_a)_{ij}= \delta(i,a,j).\]
A WFA $\A$ defines the \emph{word function} $\Fa:\Ss\rightarrow \mathbb{R}$
by
\[
\Fa (v)=I A_v F.
\]

We denote by $\A_v$ the automaton which we get from the WFA $\A$ by
substituting $IA_v$ for the original initial distribution $I$.
Obviously, $\Fav(w)=\Fa(vw)$ for all $w\in\Ss$.

\begin{remark}
Notice that for any $n\times n$ regular matrix $M$, we can take a new automaton with
distributions $IM,M^{-1}F$ and the set of weight matrices $\{M^{-1}A_aM\,|\,a\in\Sigma\}$
without affecting the computed word function. We will call this operation
changing the basis.

This means that whenever $I,F\neq 0$, we can change either $I$ or $F$ to any
nonzero vector of our choice by switching to a different basis.
\end{remark}

Given a word function $F$, we can define \emph{$\omega$-function} $f$
on infinite words. For $w\in\So$, we let
\begin{equation} \label{limit}
f(w)=\lim_{k\rightarrow\infty}F(pref_k(w)),
\end{equation}
if the limit exists. If the limit does not exist then $f(w)$ remains undefined.
In the following, we will use this construction to define $\omega$-function
$\fa$ using $\Fa$ for some $\A$ weighted finite automaton.

As usual, the $\omega$-function $f:\So\rightarrow\mathbb{R}$ is
\emph{continuous} at $w\in\So$
if for every positive real number $\varepsilon$ there exists a
positive real number $\delta$ such that all $w'$ in $\So$
such that $d_C(w,w') < \delta$ satisfy $d_E(f(w),f(w')) < \varepsilon$.
In particular, if $f$ is continuous at $w$ then
$f$ must be defined in some neighborhood of $w$. We say that $f$ is continuous
if it is continuous at every $w\in\So$.

Thorough the paper, we will be mostly talking about the case when
the convergence in the limit (\ref{limit}) is uniform:

\begin{definition}\label{uc}
We say that a word function $F$ is \emph{uniformly convergent} if
for every $\epsilon>0$, there exists a $k_0$ such that for all $w\in\So$ and
all $k>k_0$ we have
\[
|F(pref_k(w))-f(w)|<\epsilon.
\]
A WFA $\A$ is \emph{uniformly convergent}
if $\F$ is uniformly convergent.
\end{definition}
\begin{lemma}\label{remUCC}
If a word function $F$ is uniformly convergent then the corresponding  $\omega$-function $f$ is defined and
continuous in the whole $\So$.
\end{lemma}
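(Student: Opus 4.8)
The plan is to prove the statement in two parts: first that the limit defining $f(w)$ exists for every $w \in \So$, and second that $f$ is continuous on all of $\So$. Both parts follow from the uniform convergence hypothesis by routine $\epsilon/3$-style arguments, so the proof should be short.

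For the existence of $f(w)$: fix $w \in \So$ and $\epsilon > 0$, and let $k_0$ be as in Definition~\ref{uc}, applied with $\epsilon/2$ in place of $\epsilon$. The point is that the hypothesis already asserts $f(w)$ exists (the definition of uniform convergence in Definition~\ref{uc} quantifies over ``all $w\in\So$'' and references $f(w)$), but to be safe one shows directly that $(F(pref_k(w)))_k$ is Cauchy: for $k, l > k_0$ one would need a uniform bound on $|F(pref_k(w)) - F(pref_l(w))|$. Since Definition~\ref{uc} is phrased in terms of $f(w)$ rather than a Cauchy condition, I would instead just observe that the tail of the sequence $F(pref_k(w))$ is trapped within $\epsilon$ of the fixed value $f(w)$ for $k > k_0$, which is precisely the statement that the limit equals $f(w)$; so existence is immediate from the definition.

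For continuity: fix $w \in \So$ and $\epsilon > 0$. Choose $k_0$ from Definition~\ref{uc} for the value $\epsilon/3$. Set $\delta = 2^{-k_0}$. If $d_C(w, w') < \delta$, then $w$ and $w'$ share a common prefix of length at least $k_0 + 1$; in particular, picking any $k$ with $k_0 < k$ and $k$ not exceeding the length of the common prefix, we have $pref_k(w) = pref_k(w')$. Then
\[
|f(w) - f(w')| \le |f(w) - F(pref_k(w))| + |F(pref_k(w)) - F(pref_k(w'))| + |F(pref_k(w')) - f(w')| < \tfrac{\epsilon}{3} + 0 + \tfrac{\epsilon}{3} < \epsilon,
\]
using the uniform convergence bound on the first and third terms and the equality of prefixes on the middle term. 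This gives continuity at $w$, and since $w$ was arbitrary, $f$ is continuous on $\So$.

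I do not expect any serious obstacle here; the statement is essentially the observation that a uniform limit of continuous functions is continuous, specialized to the fact that each $w \mapsto F(pref_k(w))$ is locally constant (hence continuous) on $\So$. The only point requiring a little care is making sure the common prefix of $w$ and $w'$ is genuinely long enough — i.e. getting the relationship between $\delta = 2^{-k_0}$ and the length of the shared prefix right, so that one can legitimately choose an index $k > k_0$ at which the two prefixes agree.
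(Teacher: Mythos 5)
Your proposal is correct and follows essentially the same route as the paper: existence of $f(w)$ is read off directly from the definition of uniform convergence, and continuity is the fact that a uniform limit of the continuous (locally constant) functions $w\mapsto F(pref_k(w))$ is continuous — the paper simply cites this general principle, while you unfold it into the explicit $\epsilon/3$ triangle-inequality argument. No gaps; the bookkeeping relating $\delta=2^{-k_0}$ to the length of the common prefix is handled correctly.
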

\begin{proof}
From the definition of uniform convergence we obtain that
 $f(w)$ must exist for every $w\in\So$.
Continuity follows from the fact that $f$ is
the uniform limit of continuous functions $f_k$ defined as
$f_k(w)=F(pref_k(w))$ for all $w\in\So$.
\end{proof}

The following Lemma gives another formulation of the uniform convergence.

\begin{lemma}\label{lemStrongerUC}
The function $F$ is uniformly convergent iff for each $\epsilon>0$ there exists
$m$ such that for all $w\in\So$ and $v\in\Ss$ such that
$pref_m(v)=pref_m(w)$ we have 
\[
|F(v)-f(w)|<\epsilon.
\]
\end{lemma}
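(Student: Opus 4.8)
The plan is to prove the two implications separately. The implication from the displayed ``prefix form'' back to uniform convergence is almost immediate, so I would dispose of it first: given $\epsilon>0$, take the $m$ supplied by the prefix condition and set $k_0=m$; then for every $w\in\So$ and every $k>k_0$ the finite word $v=pref_k(w)$ satisfies $pref_m(v)=pref_m(w)$, so the hypothesis yields $|F(pref_k(w))-f(w)|<\epsilon$, which is exactly Definition~\ref{uc}. Note that the prefix condition tacitly requires $f(w)$ to be defined for every $w\in\So$ (apply it with $v=pref_m(w)$), so totality of $f$ is part of that hypothesis; if $f$ is not total then neither condition holds and there is nothing to prove.

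For the forward direction, assume $F$ is uniformly convergent, so by Lemma~\ref{remUCC} the function $f$ is total (and continuous, though continuity will not be used explicitly). Fix $\epsilon>0$ and use uniform convergence to choose $k_0$ with
\[
|F(pref_k(u))-f(u)|<\tfrac{\epsilon}{3}\qquad\text{for all }u\in\So\text{ and all }k>k_0,
\]
and set $m=k_0+1$. Let $w\in\So$ and $v\in\Ss$ with $pref_m(v)=pref_m(w)$; in particular $|v|\ge m>k_0$. Since $\Sigma\neq\emptyset$ we may choose an infinite word $w'$ with $pref_{|v|}(w')=v$, so that $F(v)=F(pref_{|v|}(w'))$ and $pref_m(w')=pref_m(v)=pref_m(w)$. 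The heart of the argument is then the three-term estimate
\[
|F(v)-f(w)|\le |F(pref_{|v|}(w'))-f(w')|+|f(w')-F(pref_m(w'))|+|F(pref_m(w))-f(w)|,
\]
obtained from the triangle inequality together with the cancellation $F(pref_m(w'))=F(pref_m(w))$. The first term is $<\epsilon/3$ because $|v|>k_0$, and the second and third are $<\epsilon/3$ because $m>k_0$; hence $|F(v)-f(w)|<\epsilon$, which is the prefix condition.

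The computation is routine once the right decomposition is in place; the step I expect to be the crux is the idea of replacing the arbitrary finite word $v$ by an infinite extension $w'$ and then bridging $f(w')$ to $f(w)$ through their common length-$m$ prefix, exploiting that $m$ is chosen to exceed the convergence threshold $k_0$ so that $F(pref_m(\cdot))$ is already $\epsilon/3$-close to the limit at both $w$ and $w'$. The minor bookkeeping point worth stating carefully is that $pref_m(v)=pref_m(w)$ forces $|v|\ge m$, which is what allows the first term above to be controlled.
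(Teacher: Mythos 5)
Your proof is correct, and the forward direction takes a genuinely different route from the paper's. Both arguments extend the finite word $v$ to an infinite word ($w'$ in your notation, $z$ in the paper's) and split off the term $|F(v)-f(w')|$, which uniform convergence controls because $|v|>k_0$. The difference is in how the remaining gap $|f(w')-f(w)|$ is closed. The paper invokes Lemma~\ref{remUCC} to get continuity of $f$, then compactness of $\So$ to upgrade this to uniform continuity, and chooses $m$ larger than both $k_0$ and the uniform-continuity threshold $l$; the common prefix of length $m$ then forces $|f(w')-f(w)|<\epsilon/2$ directly. You instead bridge $f(w')$ to $f(w)$ through the single value $F(pref_m(w'))=F(pref_m(w))$ at the shared length-$m$ prefix, paying two more applications of uniform convergence (hence the $\epsilon/3$ split versus the paper's $\epsilon/2$). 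Your version is more self-contained: it needs only the totality of $f$ and the definition of uniform convergence, bypassing both the continuity lemma and the compactness argument, and would work verbatim over a non-compact index space. The paper's version is slightly shorter on the page because it leans on machinery (Lemma~\ref{remUCC}, uniform continuity on a compact space) already established and reused elsewhere. Your bookkeeping remarks --- that $pref_m(v)=pref_m(w)$ forces $|v|\ge m$, and that the right-hand condition tacitly asserts totality of $f$ --- are both correct and worth making explicit.
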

\begin{proof}
Obviously, if $F$ satisfies the condition on the right then
letting $k_0=m$ and $v=pref_k(w)$ for $k>k_0$ yields that $F$ is uniformly convergent.

For the converse, assume $\epsilon>0$ is given. We need to
 find $m$ with the required properties.

If $F$ is uniformly convergent then $f$ is
continuous by Lemma~\ref{remUCC} and from the compactness of $\So$ we obtain
uniform continuity of $f$. So there exists $l$ such that $pref_l(w)=pref_l(z)$ implies
$|f(w)-f(z)|<\epsilon/2$ for $w,z\in \So$.  Let now $k_0$ be such that
$|F(pref_k(z))-f(z)|<\epsilon/2$ for all $z$ and all $k>k_0$. Choose
$m>k_0,\,l$. Given $v\in\Ss, w\in\So$ with $pref_m(v)=pref_m(w)$, choose
$z\in\So$ such that $v$ is a prefix of $z$. Then we can write:
\[
|F(v)-f(w)|\leq
|F(v)-f(z)|+|f(z)-f(w)|<\frac12\epsilon+\frac12\epsilon=\epsilon,
\]
concluding the proof.
\end{proof}

In contrast to Lemma~\ref{uniform},
the following example shows that convergence to zero everywhere does not guarantee that a WFA
converges uniformly.

\begin{example}\label{universal-counterexample}
Consider the automaton $\A$ on the alphabet $\Sigma=\{0,1\}$ described by Figure~\ref{u-c-figure}.
\picture{The automaton from Example~\ref{universal-counterexample}}{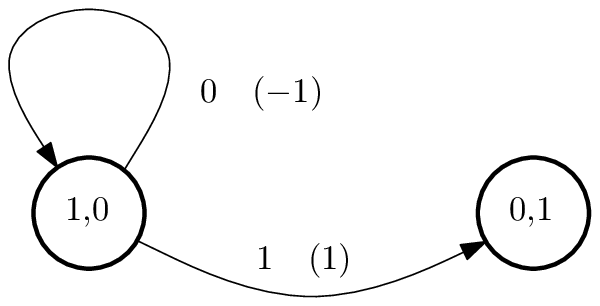}{u-c-figure}
In the figure, the two numbers inside each state denote the initial and final distribution, respectively,
while the numbers next to the arrow express the label and weight of the transition
(weight is in parentheses).
The matrix presentation of this automaton is
$$
I=
\begin{pmatrix}
1 & 0
\end{pmatrix}
\quad
A_0=
\begin{pmatrix}
-1 & 0\\ 0 & 0
\end{pmatrix}
\quad
A_1=
\begin{pmatrix}
0 & 1\\ 0 & 0
\end{pmatrix}
\quad
F=
\begin{pmatrix}
0 \\ 1
\end{pmatrix}
$$
This automaton computes the word function
\[
\F(v)=\Big\{
\begin{array}{ll}
    (-1)^n & v\in 0^n1 \\
    0 & \textrm{otherwise},\\
\end{array}
\]
so it defines the zero $\omega$-function. However, the convergence is not uniform at
the point $w=0^{\omega}$.
\end{example}

\begin{definition} A WFA $\A$ with $n$ states is said to be \emph{left minimal} if
\begin{equation}\label{leftmin}
\dim\langle IA_u,\; u\in\Ss\rangle=n.
\end{equation}
Similarly, $\A$ is called \emph{right minimal} if
\begin{equation}\label{rightmin}
\dim\langle A_uF,\; u\in\Ss\rangle=n.
\end{equation}
If $\A$ is both left and right minimal, we call it \emph{minimal}.
\end{definition}
In other words, $\A$ is minimal when each of its distributions
generates the space $\mathbb{R}^n$. Moreover, $\A$ is minimal
according to our definition iff it is also minimal in the sense that
no other WFA with fewer states than $n$ can compute
the same word function $\Fa$ (see \cite{Droste}, Proposition 3.1).
Observe that minimality is clearly invariant under the change of basis.

\begin{lemma}\label{find-minimal}
Given a WFA $\A$, we can effectively find WFA $\A'$ such that $\A'$
is minimal and $\Fap=\Fa$.
\end{lemma}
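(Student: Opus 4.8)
The plan is to reduce an arbitrary WFA $\A$ to a minimal one in two stages: first make it left minimal, then make the resulting automaton right minimal, checking that the second stage does not destroy left minimality. Each stage amounts to restricting the automaton to an invariant subspace of $\mathbb R^n$, and the key point is that the subspaces in question can be computed effectively.

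For the first stage, let $U=\langle IA_u : u\in\Ss\rangle\subseteq\mathbb R^{1\times n}$ be the row space generated by the left behaviour vectors. This subspace is right-invariant under every $A_a$ (since $IA_uA_a=IA_{ua}$), and it can be computed by a standard fixed-point iteration: start with $\langle I\rangle$ and keep adjoining $xA_a$ for generators $x$ and letters $a\in\Sigma$ until the dimension stabilises; this halts after at most $n$ rounds because the dimension is bounded by $n$ and strictly increases at each non-final round. Now change basis so that the first $d=\dim U$ coordinates span $U$; in this basis each $A_a$ has block-upper-triangular form $\left(\begin{smallmatrix}A_a' & * \\ 0 & *\end{smallmatrix}\right)$ with $A_a'$ a $d\times d$ block, $I$ has support in the first $d$ coordinates, and $IA_v=(I'A_v',\,0)$ for the truncated vectors. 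Discarding the last $n-d$ coordinates yields a WFA $\A_1$ with $d$ states, weight matrices $A_a'$, initial vector $I'$, and final vector $F'$ the truncation of $F$; one checks $\Fa_1(v)=I'A_v'F'=IA_vF=\Fa(v)$ for all $v$, and $\A_1$ is left minimal by construction.

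For the second stage, apply the symmetric construction to $\A_1$: compute $W=\langle A_uF : u\in\Ss\rangle$, a left-invariant column subspace, again by a terminating fixed-point iteration, and change basis so $W$ is spanned by the first $e=\dim W$ coordinates; the weight matrices become block-lower-triangular, $F$ lands in the first $e$ coordinates, and restricting to those coordinates gives $\A'$ with $e$ states computing the same word function and right minimal. The one thing that needs checking — and this is the only nonroutine point — is that this restriction preserves left minimality of $\A_1$. This holds because the left behaviour vectors of $\A'$ are exactly the projections onto $W$ of the left behaviour vectors of $\A_1$: writing $A_uF$ in block form shows the column restriction is precisely projection along a complement, and since the $IA_u$ already spanned all of $\mathbb R^{1\times e+\text{rest}}$ in $\A_1$, their images span the coordinate block corresponding to $\A'$. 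Hence $\A'$ is minimal and $\Fap=\Fa_1=\Fa$.

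The main obstacle is purely bookkeeping: one must verify that the second quotient does not undo the first. This is handled by the observation above that passing to $W$ is a coordinate projection that acts compatibly with the left-invariant decomposition, so the spanning property of the left behaviour vectors survives. Everything else — termination of the subspace computations, block-triangularisation via change of basis, and invariance of the computed word function — is standard linear algebra over our effective field of matrix entries, so the whole procedure is algorithmic.
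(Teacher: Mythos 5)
The paper does not actually prove this lemma: it simply refers the reader to Proposition~3.1 of the cited reference (Droste et al.), so there is no in-paper argument to compare against. Your proposal supplies the standard two-stage Sch\"utzenberger-style reduction (first restrict to the reachability space $\langle IA_u : u\in\Ss\rangle$, then to the observability space $\langle A_uF : u\in\Ss\rangle$, each computed by a terminating fixed-point iteration), and this is correct in substance and is essentially the argument behind the cited result; your identification of the one nonroutine point --- that the second restriction preserves left minimality because the left behaviour vectors of $\A'$ are coordinate projections of those of $\A_1$ --- is exactly right. One bookkeeping slip: you have the two block-triangular shapes transposed. A right-invariant \emph{row} space spanned by the first $d$ coordinates forces $A_a=\left(\begin{smallmatrix}A_a' & 0\\ \ast & \ast\end{smallmatrix}\right)$ (zero block in the upper right), which is what makes $IA_v=(I'A_v',0)$ true; the form $\left(\begin{smallmatrix}A_a' & \ast\\ 0 & \ast\end{smallmatrix}\right)$ you wrote would give $IA_a=(I'A_a',\,I'\!\ast)$ and the truncation claim would fail. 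Symmetrically, the left-invariant \emph{column} space in stage two yields the block upper-triangular form, not lower-triangular. With the blocks placed correctly every verification you sketch goes through, so this is a fixable notational error rather than a gap.
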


For proof, see \cite{Droste}, Proposition 3.1. Also, if the transition matrices of $\A$ had
rational entries then we can choose $\A'$ so that its transition matrices have
rational entries. In the following, we will often assume that $\A$
is minimal.

\begin{definition} A function $F:\Sigma^*\to\mathbb R$ is \emph{average preserving} (ap), if for all $v\in\Sigma^*$,
 \[\sum _{a\in\Sigma}F(va)=k F(v),\textrm{ where }k=|\Sigma|.\]
The WFA $\A$ with the final distribution $F$ and weight matrices $A_a$
is called \emph{average preserving} (ap), if
\[\sum _{a\in\Sigma} A_a F = k F,\textrm{ where }k=|\Sigma|.\]
\end{definition}

Every ap WFA defines an average preserving word function and
every average preserving word function definable by some WFA can be
defined by an ap WFA (see \cite[pages 306, 310]{Kari}). In fact, any minimal
WFA computing an ap word function must be ap. Notice also that neither a change of
basis nor minimizing (as in Lemma~\ref{find-minimal}) destroys the ap property
of an ap automaton.

\begin{lemma}
\label{zero-ap-lemma}
The only ap word function defining the zero $\omega$-function
is the zero function.\label{ap-zero}
\end{lemma}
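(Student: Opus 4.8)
The plan is to argue by contradiction. Suppose $F:\Sigma^*\to\mathbb R$ is an average preserving word function whose $\omega$-function $f$ is the (everywhere defined) zero function, but $F$ is not identically zero; pick $v_0\in\Ss$ with $F(v_0)\neq 0$. Since $-F$ is again average preserving and its $\omega$-function is again the zero function, we may replace $F$ by $-F$ if necessary and assume $F(v_0)>0$.

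The key point is simply to read the defining identity $\sum_{a\in\Sigma}F(va)=|\Sigma|\cdot F(v)$ as the statement that $F(v)$ equals the arithmetic mean of the numbers $F(va)$, $a\in\Sigma$. Consequently $\max_{a\in\Sigma}F(va)\geq F(v)$ for every word $v$. Starting from $v_0$ and applying this repeatedly, I would choose letters $a_1,a_2,\dots$ so that, writing $v_i=v_0a_1\cdots a_i$, we have $F(v_{i+1})\geq F(v_i)$ for every $i\geq 0$. Let $w=v_0a_1a_2\cdots\in\So$ be the resulting infinite word. By construction the sequence $\bigl(F(pref_k(w))\bigr)_{k\geq|v_0|}$ is nondecreasing and bounded below by $F(v_0)>0$.

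This yields the contradiction: $f(w)=\lim_{k\to\infty}F(pref_k(w))$ cannot equal $0$, because the terms stay at least $F(v_0)>0$. Either the limit exists and is $\geq F(v_0)>0$, or it does not exist and $f(w)$ is undefined; in both cases $f$ fails to be the zero $\omega$-function (which in particular must be total). Hence no such $v_0$ exists and $F\equiv 0$.

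I do not expect a genuine obstacle here; the proof is short. The only points requiring a little care are the sign normalisation at the start (so that one argument covers both $F(v_0)>0$ and $F(v_0)<0$), and phrasing the final contradiction so that it also handles the case in which the monotone sequence $F(pref_k(w))$ is unbounded and hence $f(w)$ is undefined — which already contradicts the hypothesis that $f$ is the zero $\omega$-function.
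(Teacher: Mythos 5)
Your proof is correct and follows essentially the same route as the paper: use the averaging identity to pick, letter by letter, an infinite word along which the values of $F$ stay at least $F(v_0)>0$, contradicting the hypothesis that the $\omega$-function is identically zero. Your handling of the case where the limit fails to exist is slightly more careful than the paper's wording, but the argument is the same.
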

\begin{proof}
Assume  that for every $w\in\So$ we have $f(w)=0$,
yet (without loss of generality) $F(v)=s>0$ for some word $v\in\Ss$. By the ap property of $F$, we have:
\[
\frac{1}{|\Sigma|}\sum_{a\in\Sigma}F(va)=F(v).
\]
This means that $\max\{F(va)\,|\,a\in\Sigma\}\geq F(v)$ and so $F(va)\geq s$ for
some $a$. Repeating this argument, we obtain $w\in\So$ such that
$\fa(vw)\geq s>0$, a contradiction.
\end{proof}

\begin{corollary}\label{there-can-be-only-one}
Let $F,G$ be two ap functions defining the same $\omega$-function $f$ and suppose that
$f(w)$ is defined for every $w\in \So$. Then $F=G$.
\end{corollary}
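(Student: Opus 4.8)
The plan is to reduce the statement to Lemma~\ref{zero-ap-lemma} by looking at the difference $H := F - G$. First I would observe that the class of ap word functions is closed under linear combinations: the defining identity $\sum_{a\in\Sigma}F(va)=kF(v)$ is linear in $F$, so $\sum_{a\in\Sigma}H(va)=\sum_{a\in\Sigma}F(va)-\sum_{a\in\Sigma}G(va)=kF(v)-kG(v)=kH(v)$, and hence $H$ is ap.

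Next I would identify the $\omega$-function of $H$. Fix $w\in\So$. By hypothesis $f(w)$ is defined, which means both limits $\lim_{k\to\infty}F(pref_k(w))$ and $\lim_{k\to\infty}G(pref_k(w))$ exist and equal $f(w)$. Therefore
\[
\lim_{k\to\infty}H(pref_k(w))=\lim_{k\to\infty}F(pref_k(w))-\lim_{k\to\infty}G(pref_k(w))=f(w)-f(w)=0.
\]
Thus the $\omega$-function of $H$ is the zero function, and it is defined on all of $\So$ — which is exactly the situation assumed in Lemma~\ref{zero-ap-lemma}.

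Finally, applying Lemma~\ref{zero-ap-lemma} to the ap word function $H$ yields $H\equiv 0$, i.e.\ $F=G$. There is no real obstacle here; the only point requiring care is that Lemma~\ref{zero-ap-lemma} needs the zero $\omega$-function to be defined everywhere, and this is guaranteed precisely by the hypothesis that $f(w)$ is defined for every $w\in\So$. (Without that hypothesis one could only conclude that $F$ and $G$ agree on the prefixes of words where the limit exists, which need not force $F=G$.)
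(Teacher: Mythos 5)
Your proof is correct and takes exactly the same route as the paper: the paper's one-line argument also considers the difference $F-G$, notes that it is ap and defines the everywhere-defined zero $\omega$-function, and applies Lemma~\ref{zero-ap-lemma}. You simply spell out the linearity of the ap condition and the limit computation that the paper leaves implicit.
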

\begin{proof}
As the function $F-G$ is ap and defines the zero $\omega$-function, 
Lemma~\ref{ap-zero} gives us that $F-G=0$ and so $F=G$.
\end{proof}

\begin{example}
As the only ap word function defining $\fa=0$ is the zero function (Lemma
\ref{ap-zero}), it is easy to decide if a given minimal ap WFA $\A$ computes
$\fa=0$. Such automaton is the unique zero state WFA (whose $I,F$ are zero
vectors from the space ${\mathbb R}^0$).
However, in the non-ap case, we might encounter automata such as in Figure
\ref{non-ap}: The automaton
$\A$ has $I=F=(1)$ and $A_0=A_1=\begin{pmatrix}\frac12\end{pmatrix}$. Obviously, $\A$ is
minimal (but not ap) and computes the word function $\F(v)=1/{2^{|v|}}$
and the $\omega$-function $\fa=0$.

\picture{Non ap automaton defining the zero $\omega$-function}{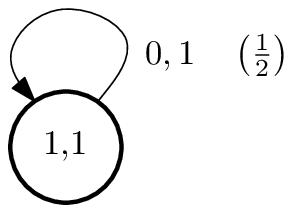}{non-ap}

\end{example}
\section{Properties of $\omega$-functions}
\label{omegasection}
In this section, we will study the $\omega$-function $\fa$, where $\A$ is an
automaton operating on the alphabet $\Sigma$. We will put the emphasis on
$\omega$-functions defined using ap word functions, as these
have numerous useful traits.

\subsection{Average preserving word functions}
We begin by showing that any automaton computing a continuous $\omega$-function can be
modified to be ap and still compute the same $\omega$-function. Hence we do not miss any
WFA definable continuous functions if we restrict the attention to ap WFA.

Actually, the following theorem is
even more general and allows for some ``defects'' of continuity: we only expect $\fa$
to be uniformly continuous on a certain dense set $\Delta\subseteq\So$ (as usual, this means
that the function $\fa$ may even be undefined outside $\Delta$).

We will need the more general formulation later in Section~\ref{realfunctionsection} for
Corollary~\ref{realuniformcorollary}.

\begin{theorem}\label{redistribution} Let $\A$ be a WFA and $w\in\So$.
Suppose that $\fa$ is
continuous on the set $\Delta=\Ss w$, and suppose that there exists a continuous
$g:\So\to \mathbb{R}$ such that ${\fa}_{|\Delta}=g_{|\Delta}$.
Then there is an average preserving WFA $\B$ such that $\fb=g$.
Moreover, if $\A$ is left-minimal then $\B$ can be obtained from $\A$ by changing the final distribution.
\end{theorem}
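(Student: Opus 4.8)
The plan is to extract from the continuous function $g$ a canonical average preserving word function $G\colon\Ss\to\mathbb R$ whose $\omega$-function is $g$, and then to show that, when $\A$ is left-minimal, this $G$ is already computed by the automaton obtained from $\A$ by keeping $I$ and the weight matrices $A_a$ and only changing the final distribution; the general case will follow by first minimizing $\A$ via Lemma~\ref{find-minimal}. To build $G$, note that $g$ is uniformly continuous since $\So$ is compact. Writing $k=|\Sigma|$, put $a_m(v)=k^{-m}\sum_{|z|=m}g(vzw)$ for $v\in\Ss$. For $m'\ge m$, grouping the length-$m'$ words by their length-$m$ prefixes and using that any two infinite words sharing a prefix of length $|v|+m$ are at distance at most $2^{-|v|-m}$, uniform continuity of $g$ shows that $\sup_{m'\ge m}|a_{m'}(v)-a_m(v)|\to 0$ as $m\to\infty$; so $(a_m(v))_m$ is Cauchy and we set $G(v)=\lim_m a_m(v)$. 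Reindexing the defining averages immediately gives $k^{-1}\sum_{a\in\Sigma}G(va)=G(v)$, so $G$ is average preserving, and continuity of $g$ at each $x\in\So$ gives $G(pref_l(x))\to g(x)$, so the $\omega$-function of $G$ is defined everywhere and equals $g$.

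Now assume $\A$ is left-minimal and choose $u_1,\dots,u_n\in\Ss$ so that $IA_{u_1},\dots,IA_{u_n}$ is a basis of $\mathbb R^{1\times n}$. The heart of the proof is that $G$ satisfies every linear relation satisfied by $\Fa$. Fix $v\in\Ss$ and write $IA_v=\sum_{i}c_iIA_{u_i}$ (possible since the $IA_{u_i}$ are a basis). Right-multiplying this equality of row vectors by the column vector $A_{z\,pref_l(w)}F$ gives $\Fa(vz\,pref_l(w))=\sum_i c_i\Fa(u_iz\,pref_l(w))$ for every $z\in\Ss$ and every $l\ge 0$. Since the infinite words $vzw$ and $u_izw$ all lie in $\Delta=\Ss w$, letting $l\to\infty$ and using the hypothesis $\fa|_\Delta=g|_\Delta$ turns this into $g(vzw)=\sum_i c_i g(u_izw)$ for all $z\in\Ss$; averaging over $|z|=m$ and letting $m\to\infty$ then yields $G(v)=\sum_i c_iG(u_i)$. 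Let $F'\in\mathbb R^{n\times 1}$ be the unique vector with $IA_{u_i}F'=G(u_i)$ for all $i$. For arbitrary $v$ the identity just proved gives $IA_vF'=\sum_i c_iIA_{u_i}F'=\sum_i c_iG(u_i)=G(v)$, so the WFA $\B$ with initial distribution $I$, weight matrices $A_a$ and final distribution $F'$ satisfies $\FB=G$, hence $\fb=g$. Finally, $\B$ is average preserving: since $G$ is, we get $IA_v(\sum_a A_aF'-kF')=0$ for all $v\in\Ss$, and left-minimality forces $\sum_a A_aF'=kF'$.

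For a general $\A$, Lemma~\ref{find-minimal} produces a minimal (hence left-minimal) $\A'$ with $\Fap=\Fa$, so that $\fap=\fa$ and the hypotheses still hold with $\A'$ in place of $\A$; applying the left-minimal case to $\A'$ gives the desired average preserving $\B$. The one genuine obstacle is the transfer-of-relations step: it is what forces the a priori purely combinatorial object $G$ to be realizable by the original matrices $A_a$, and it is exactly there that the assumption that $\fa$ agrees with a continuous function on the dense set $\Delta$ enters, through the limit $l\to\infty$ along prefixes of words of $\Delta$. The two nested limits ($l\to\infty$ then $m\to\infty$) require a little care but are harmless, each stage being an honest limit of a bounded quantity.
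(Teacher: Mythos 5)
Your proof is correct, and while it rests on the same central device as the paper's --- averaging $g$ over all continuations $vzw$ with $|z|=m$ and letting $m\to\infty$ to manufacture the average preserving object --- it is organized along a genuinely different route. The paper stays inside the automaton throughout: it sets $G=\lim_k A_{pref_k(w)}F$, defines the new final distribution by the explicit limit $F'=\lim_i\bigl(\tfrac{1}{L}\sum_{a}A_a\bigr)^iG$ with $L=|\Sigma|$, and then verifies the ap identity and $\fb=g$ by direct manipulation of nested limits. You instead first build the target word function $G(v)=\lim_m |\Sigma|^{-m}\sum_{|z|=m}g(vzw)$ purely from $g$, check at the level of word functions that it is ap and has $\omega$-function $g$, and only afterwards realize it with the given matrices: your transfer-of-relations step shows that $G$ respects every linear dependency among the row vectors $IA_v$, so $F'$ can be read off by solving the system $IA_{u_i}F'=G(u_i)$ against a basis. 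The two constructions produce the same $F'$, since $IA_vF'_{\mathrm{paper}}=\lim_i L^{-i}\sum_{|u|=i}g(vuw)=G(v)$, but they buy different things: your version isolates exactly where left-minimality is used (realizability of $G$ and the ap identity for $\B$) and makes the canonicity of the ap word function transparent, whereas the paper's gives a closed-form limit formula for $F'$ and establishes the ap property of $\B$ without invoking minimality at that point. All the delicate steps in your argument --- the Cauchy estimate from uniform continuity of $g$ on the compact space $\So$, the interchange of the finite sum with the limit $l\to\infty$ along prefixes of words in $\Delta$, and the reduction of the general case by minimization --- are sound.
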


\begin{remark}
The condition that
``There exists a continuous $g:\So\to{\mathbb R}$ such that
${\fa}_{|\Delta}=g_{|\Delta}$.'' is equivalent with demanding that
${\fa}_{|\Delta}$ be uniformly continuous. Moreover, if such a $g$ exists then
it is unique because $\Delta$ is dense in $\So$.
\end{remark}

\begin{proof}
Using Lemma~\ref{find-minimal}, we can assume
that the automaton $\A=(I,\{A_a\,|\,a\in A\},F)$ is a left-minimal WFA. Denote by $\A_i$
the  WFA obtained from $\A$ by replacing $I$ by $I_i$, the $i$-th
element of the canonical basis of $\mathbb{R}^n$. Let us first
prove that each $\fai$ is uniformly continuous on $\Delta$. From
left-minimality of $\A$ we obtain that there are words
$u_1,\dots,u_n\in\Ss$ and coefficients
$\alpha_1,\dots,\alpha_n\in\mathbb{R}$ such that
$\Fai(v)=\sum_{j=1}^n\alpha_j\Fa(u_jv)$ for all $v\in\Ss$. This
implies that $\fai(v)=\sum_{j=1}^n\alpha_j\fa(u_jv)$ for all
$v\in\Delta$. But then $\fai$ is uniformly continuous on $\Delta$
as a linear combination of uniformly continuous functions
$v\mapsto \fa(u_jv)$.

Recall that $w$ is the infinite word such that $\Delta=\Ss w$. Observe
that the limit $G=\ds\lim_{k\rightarrow\infty}(A_{pref_k(w)}F)$
exists, as we have a simple
formula for the $i$-th component of $G$:
\[G_i=\lim_{k\rightarrow\infty}(I_iA_{pref_k(w)}F)=\fai(w).\]

Denote $L=|\Sigma|$ and let $B=(I,\{A_a\},F')$ be the WFA with the modified final distribution
\begin{equation}\label{fpilkku}
F'=\lim_{i\rightarrow\infty}\left(\frac{\sum_{a\in\Sigma}A_a}L\right)^iG
=\lim_{i\to\infty}\frac{1}{L^i}\sum_{|u|=i}\lim_{k\rightarrow\infty}
A_{pref_k(uw)}F.
\end{equation}
First we show that the limit (\ref{fpilkku}) exists.
The $j$-th coordinate of the $i$-th vector in the sequence
has the following presentation:
\[\phi^{(j)}_i =\frac{1}{L^i}\sum_{|u|=i}\lim_{k\rightarrow\infty}(I_jA_{pref_k(uw)}F)=\frac{1}{L^i}\sum_{|u|=i}\faj (uw).\]
To show that $\ds\lim_{i\rightarrow\infty}\phi^{(j)}_i$ exists, it suffices to show that
$\{\phi^{(j)}_i\}_{i=1}^{\infty}$ is a Cauchy sequence as $\mathbb{R}$ is a
complete metric space.

Let $\varepsilon >0$. As $\faj$ is uniformly continuous on $\Delta$, there is a $k_{\varepsilon}$ such that
$pref_{k_{\varepsilon}}(z)=pref_{k_{\varepsilon}}(z')$ implies
$|\faj(z)-\faj(z')|<\varepsilon$ for any $z,z'\in\Delta$.
Let $\phi^{(j)}_s$ and $\phi^{(j)}_{s+t}$ be two elements of the sequence with
$s\geq k_\epsilon$, $t\in\mathbb{N}$.
Then
\begin{align*}
\left|\phi^{(j)}_{s+t}-\phi^{(j)}_s\right| & = \left|
\ds\frac{1}{L^{s+t}}\ds\sum_{|u|=s+t}\faj (uw)-\frac{1}{L^s}\ds\sum_{|u|=s}\faj
(uw) \right|\\
& = \ds\frac{1}{L^{s+t}}\left|\ds\sum_{|u|=s}\left(\ds\sum_{|v|=t}\faj
(uvw)-L^t\faj(uw)\right) \right|\\
& \leq \ds\frac{1}{L^{s+t}}\ds\sum_{|u|=s}\ds\sum_{|v|=t}\left|\faj
(uvw)-\faj(uw)\right| \\
& < \ds\frac{1}{L^{s+t}}L^{s+t}\varepsilon  = \varepsilon.
\end{align*}

We see that the vector sequence $\{\phi_i\}_{i=1}^\infty$ converges
element-wise, and hence the limit (\ref{fpilkku}) exists.

It remains to show that $\B$ is average preserving and verify the equality
$\fb=g$. To prove the ap property of $\B$, we compute the product
\begin{align*}
\left(\sum_{a\in\Sigma}A_a\right)F' & = \left(\sum_{a\in\Sigma}A_a\right)
\lim_{i\rightarrow\infty}\left(\frac{\sum_{a\in\Sigma}A_a}L\right)^iG\\
& =  L\lim_{i\rightarrow\infty}\frac{\sum_{a\in\Sigma}A_a}L\left(\frac{\sum_{a\in\Sigma}A_a}L\right)^{i}G\\
& =  L\lim_{i\rightarrow\infty}\left(\frac{\sum_{a\in\Sigma}A_a}L\right)^{i+1}G\\
& =  LF'.
\end{align*}

To show that $\fb=g$, let $v\in\So$ be an arbitrary word. Then
\begin{align*}
 \fb(v) & = \lim_{j\rightarrow\infty}IA_{pref_j(v)}F' \\
 & =
 \lim_{j\rightarrow\infty}IA_{pref_j(v)}\lim_{i\rightarrow\infty}\frac{1}{L^i}\sum_{|u|=i}\lim_{k\rightarrow\infty} A_{pref_k(uw)}F \\
 & =  \lim_{j\rightarrow\infty}\lim_{i\rightarrow\infty}
  \sum_{|u|=i} \frac{1}{L^i}\lim_{k\rightarrow\infty} IA_{pref_j(v)pref_k(uw)}F
  \\
 & =  \lim_{j\rightarrow\infty}\lim_{i\rightarrow\infty}
  \sum_{|u|=i} \frac{1}{L^i}\fa(pref_j(v)uw)\\
 & =  \lim_{j\rightarrow\infty}\lim_{i\rightarrow\infty}
  \sum_{|u|=i} \frac{1}{L^i}g(pref_j(v)uw) \\
 & =  g(v),
\end{align*}
where the last equality follows from the uniform continuity of $g$.
\end{proof}
\begin{corollary}
Every continuous function $\So\to{\mathbb R}$ that can be computed by a weighted
finite automaton can be computed by some average preserving weighted finite
automaton.
\end{corollary}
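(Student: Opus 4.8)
The plan is to derive this corollary directly from Theorem~\ref{redistribution} by checking that a continuous $\omega$-function satisfies the hypotheses of that theorem with a trivial choice of $\Delta$. So suppose $\A$ is a WFA such that $\fa$ is continuous on all of $\So$. Take any $w\in\So$ and set $\Delta=\Ss w$; this is certainly a dense subset of $\So$. Since $\fa$ is continuous on the whole compact space $\So$, it is in particular continuous on $\Delta$, and we may take $g=\fa$ itself as the continuous extension, so that trivially $\fa_{|\Delta}=g_{|\Delta}$. Thus Theorem~\ref{redistribution} applies and yields an average preserving WFA $\B$ with $\fb=g=\fa$.

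A minor point worth spelling out is that the theorem as stated requires $g$ to be defined and continuous on all of $\So$, which is exactly what we are assuming about $\fa$; no appeal to uniform continuity of a restriction is needed in this special case, although of course uniform continuity holds automatically by compactness of $\So$. The ``moreover'' clause of the theorem (that when $\A$ is left-minimal, $\B$ differs from $\A$ only in the final distribution) is not needed for the corollary, but it does tell us that the construction is as economical as possible: the state set and transition matrices are untouched, only $F$ is replaced by the averaged limit $F'$ from equation~(\ref{fpilkku}).

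I do not anticipate any real obstacle here: the entire content has already been discharged in the proof of Theorem~\ref{redistribution}, and the corollary is simply the instantiation $\Delta=\So$ (equivalently, $\Delta=\Ss w$ for an arbitrary $w$, which is all the theorem formally demands). The only thing to be careful about is to phrase the reduction so that it is clear we are invoking the general theorem with its defect-tolerant hypothesis in the degenerate case where there is no defect at all.
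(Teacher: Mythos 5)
Your proposal is correct and matches the paper's intent exactly: the corollary is stated as an immediate instantiation of Theorem~\ref{redistribution} with $\Delta=\Ss w$ for an arbitrary $w$ and $g=\fa$, which is precisely what you do. The paper gives no separate proof, and your verification of the hypotheses is the right (and only) content needed.
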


\subsection{Continuity of $\omega$-functions}
We now prove several results about WFA with continuous $\omega$-functions.
While the behavior of general WFA with continuous $\omega$-functions can be
complicated, we can obtain useful results for uniformly convergent WFA and the
uniform convergence assumption is well justified:
As we show in Lemma~\ref{sup}, all ap WFA with continuous
$\omega$-function are uniformly convergent. Together with Theorem~\ref{redistribution} 
we then have that uniformly convergent WFA compute all
WFA-computable continuous functions.

\begin{lemma}\label{sup}
Let $F$ be an ap word function. Let its $\omega$-function $f$ be continuous.
Then $F$ is uniformly convergent.
\end{lemma}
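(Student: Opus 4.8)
The plan is to show that $F$ is uniformly convergent by using the criterion of Lemma~\ref{lemStrongerUC}: it suffices to find, for each $\epsilon>0$, an $m$ such that $pref_m(v)=pref_m(w)$ (for $v\in\Ss$, $w\in\So$) implies $|F(v)-f(w)|<\epsilon$. Since $\So$ is compact and $f$ is continuous, $f$ is automatically uniformly continuous, so the real work is to control $F(v)$ near $f(w)$ when $v$ shares a long prefix with $w$. The key structural fact I would exploit is the averaging property: for any word $u$,
\[
F(u)=\frac{1}{L^t}\sum_{|z|=t}F(uz),\qquad L=|\Sigma|,
\]
for every $t\geq 0$. In other words, $F(u)$ is an average of the values $F(uz)$ over all extensions $z$ of length $t$. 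Letting $t\to\infty$ and using that $f$ is defined everywhere (which follows from continuity of $f$ on all of $\So$, together with the fact that for $f$ to be continuous at $w$ it must be defined in a neighborhood of $w$ — so $f$ is defined on a dense open set, and by continuity extends to all of $\So$), one gets the ``integral'' representation
\[
F(u)=\int_{[u]} f\,d\mu,
\]
where $[u]\subseteq\So$ is the cylinder of words with prefix $u$ and $\mu$ is the uniform (coin-tossing) probability measure on $\So$, normalized so $\mu([u])=L^{-|u|}$ and $\mu([u])^{-1}\int_{[u]}f\,d\mu$ is the average of $f$ over $[u]$.

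With this in hand the argument is short. Fix $\epsilon>0$. By uniform continuity of $f$ pick $m$ so that $pref_m(z)=pref_m(z')$ implies $|f(z)-f(z')|<\epsilon$ for $z,z'\in\So$. Now suppose $v\in\Ss$ has $pref_m(v)=pref_m(w)$. Every $z\in[v]$ also satisfies $pref_m(z)=pref_m(w)$, so $|f(z)-f(w)|<\epsilon$ for all such $z$. Hence, since $F(v)$ is the $\mu$-average of $f$ over $[v]$,
\[
|F(v)-f(w)|=\left|\frac{1}{\mu([v])}\int_{[v]}\bigl(f(z)-f(w)\bigr)\,d\mu(z)\right|\le \sup_{z\in[v]}|f(z)-f(w)| < \epsilon .
\]
By Lemma~\ref{lemStrongerUC}, $F$ is uniformly convergent.

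The step I expect to be the main obstacle — or at least the one requiring care — is establishing the averaging/integral representation $F(u)=\frac{1}{L^t}\sum_{|z|=t}F(uz)$ passing cleanly to $t\to\infty$: one must justify that the finite averages converge, and converge to the average of $f$ over the cylinder. This is where one needs to know that $f$ is defined on all of $\So$ (from continuity) and that the finite averages $\frac{1}{L^t}\sum_{|z|=t}F(uz)$ stabilize to $\frac{1}{L^t}\sum_{|z|=t}f(uz\cdots)$-type quantities; a clean way is to note that for fixed $u$ the finite-average identity holds for all $t$, and then that continuity of $f$ on the compact set $[u]$ lets one replace $F$ by $f$ in the limit via a Riemann-sum argument (partition $[u]$ into the $L^t$ subcylinders, on each of which $f$ varies by less than any prescribed amount once $t$ is large, by uniform continuity). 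Alternatively one can avoid measure theory entirely: keep everything finite, use the identity $F(u)=\frac{1}{L^t}\sum_{|z|=t}F(uz)$ for a single large $t$, choose $t$ large enough (using Lemma~\ref{lemStrongerUC}'s hypotheses are \emph{not} available yet, so instead use uniform continuity of $f$ and the \emph{definition} of $f$ as a pointwise limit) that each $F(uz)$ with $|z|=t$ is within $\epsilon$ of $f$ on $[uz]$ and each such $f$-value is within $\epsilon$ of $f(w)$ — but making ``each $F(uz)$ close to $f$ on $[uz]$'' uniform in $z$ is exactly uniform convergence again, which is circular; so the measure-theoretic (or Riemann-sum) route through $f$ directly is the safe one, and I would take that.
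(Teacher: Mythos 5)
Your overall strategy --- exploit the iterated averaging identity $F(u)=\frac{1}{L^t}\sum_{|z|=t}F(uz)$ together with uniform continuity of $f$ on the compact space $\So$ --- is the right mechanism, and it is close in spirit to the paper's proof. But your argument has a genuine gap at its central step: the representation of $F(v)$ as the $\mu$-average of $f$ over the cylinder $[v]$ is asserted, not proved. The averaging identity only says that $\frac{1}{L^t}\sum_{|z|=t}F(vz)$ equals $F(v)$ for every $t$; to identify this constant with $\mu([v])^{-1}\int_{[v]}f\,d\mu$ you must replace each $F(vz)$ by a value of $f$ on $[vz]$ with an error that is small \emph{uniformly in} $z$, and that is precisely the uniform convergence you are trying to establish. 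You notice this circularity yourself in your ``alternative'' finite route, but the Riemann-sum/measure-theoretic route faces exactly the same obstruction, so declaring it ``the safe one'' does not discharge it. (The representation is in fact true, and a non-circular way to get it within this paper would be to observe that $u\mapsto\mu([u])^{-1}\int_{[u]}f\,d\mu$ is itself an ap word function whose $\omega$-function is $f$, and then invoke Corollary~\ref{there-can-be-only-one}; you do neither.)

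The paper avoids the full integral representation by using only a one-sided consequence of the single-step average $\frac{1}{L}\sum_{a}F(va)=F(v)$: there exist letters $a,b$ with $F(va)\ge F(v)\ge F(vb)$. Iterating the first choice produces an infinite word $w_1$ with $F(v\,pref_n(w_1))\ge F(v)$ for all $n$, hence $f(vw_1)\ge F(v)$ (the limit exists because continuity of $f$ forces $f$ to be defined everywhere), and symmetrically $f(vw_2)\le F(v)$ for some $w_2$. Since $vw_1$, $vw_2$ and $w$ share the prefix $v$ of length at least $k$, uniform continuity of $f$ sandwiches $F(v)$ within $\varepsilon$ of $f(w)$. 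If you replace your integral step by this min/max iteration, the rest of your argument (in particular the reduction via Lemma~\ref{lemStrongerUC}, of which only the easy direction is needed) goes through.
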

\begin{proof}
Let $\varepsilon >0$. By continuity of $f$ and compactness of $\So$, there
exists an index $k$
such that $|f(w)-f(w')|<\varepsilon$ for every $w,w'$ for which $pref_k(w')=pref_k(w)$.
Fix any $w\in\So$ and let $v\in\Ss$ be its prefix whose length is at least $k$.

By the ap property of $F$, we obtain:
\[
\frac{1}{|\Sigma|}\sum_{a\in\Sigma}F(va)=F(v).
\]
This means that
\[
\max\{F(va)\,|\,a\in\Sigma\}\geq F(v) \geq \min\{F(va)\,|\,a\in\Sigma\}.
\]

So for some letters $a,b\in\Sigma$, we have $F(va)\geq F(v)\geq F(vb)$. We can now
continue in this manner, obtaining words $w_1,w_2\in\So$ such that
$f(vw_1)\geq F(v)\geq f(vw_2)$.
By the choice of $k$, we have $f(w)+\varepsilon> f(vw_1)$ and $f(vw_2)>
f(w)-\varepsilon$. Therefore, $|f(w)-F(v)|< \epsilon$ and the claim follows.
\end{proof}

\begin{remark}
While most of the theorems in this section deal with uniformly convergent
functions and automata, uniform convergence is difficult to verify.
The ap property of minimal automata, on the other hand, is easy to check.
Thanks to Lemma~\ref{sup}, we can rewrite
all following theorems by replacing the uniform convergence assumption on $F$ by the demand that
$F$ be ap and $f$ be continuous. This is how we will mostly use
the results of this section, as ap WFA are often used in applications. 
However, it turns out that uniform convergence is
the essential feature that makes the following theorems valid, so we present the proofs in this more
general setup.
\end{remark}

We now state a simple but important property of uniformly convergent word functions.

\begin{lemma}\label{apu} Let $F$ be a word function defining the
$\omega$-function $f$. If $F$ is uniformly convergent, then
the following equality holds for all $w\in\So$ and $u\in\Ss$:
\[
\lim_{k\rightarrow\infty}F(pref_k(w)u)=f(w).
\]
\end{lemma}
\begin{proof}
By Lemma~\ref{lemStrongerUC}, for any $\epsilon>0$ there exists $k_0$ such that for all $k>k_0$
we have
$|F(pref_k(w)u)-f(w)|<\epsilon$.
\end{proof}

\begin{example}
Example~\ref{universal-counterexample}
shows that average preservation is a necessary condition in Lemmas~\ref{sup} and \ref{apu}.
The WFA $\A$ in Figure~\ref{u-c-figure}
is minimal but not average preserving. It defines the (continuous) zero $\omega$-function, but
the convergence is not uniform. Likewise, the conclusion of Lemma~\ref{apu}
also does not hold for $\F$:
Choose
$w=0^{\omega}$ and $u=1$. Then $f(0^\omega)=0$ while $\ds\lim_{k\to\infty}F(0^k1)=1$.
\end{example}

Next, we show that changing the initial distribution of a left minimal WFA does not
alter convergence and continuity properties:

\begin{lemma}\label{initial} Let $\A$ be a left minimal WFA and let $\B$ be a
WFA obtained from $\A$ by changing the initial distribution. Then the following
holds:
\begin{enumerate}[(1)]
\item If $\fa$ is defined on the whole $\So$ then so is $\fb$.
\item If $\fa$ is continuous then so is $\fb$.
\item If $\A$ is uniformly convergent then so is $\B$.
\end{enumerate}
\end{lemma}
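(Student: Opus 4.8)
The plan is to exploit left-minimality of $\A$ to express each component of the new initial distribution $I'$ of $\B$ in terms of the old initial distribution $I$ of $\A$, and then transfer convergence/continuity from $\fa$ to $\fb$ by linearity. Concretely, since $\A$ is left-minimal, the vectors $\{IA_u : u\in\Ss\}$ span $\mathbb{R}^{1\times n}$, so there are words $u_1,\dots,u_n\in\Ss$ and scalars $\alpha_1,\dots,\alpha_n\in\mathbb{R}$ with $I' = \sum_{j=1}^n \alpha_j\, IA_{u_j}$. Multiplying on the right by $A_v F$ for $v\in\Ss$ gives $\Fb(v) = \sum_{j=1}^n \alpha_j \Fa(u_j v)$, i.e.\ $\FB(v)=\sum_j \alpha_j \Fa(u_j v)$ for every word $v$. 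This is the single identity that drives all three parts.

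For part (1), fix $w\in\So$. Then $\FB(pref_k(w)) = \sum_{j=1}^n \alpha_j \Fa(u_j\, pref_k(w)) = \sum_{j=1}^n \alpha_j \Fa(pref_{|u_j|+k}(u_j w))$. Since $\fa$ is defined on all of $\So$, each sequence $\Fa(pref_{|u_j|+k}(u_j w))$ converges to $\fa(u_j w)$ as $k\to\infty$, so the finite linear combination converges; hence $\fb(w)$ is defined, with $\fb(w) = \sum_{j=1}^n \alpha_j \fa(u_j w)$. For part (2), note that the map $w\mapsto u_j w$ is continuous $\So\to\So$ (it is $1$-Lipschitz, in fact contracting, in the Cantor metric), so if $\fa$ is continuous then each $w\mapsto \fa(u_j w)$ is continuous, and $\fb = \sum_j \alpha_j (\fa\circ (u_j\cdot))$ is continuous as a finite linear combination of continuous functions.

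For part (3), assume $\A$ is uniformly convergent. Given $\epsilon>0$, apply Lemma~\ref{lemStrongerUC} to $\Fa$: there is an $m$ such that for all $z\in\So$ and $v\in\Ss$ with $pref_m(v)=pref_m(z)$ we have $|\Fa(v)-\fa(z)| < \epsilon/(n\max_j|\alpha_j|+1)$. Now take $w\in\So$ and any $k > m$ (also $k\ge$ all $|u_j|$ is not needed — we just use prefixes of $u_j w$). For each $j$, the words $u_j\,pref_k(w)$ and $u_j w$ agree on a prefix of length $\ge k > m$, so $|\Fa(u_j\,pref_k(w)) - \fa(u_j w)| < \epsilon/(n\max_j|\alpha_j|+1)$; summing, $|\FB(pref_k(w)) - \fb(w)| = |\sum_j \alpha_j(\Fa(u_j\,pref_k(w)) - \fa(u_j w))| < \epsilon$, uniformly in $w$. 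Hence $\B$ is uniformly convergent.

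The only mildly delicate point — and the thing to get right rather than an "obstacle" — is that the word $u_j\,pref_k(w)$ is a prefix of $u_j w$ of length $|u_j|+k$, so one must track the shift by $|u_j|$ when invoking convergence of $\Fa$ along prefixes; writing everything via Lemma~\ref{lemStrongerUC} (common prefixes of a finite and an infinite word) rather than via literal prefix-length statements sidesteps this cleanly. One should also record at the outset that these claims are invariant under the change of basis remark, so it is harmless to have applied left-minimality in its stated coordinate form. I expect the whole argument to be short.
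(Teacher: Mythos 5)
Your proposal is correct and follows essentially the same route as the paper: both express the new initial distribution as $I'=\sum_j\alpha_j IA_{u_j}$ via left-minimality, derive the identity $\FB(v)=\sum_j\alpha_j\Fa(u_jv)$, and read off all three claims from it. You merely spell out the details (the prefix-shift bookkeeping and the use of Lemma~\ref{lemStrongerUC} for part (3)) that the paper dismisses as ``easy to observe,'' and those details are handled correctly.
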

\begin{proof}
Let us obtain $\B$ from $\A$ by changing the initial distribution to $I'$.
By the left minimality of $\A$, there are words
$u_i\in\Ss$ and coefficients $\alpha_i\in\mathbb R$ such that $I'=\alpha_1IA_{u_1}+\ldots +\alpha _nIA_{u_n}$. Then
$\B$ computes a function which is a linear combination of the functions
${\F}_{u_i}$:
\begin{align*}
\FB(v)&=\alpha_1 IA_{u_1}A_{v}F+\ldots
+\alpha _n IA_{u_n}A_{v}F\\
&=\alpha_1\F(u_1v)+\ldots+\alpha_n\F(u_nv)
\end{align*}
So, assuming that $\fa$ is defined everywhere, we obtain:
\[
\fb(w)=\alpha_1\fa(u_1w)+\ldots+\alpha_n\fa(u_nw),
\]
proving (1). Moreover, it is easy to observe (2) and (3) from these equalities.
\end{proof}
It follows from Lemma~\ref{initial} that if $\fa $ is a left minimal continuous
WFA then the sequence $\{A_{pref_k(w)}F\}_{k=1}^\infty$ of vectors
converges element-wise as $k$ tends to infinity:  To see that
$\{(A_{pref_k(w)}F)_i\}_{k=1}^\infty$ converges, we change the
initial distribution to the $i$-th element of the canonical basis
$I_i=(0,\ldots,0,1,0,\ldots,0)$.
Denote the resulting automaton by $\A_i$.  Then $\fai(w)$ is continuous
and
\[
\fai (w)=\lim _{k\rightarrow\infty}(I_iA _{pref_k(w)}F)=
\lim_{k\rightarrow\infty}(A _{pref_k(w)}F)_i.
\]
We see that $\ds\lim_{k\rightarrow\infty}A_{pref_k(w)}F$ is the vector with $i$-th
component equal to $\fai (w)$ for $i=1,\ldots,n$.

We now look into the effect of changing the final distribution of a right minimal
WFA. If the WFA is uniformly convergent then the outcome is the same as multiplying the
$\omega$-function by a constant.

\begin{lemma}\label{final} Let $\A$ be right minimal and uniformly
convergent. Then changing
the final distribution of $\A$ keeps uniform convergence and
affects $\fa$ by a multiplicative constant only.
\end{lemma}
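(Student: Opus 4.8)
The plan is to use right minimality to rewrite the new final distribution as a fixed linear combination of vectors $A_uF$, thereby expressing the new word function as a fixed linear combination of right-shifted copies of $\Fa$, and then to observe that right-shifting preserves uniform convergence.

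First I would invoke right minimality of $\A$ to choose words $u_1,\dots,u_n\in\Ss$ for which $A_{u_1}F,\dots,A_{u_n}F$ is a basis of $\mathbb R^n$ (when $n=0$ all assertions are vacuous). Given an arbitrary new final distribution $F'$, write $F'=\sum_{i=1}^n\gamma_i A_{u_i}F$ with uniquely determined scalars $\gamma_i$, put $c=\sum_{i=1}^n\gamma_i$, and let $\B$ be obtained from $\A$ by replacing $F$ with $F'$. Then for every $v\in\Ss$,
\[
\FB(v)=IA_vF'=\sum_{i=1}^n\gamma_i\,IA_vA_{u_i}F=\sum_{i=1}^n\gamma_i\,\Fa(vu_i),
\]
so $\FB$ is a fixed finite linear combination of the word functions $G_i\colon v\mapsto\Fa(vu_i)$, and it suffices to control these.

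Next I would show each $G_i$ is uniformly convergent with $\omega$-function $\fa$. Given $\epsilon>0$, uniform convergence of $\A$ and Lemma~\ref{lemStrongerUC} yield an $m$ such that $|\Fa(v')-\fa(w)|<\epsilon$ whenever $pref_m(v')=pref_m(w)$. For any $w\in\So$ and $k>m$ the word $v'=pref_k(w)u_i$ has $pref_m(v')=pref_m(w)$, hence $|G_i(pref_k(w))-\fa(w)|<\epsilon$; this is exactly uniform convergence of $G_i$ together with the identification of its $\omega$-function with $\fa$ (the pointwise convergence alone is Lemma~\ref{apu}, and the uniformity is the extra content needed here). This middle step — that shifting $\Fa$ on the right by a fixed word preserves uniform convergence — is the only mildly delicate point; the rest is bookkeeping.

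Finally, a finite linear combination of uniformly convergent word functions is again uniformly convergent, with $\omega$-function the corresponding linear combination of the limits. Applying this to $\FB=\sum_{i=1}^n\gamma_i G_i$ shows $\B$ is uniformly convergent, and passing to the limit gives
\[
\fb(w)=\sum_{i=1}^n\gamma_i\,\fa(w)=c\,\fa(w)\qquad\text{for all }w\in\So,
\]
which is the claimed statement.
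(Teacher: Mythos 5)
Your proof is correct and follows essentially the same route as the paper's: express $F'$ as a linear combination of the vectors $A_{u_i}F$ via right minimality, decompose $\FB$ into right-shifted copies $v\mapsto\Fa(vu_i)$ of $\Fa$, and conclude that $\fb=(\sum_i\gamma_i)\fa$. The only difference is cosmetic: where the paper computes $\fb(w)$ directly via Lemma~\ref{apu} and then asserts that the shifted functions are uniformly convergent, you justify that assertion explicitly with Lemma~\ref{lemStrongerUC}, which is a welcome bit of added care.
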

\begin{proof}
Let $F'$ be any final distribution, and let $u_1,\ldots ,u_n\in\Ss$ be words
such that $F'=\alpha _1A_{u_1}F+\ldots +\alpha _nA_{u_n}F$ for some $\alpha _1,\ldots ,\alpha _n$.
Such words exist by the right minimality of $\A$. Denote by $\B$ the WFA $\A$ with
the final distribution $F'$. Then
\begin{align*}
\fb(w)  & =  \ds\lim_{k\rightarrow\infty}(IA_{pref_k(w)}F')\\
        & =  \ds\lim_{k\rightarrow\infty}(IA_{pref_k(w)}(\alpha _1A_{u_1}F+\ldots +\alpha _nA_{u_n}F))\\
        & =  \alpha _1\ds\lim_{k\rightarrow\infty}(IA_{pref_k(w)}A_{u_1}F)+\ldots +
\alpha _n\ds\lim_{k\rightarrow\infty}(IA_{pref_k(w)}A_{u_n}F)\\
        & =  (\alpha _1+\ldots +\alpha _n)\fa(w)
\end{align*}
where we have used Lemma~\ref{apu} in the last equality.

Uniform convergence of $\B$ easily follows, as the functions $F_i(v)=\F(vu_i)$
are all uniformly convergent and $\FB=F_1+F_2+\dots+F_n$.
\end{proof}
Putting Lemmas~\ref{initial} and \ref{final} together, we obtain a theorem about
the continuity of $\omega$-functions.
\begin{theorem}\label{keep-cont}
Let $\A$ be a minimal uniformly convergent WFA.
Then any automaton $\B$ obtained from $\A$ by changing $I$ and $F$ is also
uniformly convergent (and therefore continuous).
\end{theorem}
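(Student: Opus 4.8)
The plan is to factor the simultaneous change of $I$ and $F$ into two successive single-distribution changes, applying Lemma~\ref{final} first and Lemma~\ref{initial} second, and to check that the intermediate automaton still satisfies the minimality hypothesis needed for the second step. The key preliminary observation I would make is that left minimality, being condition~\eqref{leftmin}, depends only on $I$ and the transition matrices $\{A_a\}$, while right minimality, condition~\eqref{rightmin}, depends only on $F$ and $\{A_a\}$. Consequently, changing the final distribution of a WFA never destroys left minimality, and changing the initial distribution never destroys right minimality.

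Concretely, write $\B=(I',\{A_a\},F')$ for the automaton obtained from $\A=(I,\{A_a\},F)$, and introduce the intermediate automaton $\A'=(I,\{A_a\},F')$ in which only the final distribution has been changed. Since $\A$ is minimal, it is in particular right minimal, and it is uniformly convergent by hypothesis; hence Lemma~\ref{final} applies to the change $F\mapsto F'$ and shows that $\A'$ is uniformly convergent (the $\omega$-function being altered only by a multiplicative constant, which is irrelevant here). By the preliminary observation, $\A'$ is still left minimal, since $\A$ was and only $F$ was modified.

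Next I would pass from $\A'$ to $\B$ by changing the initial distribution from $I$ to $I'$. Now $\A'$ is left minimal and, by the previous paragraph, uniformly convergent, so Lemma~\ref{initial}(3) applies and yields that $\B=(I',\{A_a\},F')$ is uniformly convergent. Finally, uniform convergence of $\B$ implies that $\fb$ is defined and continuous on all of $\So$ by Lemma~\ref{remUCC}, which gives the parenthetical conclusion of the statement.

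There is no real obstacle here: the only point requiring attention is that the intermediate automaton $\A'$ must still be left minimal in order to invoke Lemma~\ref{initial}, and this is exactly what the independence of left and right minimality guarantees. (One could equally well change $I$ first and $F$ second, using that changing $I$ preserves right minimality; the argument is symmetric. Degenerate cases such as $I'=0$ or $F'=0$ are trivial, as then $\FB$ is the zero function.)
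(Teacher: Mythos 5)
Your proof is correct and follows essentially the same route as the paper: decompose the simultaneous change into two single-distribution changes, apply Lemma~\ref{final} and Lemma~\ref{initial} in turn, and use the observation that left (resp.\ right) minimality is unaffected by changing $F$ (resp.\ $I$). The only difference is that the paper changes $I$ first and then $F$, whereas you do the reverse; as you note yourself, the argument is symmetric, so this is immaterial.
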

\begin{proof}
To prove the theorem, we change first $I$ and then $F$.

Lemma~\ref{initial} tells us that changing $I$ does not break uniform
convergence of $\A$. Also, it is easy to observe that changing $I$ does
not affect right-minimality of $\A$, so the conditions of Lemma~\ref{final} are
satisfied even after a change of initial distribution. Recall that uniform
convergence implies continuity by Lemma~\ref{remUCC}.
\end{proof}

Recall that for $w\in\So$ we define $A_w=\lim _{k\rightarrow\infty}
A_{pref_k(w)}$ if the limit exists. We are now prepared to prove that the weight
matrices of a minimal uniformly convergent WFA form a continuous RCP set.

\begin{corollary}\label{Awlimit}
Let $\A$ be a minimal uniformly convergent WFA. Then the
limit \[ A_w=\lim _{k\rightarrow\infty} A _{pref_k(w)}\] exists for all
$w\in\So$, the elements of the matrix $A_w$ are continuous functions of $w$ and we have $\fa(w)=IA_wF$.
\end{corollary}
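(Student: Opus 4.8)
The plan is to reduce the whole statement to Theorem~\ref{keep-cont}, applied with the canonical basis vectors in place of the distributions. Fix indices $i,j\in\{1,\dots,n\}$ and let $\B_{ij}$ be the WFA obtained from $\A$ by replacing the initial distribution $I$ with the $i$-th canonical basis (row) vector $I_i$ and the final distribution $F$ with the $j$-th canonical basis vector, regarded as a column. Since $\A$ is minimal and uniformly convergent, Theorem~\ref{keep-cont} applies and tells us that $\B_{ij}$ is again uniformly convergent; by Lemma~\ref{remUCC} its $\omega$-function $f_{\B_{ij}}$ is then defined on all of $\So$ and continuous there.

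Next I would observe that $F_{\B_{ij}}(v)=I_iA_vF=(A_v)_{ij}$ for every $v\in\Ss$, so $f_{\B_{ij}}(w)=\lim_{k\to\infty}(A_{pref_k(w)})_{ij}$. The previous paragraph therefore shows that this limit exists for every $w\in\So$ and is a continuous function of $w$. Letting $i$ and $j$ range over all pairs, the matrix sequence $\{A_{pref_k(w)}\}_{k=1}^\infty$ converges entrywise; denote its limit by $A_w$, so that $(A_w)_{ij}=f_{\B_{ij}}(w)$ is continuous in $w$ for every $i,j$.

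Finally, to identify $\fa(w)$ with $IA_wF$, note that $\fa(w)=\lim_{k\to\infty}\F(pref_k(w))=\lim_{k\to\infty}IA_{pref_k(w)}F$, and since $M\mapsto IMF$ is a fixed linear combination of the entries of $M\in\mathcal{M}_{n\times n}(\mathbb R)$, entrywise convergence $A_{pref_k(w)}\to A_w$ forces $IA_{pref_k(w)}F\to IA_wF$. Hence $\fa(w)=IA_wF$. I do not anticipate any real obstacle: the only points needing care are that Theorem~\ref{keep-cont} imposes no nondegeneracy condition on the new $I$ and $F$ (so the choice $I=I_i$, $F=I_j^{\mathsf T}$ is legitimate), and that the limit may be passed through the finite matrix product $IMF$, which is immediate from finite-dimensionality.
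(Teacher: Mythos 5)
Your proposal is correct and follows essentially the same route as the paper: both apply Theorem~\ref{keep-cont} to the automata obtained by replacing $I$ and $F$ with canonical basis vectors, identify the resulting $\omega$-functions with the entries $(A_w)_{ij}$ to get existence and continuity of the entrywise limit, and then pass the limit through the fixed linear map $M\mapsto IMF$. Your write-up is if anything slightly more explicit than the paper's about why the entrywise limits exist (via Lemma~\ref{remUCC}) and about the legitimacy of the degenerate choices of $I$ and $F$.
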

\begin{proof}
Taking $I_i=(0,\dots,0,1,0,\dots,0)$ and $F_j=(0,\dots,0,1,0,\dots,0)^T$ with
one on the $i$-th and $j$-th place, we obtain the automaton $\A_{ij}$ computing
$(A_w)_{ij}$. From Lemma~\ref{keep-cont}, we
see that ${\fa}_{ij}$ is continuous on the whole $\Sigma$, so elements of
$A_w$ are continuous functions of $\Sigma$.

Multiplications by constant vectors $I$ and $F$ are continuous operations so we
can write
\[
\fa(w)= \lim _{k\rightarrow\infty} IA _{pref_k(w)}F = I\left(\lim
_{k\rightarrow\infty} A _{pref_k(w)}\right)F
=IA_wF,
\]
concluding the proof.
\end{proof}

Next we look into the matrices $A_w$ and prove that they have some very particular properties.

\begin{lemma}\label{wjauonw}
Let $\A$ be minimal and uniformly convergent, $w\in\So$ and $u\in\Ss$. Then
\[
A_wA_u=A_w.
\]
\end{lemma}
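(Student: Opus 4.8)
The plan is to compute $A_w A_u$ by realizing its entries as values of the $\omega$-functions of the coordinate automata $\A_{ij}$, and then to use the uniform convergence together with Lemma~\ref{apu} to push the trailing word $u$ off to infinity. Concretely, fix indices $i,j$ and consider the automaton $\A_{ij}$ with $I_i,F_j$ the $i$-th and $j$-th canonical basis vectors; by Corollary~\ref{Awlimit} its $\omega$-function is continuous and $\faij(z)=I_i A_z F_j = (A_z)_{ij}$ for all $z\in\So$, and in particular $\faij$ is defined everywhere. The $(i,j)$ entry of $A_w A_u$ equals $I_i A_w A_u F_j$; since $I_i A_w = \lim_{k\to\infty} I_i A_{pref_k(w)}$ (this limit exists by Corollary~\ref{Awlimit}) and right-multiplication by the fixed matrix $A_u F_j$ is continuous, we get
\[
(A_w A_u)_{ij} = \lim_{k\to\infty} I_i A_{pref_k(w)} A_u F_j = \lim_{k\to\infty} \Faij\big(pref_k(w)\,u\big).
\]
Now $\Faij$ is uniformly convergent: indeed $\A_{ij}$ is obtained from the minimal uniformly convergent $\A$ by changing both $I$ and $F$, so Theorem~\ref{keep-cont} applies. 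Hence Lemma~\ref{apu}, with the word function $\Faij$ and the appended finite word $u$, gives $\lim_{k\to\infty}\Faij(pref_k(w)\,u)=\faij(w)=(A_w)_{ij}$. Since $i,j$ were arbitrary, $A_w A_u = A_w$.

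I should be a little careful about the degenerate case where $\A$ has no states ($n=0$), in which case the identity is vacuous, and about the fact that Corollary~\ref{Awlimit} is stated for $\A$ itself — but the coordinate automata $\A_{ij}$ share the same weight matrices, so $A_w$ and all products $A_{pref_k(w)}$ are the same matrices regardless of which distributions we attach, and the limit $\lim_k A_{pref_k(w)}$ that Corollary~\ref{Awlimit} provides is exactly the one we use. The step of swapping the limit in $k$ past the multiplication by $A_u F_j$ is justified simply because that is a fixed linear map, hence continuous; no interchange of two limits is needed here.

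The only genuinely load-bearing step is invoking Lemma~\ref{apu}, and for that the essential point is that $\A_{ij}$ inherits uniform convergence from $\A$ via Theorem~\ref{keep-cont}; everything else is bookkeeping. So the main obstacle, if any, is purely in making sure the hypotheses of Theorem~\ref{keep-cont} are met — namely that $\A$ is minimal (both left and right) and uniformly convergent — which are exactly the hypotheses assumed in the statement of the lemma, so in fact there is no real obstacle. An alternative, slightly slicker route avoiding the coordinate automata: write $A_w A_u F = \big(\lim_k A_{pref_k(w)}\big) A_u F = \lim_k A_{pref_k(w)} A_u F$ and note that $A_u F$ is itself a vector of the form $\sum_\ell \beta_\ell A_{v_\ell} F$ is not available directly, so in fact the coordinate-automaton argument is cleanest and that is the one I would write up.
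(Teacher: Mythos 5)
Your proof is correct and follows essentially the same route as the paper's: the paper also changes $I$ and $F$ (to arbitrary vectors rather than the canonical basis, which is equivalent), invokes Theorem~\ref{keep-cont} to preserve uniform convergence, and applies Lemma~\ref{apu} to conclude $IA_wA_uF=IA_wF$ for all $I,F$. Your coordinate-automaton bookkeeping is just an explicit instance of that same argument.
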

\begin{proof}
By Theorem~\ref{keep-cont} we can change the initial and final distributions of $\A$ to any
$I$ and $F$ without
affecting uniform convergence. Then Lemma~\ref{apu} gives us that
\[
IA_wA_uF=
I (\lim_{k\rightarrow\infty}A_{pref_k(w)}) A_uF=
\lim_{k\rightarrow\infty}(IA_{pref_k(w)}A_uF)=\lim_{k\rightarrow\infty}(IA_{pref_k(w)}F)=IA_wF.\]
As the above equality holds for all $I$ and $F$, we have
\[A_wA_u=A_w.\qedhere
\]
\end{proof}

\begin{corollary} \label{cons}Let $\A$ be minimal and uniformly convergent. If $\fa$ is a non-zero
function, then we can effectively find a vector $I_c\neq 0$ such
$I_cA_a=I_c$ for all $a\in\Sigma$ . 
\end{corollary}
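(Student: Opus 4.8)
The goal is to produce a nonzero row vector $I_c$ fixed by every $A_a$. The natural candidate comes from the matrices $A_w$ studied in Lemma~\ref{wjauonw}: since $A_wA_u = A_w$ for all $u\in\Ss$, any nonzero row of any $A_w$ is already a left-fixed vector for every $A_a$ (take $u=a$). So the plan is: first pick a suitable $w\in\So$, show $A_w\neq 0$, extract a nonzero row, and argue that the whole construction is effective.

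First I would establish that $A_w \neq 0$ for some $w$. By Corollary~\ref{Awlimit} we have $\fa(w) = IA_wF$, and since $\fa$ is assumed non-zero there is some $v\in\Ss$ with $\F(v)\neq 0$; picking any $w$ with prefix $v$ and using uniform convergence (Lemma~\ref{apu}, or directly the existence of $A_w$ and continuity) we can arrange $\fa(w)\neq 0$, hence $IA_wF\neq 0$, hence $A_w\neq 0$. Fix such a $w$. Now $A_w$ has some nonzero row; let $I_c$ be that row, i.e. $I_c = I_i A_w$ for an appropriate canonical basis vector $I_i$. By Lemma~\ref{wjauonw} applied with $u = a$ we get $A_w A_a = A_w$, and multiplying on the left by $I_i$ yields $I_c A_a = I_i A_w A_a = I_i A_w = I_c$ for every $a\in\Sigma$. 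Since the chosen row is nonzero, $I_c\neq 0$, which is exactly what is required.

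The remaining point is effectiveness. We need to (i) effectively find a word $v$ with $\F(v)\neq 0$, (ii) effectively produce a word $w$ (or a long enough prefix) for which $A_{pref_k(w)}$ approximates $A_w$ well enough to read off a nonzero row, and (iii) then extract $I_c$. For (i), since $\fa\neq 0$ and $\fa(w)=\lim_k \F(pref_k(w))$, some $\F(v)$ is nonzero and this can be found by enumerating words $v$ and testing $IA_vF\neq 0$ (arithmetic is exact by the standing assumption on matrix entries); the search terminates because $\fa$ is non-zero somewhere and $\fa$ is a pointwise limit of the $\F(pref_k(w))$. For (ii)–(iii), here is the subtlety: we don't have $A_w$ in closed form, only as a limit. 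But we don't actually need $w$ itself — it suffices to find \emph{some} prefix $z=pref_m(w)$ such that $I_i A_z A_a = I_i A_z$ for all $a$, i.e. $I_iA_z$ is already a genuine fixed vector, not just an approximate one.

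This is where a small extra argument is needed, and I expect it to be the only real obstacle. The cleanest route is to use Corollary~\ref{Awlimit} together with Lemma~\ref{wjauonw} at the level of the whole matrix: by Lemma~\ref{wjauonw}, $A_w = A_wA_v = A_{wv}$ does not depend on the ``tail'', and in fact the map $w\mapsto A_w$ is locally constant enough that $A_w$ is determined by a finite prefix. Concretely, the dimension of the row space $\langle I A_u : u\in\Ss\rangle$ is $n$ by minimality, and the vectors $\{I_iA_z : z \text{ a prefix of } w\}$ stabilize (their span is nonincreasing once we note $I_iA_zA_a$ lies in the span as $z$ grows, using $A_w A_a = A_w$); so after finitely many steps, detectable by exact linear algebra, we reach a prefix $z$ of $w$ with $I_iA_z = I_iA_{za}$ for all $a$ and some $i$ with $I_iA_z\neq 0$. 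Take $I_c = I_iA_z$. Searching over all finite words $z$ and all $i$ for the conditions ``$I_iA_z\neq 0$'' and ``$I_iA_z = I_iA_{za}$ for all $a\in\Sigma$'' is an effective enumeration, and the argument above guarantees it succeeds. (Alternatively, one can invoke Theorem~\ref{keep-cont}/Corollary~\ref{Awlimit} to note $A_w$ itself, viewed via the $n^2$ automata $\A_{ij}$, is a continuous RCP limit, and use the product-convergence machinery to bound the required prefix length explicitly; but the linear-algebra stabilization argument is more self-contained.)
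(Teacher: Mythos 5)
Your existence argument is correct and is essentially the paper's: from $\fa\neq 0$ one gets $A_w\neq 0$ for some $w\in\So$, and Lemma~\ref{wjauonw} with $u=a$ gives $A_wA_a=A_w$, so any nonzero row of $A_w$ (the paper takes $I_c=IA_w$) is a common left $1$-eigenvector. The gap is in your treatment of effectiveness. Your central claim there --- that some finite prefix $z$ of $w$ satisfies $I_iA_z=I_iA_{za}$ \emph{exactly} for all $a$, with $I_iA_z\neq 0$, so that an enumeration over pairs $(z,i)$ terminates --- is false in general. The vectors $I_iA_{pref_k(w)}$ converge to $I_iA_w$ but need not attain the limit at any finite stage, and your ``span stabilization'' only yields eventual stabilization of a subspace, not exact equality of vectors. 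Concretely, take the minimal uniformly convergent automaton with $A_0=\left(\begin{smallmatrix}1/2&1/2\\0&1\end{smallmatrix}\right)$, $A_1=\left(\begin{smallmatrix}1/3&2/3\\0&1\end{smallmatrix}\right)$, $I=(1,0)$, $F=(0,1)^T$, and conjugate it by a generic invertible $M$. The common left $1$-eigenspace is the one-dimensional image of $\langle(0,1)\rangle$, and for every finite word $z$ and every canonical basis vector $I_i$ the vector $I_i(M^{-1}A_zM)$ lies outside it, because its component along the ``$B$-part'' is a nonzero product of the scalars $1/2$ and $1/3$ times a generically nonzero coordinate of $I_iM^{-1}$. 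Hence your search never halts, and your alternative suggestion of bounding the prefix length via product convergence fails for the same reason: the corollary demands an exact fixed vector, and no approximation of $A_w$ by finite products supplies one.

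The repair is much simpler than what you attempt, and it is exactly how the paper concludes: once the \emph{existence} of a nonzero common left $1$-eigenvector has been established (non-constructively, through the limit matrix $A_w$), such a vector is found effectively by solving the homogeneous linear system $\{I_c(A_a-E)=0\,:\,a\in\Sigma\}$. The solution space is known to be nontrivial, so exact Gaussian elimination (available by the paper's standing assumption on effective arithmetic for matrix entries) returns a nonzero solution, and any nonzero solution satisfies the conclusion verbatim. No word $v$ with $\F(v)\neq 0$ and no prefix of $w$ ever needs to be computed.
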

\begin{proof}
Suppose $\fa\neq 0$. Then $I A_w F\neq 0$ for some $w\in\So$.
Let $I_c=IA_w$. Consider now the WFA $\B$ obtained from $\A$ by replacing
the initial distribution $I$ with $I_c$. Then,
by Lemma~\ref{wjauonw}, we have for all $u\in\Ss$:
\[
I_cA_uF=IA_wA_uF=IA_wF=I_cF\neq 0.
\]
Thus $\B$ computes a non-zero constant function.

Next we notice that for all $u\in\Ss$ and all $a\in\Sigma$ we have the equality
$I_cA_aA_uF=I_cF=I_cA_uF$. This together with the right minimality of $\A$
gives us that $I_cA_a=I_c$ for all $a\in\Sigma$.

We have shown that the matrices $A_a,\,a\in\Sigma$ always have a common
left eigenvector belonging to the eigenvalue 1. 
We can find such common left eigenvector $I_c$
effectively by solving the set of linear
equations $\{I_c(A_a-E)=0,\,a\in\Sigma\}$.
\end{proof}

\begin{remark}
It is easy to see from  
Corollary~\ref{cons} that any minimal and uniformly convergent WFA that
computes a non-zero function  can be made to compute a nonzero
\emph{constant} function just by changing its initial distribution to $I_c$. 
\end{remark}

If $\A$ is uniformly convergent minimal WFA, then the rows of all
limit matrices $A_w$ are multiples of the same vector.
 
\begin{lemma}\label{onedim} Let $\A$ be minimal uniformly convergent
WFA and let $\fa\neq 0$. Then for all $w\in\So$
the row space $V(A_w)$ of $A_w$ is one-dimensional. Moreover,
$V(A_w)$ is the same for all $w\in\So$.
\end{lemma}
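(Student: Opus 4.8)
The plan is to exploit two facts already proved for a minimal uniformly convergent WFA: the existence (Corollary~\ref{cons}) of a vector $I_c\neq 0$ with $I_cA_a=I_c$ for every $a\in\Sigma$, and the identity $A_wA_u=A_w$ of Lemma~\ref{wjauonw}. The key observation is that both assertions of the lemma follow at once if we can show that $V(A_w)$ always contains $I_c$ and has dimension at most one; then $V(A_w)=\langle I_c\rangle$ for every $w$, which is simultaneously one-dimensional and independent of $w$.

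First I would record that $I_cA_a=I_c$ for all letters $a$ implies $I_cA_v=I_c$ for all $v\in\Ss$, and, passing to the limit $A_w=\lim_{k}A_{pref_k(w)}$ (which exists by Corollary~\ref{Awlimit}), that $I_cA_w=I_c$ for every $w\in\So$. Since $I_c\neq 0$, this shows $A_w\neq 0$; moreover $I_c$, being the linear combination $I_cA_w$ of the rows of $A_w$, lies in the row space $V(A_w)$. Hence $\dim V(A_w)\geq 1$ for every $w$, and it remains only to prove $\dim V(A_w)\leq 1$, i.e.\ $\mathrm{rank}(A_w)\leq 1$, equivalently $\dim\Ker A_w\geq n-1$.

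For the rank bound I would introduce the subspace $N=\mathrm{span}\{A_uF-F\mid u\in\Ss\}$, which does not depend on $w$. Lemma~\ref{wjauonw} gives $A_w(A_uF-F)=A_wF-A_wF=0$, so $N\subseteq\Ker A_w$ for every $w$. On the other hand, right-minimality of $\A$ says $\{A_uF\mid u\in\Ss\}$ spans $\mathbb{R}^n$; modulo $N$ every $A_uF$ is congruent to $F$, so the images of these spanning vectors in $\mathbb{R}^n/N$ all coincide, forcing $\dim(\mathbb{R}^n/N)\leq 1$, i.e.\ $\dim N\geq n-1$. Therefore $\dim\Ker A_w\geq n-1$ and $\mathrm{rank}(A_w)\leq 1$, which together with the previous paragraph yields $V(A_w)=\langle I_c\rangle$ for all $w$.

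The argument is short and the only mildly delicate point is the codimension estimate $\dim N\geq n-1$; it is worth spelling out that this is exactly where right-minimality is used, whereas the hypothesis $\fa\neq 0$ enters only through Corollary~\ref{cons} (to produce $I_c\neq 0$) and, via $I_cA_w=I_c$, to guarantee that the rank is exactly $1$ — never $0$ — at every point. As an alternative one could instead note that $\{A_a\mid a\in\Sigma\}$ is a continuous RCP set by Corollary~\ref{Awlimit} and invoke Theorem~\ref{Daubechies} to obtain a common simple left $1$-eigenspace containing $V(A_w)$; but the elementary route above keeps the proof self-contained.
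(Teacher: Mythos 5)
Your proof is correct and follows essentially the same route as the paper's: both hinge on Lemma~\ref{wjauonw} together with right-minimality to show that the span of $\{A_uF-F\mid u\in\Ss\}$ has dimension at least $n-1$ and is annihilated by every $A_w$, forcing $\mathrm{rank}(A_w)\le 1$. The only divergence is in the endgame: the paper rules out $A_w=0$ by observing that it would give $\fa=0$ on the dense set $\Ss w$ and hence everywhere, and then identifies $V(A_w)$ with the orthogonal complement of that span, whereas you invoke the fixed vector $I_c$ of Corollary~\ref{cons} to get $I_c=I_cA_w\in V(A_w)$, pinning down $V(A_w)=\langle I_c\rangle$ directly — a slightly more economical way to see that the row space is the same for all $w$ (and it anticipates the remark the paper makes right after the lemma).
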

\begin{proof}
By Lemma~\ref{wjauonw}, $A_wA_uF=A_wF$ and thus
$A_w(A_uF-F)=0\textrm{ for all }u\in\Ss.$ We see that vector $A_uF-F$ is
orthogonal to $V(A_w)$ irrespective of the choice of $u$.

Denote $W=\langle
A_uF-F|F\in \Ss \rangle$. Now the minimality of $\A$ implies $\dim W \geq n-1$,
because $\dim (W+\langle F\rangle) =n$. On the other hand, for all $w\in\So$
we have $V(A_w)\subseteq W^{\bot}$
so $\dim V(A_w)\leq \dim W^{\bot} \leq 1$. If $A_w=0$, for some $w$
then $\fa(uw)=IA_uA_wF=0$ for all $u\in\Ss$ and so, by continuity of $\fa$,
we would have $\fa=0$. This means that $\dim V(A_w)=1$ and
$V(A_w)=W^{\bot}$ for all $w$.
\end{proof}

\begin{remark}
From Lemma~\ref{onedim} it follows that the vector $I_c$ from Corollary~\ref{cons} 
belongs to $V(A_w)$ and is therefore unique up to multiplication by a
scalar.
\end{remark}

\begin{remark} Let $\A$ be minimal and uniformly convergent.
If $F$ is an eigenvector belonging to the eigenvalue $\lambda$ of some $A_u$,
then for all $w\in\So$
\[A_wF=A_wA_uF=A_w\lambda F=\lambda A_wF,\]
and thus either $\lambda =1$ or $A_wF=0$ for all $w\in\So$. In the latter case
$\fa = 0$.
\end{remark}

Using Corollary~\ref{cons} and Lemma~\ref{onedim}, we can transform all
minimal uniformly convergent automata to a ``canonical form''. This
transformation is a simple change of basis, so it preserves minimality as well
as the the ap property:

\begin{lemma}\label{matrix1} Let $\A$ be a minimal uniformly convergent
automaton such that
$\fa\neq 0$. Then we can algorithmically find a basis of
$\mathbb{R}^{n}$ such that the matrices $A_a$ are all of the form
\begin{equation}\label{preform}
A_a=
\begin{pmatrix}
 B_a & \vline & \mathbf{b}_a \\ \hline
  \mathbf{0} & \vline & 1
\end{pmatrix},
\end{equation}
where $\{B_a\,|\,a\in\Sigma\}$ is a stable set of matrices.
\end{lemma}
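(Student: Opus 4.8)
The plan is to use the structural facts already established about minimal uniformly convergent WFA with $\fa\neq 0$, in particular Corollary~\ref{cons} and Lemma~\ref{onedim}, to produce the basis explicitly. First I would invoke Corollary~\ref{cons} to obtain, effectively, a nonzero vector $I_c$ with $I_cA_a=I_c$ for every $a\in\Sigma$; by the remark following Lemma~\ref{onedim} this $I_c$ spans the common row space $V(A_w)=W^{\bot}$, which is one-dimensional. The idea is to choose a basis of $\mathbb R^n$ whose last vector is dual to $I_c$ in a suitable sense: concretely, pick any basis $e_1,\dots,e_{n-1}$ of the hyperplane $\{x : I_c x = 0\}$ (which has dimension $n-1$ since $I_c\neq 0$), and complete it by any vector $e_n$ with $I_c e_n = 1$. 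All of this is a routine linear-algebra computation that can be carried out algorithmically once $I_c$ is known.

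In this basis I would check that each $A_a$ has the claimed block form \eqref{preform}. The bottom row being $(\mathbf 0 \mid 1)$ is exactly the statement $I_c A_a = I_c$ expressed in coordinates dual to the chosen basis: the last coordinate functional is $x\mapsto I_c x$, and $I_c A_a = I_c$ says this functional is preserved by $A_a$, which forces the bottom row of the matrix of $A_a$ (in this basis) to equal the bottom row of $E$. So after the change of basis every $A_a$ is block upper-triangular of the form $\begin{pmatrix} B_a & \mathbf b_a \\ \mathbf 0 & 1\end{pmatrix}$ for some $(n-1)\times(n-1)$ matrix $B_a$ and column vector $\mathbf b_a$.

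It remains to show $\{B_a \mid a\in\Sigma\}$ is a stable set, i.e. $B_u\to 0$ along every infinite word. Here I would compute the block structure of a product $A_v$ for $v\in\Ss$: by the triangular form, $A_v = \begin{pmatrix} B_v & * \\ \mathbf 0 & 1 \end{pmatrix}$ where $B_v = B_{v_1}\cdots B_{v_{|v|}}$ is the corresponding product of the $B_a$'s. Now I use that $\A$ is minimal and uniformly convergent: by Corollary~\ref{Awlimit} (or directly by Lemma~\ref{wjauonw} and Lemma~\ref{onedim}) the limit $A_w$ exists for every $w\in\So$ and has one-dimensional row space spanned by $I_c$. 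Looking at the top-left block of $A_w = \lim_k A_{pref_k(w)}$, we get that $B_{pref_k(w)}$ converges; and since the rows of $A_w$ are all multiples of $I_c$ while $I_c$ has zero entries in the first $n-1$ coordinates (by our choice of basis), the top-left $(n-1)\times(n-1)$ block of $A_w$ is zero. Hence $B_{pref_k(w)}\to 0$ for every $w\in\So$, which is precisely the statement that $\{B_a\}$ is stable.

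The main obstacle is making sure the geometry lines up: one must choose the basis so that $I_c$ becomes (a scalar multiple of) the last coordinate functional $e_n^{*}$, so that simultaneously (i) the bottom row of each $A_a$ is $(\mathbf 0\mid 1)$ and (ii) the vanishing of the first $n-1$ coordinates of every row of $A_w$ translates into the vanishing of its top-left block. Both (i) and (ii) are consequences of the single condition that the span of $I_c$ is the $n$-th coordinate line in the dual basis, so the construction is consistent; verifying this consistency carefully — and checking the entries of the matrix products transform as claimed — is the only slightly delicate point, and it is routine linear algebra. Algorithmicity is immediate: $I_c$ is found by Corollary~\ref{cons}, and extending to a basis and conjugating are effective operations over our field of scalars.
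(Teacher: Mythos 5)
Your proposal is correct and follows essentially the same route as the paper: obtain $I_c$ from Corollary~\ref{cons}, change basis so that $I_c=(0,\dots,0,1)$ (your explicit construction of the basis from the hyperplane $\Ker I_c$ plus a complementary vector is exactly what the paper's ``change the basis so that $I_c=(0,\dots,0,1)$'' amounts to), read off the block form from $I_cA_a=I_c$, and deduce stability of $\{B_a\}$ from Lemma~\ref{onedim} via the vanishing of the top-left block of $A_w$. No gaps.
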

\begin{proof}
Suppose that $\A$ is minimal and uniformly convergent. Using 
Corollary~\ref{cons}, we can algorithmically find a vector 
$I_c$ such that $I_cA_a=I_c$ for all $a\in \Sigma$.
Let
us change the basis of the original automaton so that $I_c=(0,\dots,0,1)$ (this does not affect 
uniform convergence or minimality of $\A$).

As we have $I_cA_a=I_c$, the lowest row of every weight matrix $A_a$ must
be equal to $(0,\ldots,0,1)$. In other words, we have shown that for every $a\in\Sigma$, matrix $A_a$ has the form
\[
A_a=
\begin{pmatrix}
 B_a & \vline & \mathbf{b}_a \\ \hline
  \mathbf{0} & \vline & 1
\end{pmatrix},
\]
where $\mathbf{0}$ and $\mathbf{b}_a$ are row and column vectors, respectively. For
$v$ word (finite or infinite), denote
\[
A_v=
\begin{pmatrix}
 B_v & \vline & \mathbf{b}_v \\ \hline
  \mathbf{0} & \vline & 1
\end{pmatrix}.
\]

From the formula for matrix multiplication, we obtain
\[
A_{uv}=A_u\cdot A_v=
\begin{pmatrix}
 B_u & \vline & \mathbf{b}_u \\ \hline
  \mathbf{0} & \vline & 1
\end{pmatrix}\cdot
\begin{pmatrix}
 B_v & \vline & \mathbf{b}_v \\ \hline
  \mathbf{0} & \vline & 1
\end{pmatrix}=
\begin{pmatrix}
 B_u B_v & \vline & B_u\mathbf{b}_v+\mathbf{b}_u \\ \hline
  \mathbf{0} & \vline & 1
\end{pmatrix},
\]
in particular $B_{uv}=B_{u}B_{v}$ and so $B_v$ is simply a product
$B_{v_1}B_{v_2}\cdots B_{v_n}$.

For all $w\in\So,$ we have:
\[A_w=
\begin{pmatrix}
 B_w & \vline & \mathbf{b}_w \\ \hline
  \mathbf{0} & \vline & 1
\end{pmatrix}.\]

By Lemma~\ref{onedim}, we know that if $\A$ defines a continuous
function, then the rows $1,\ldots,n-1$ in $A_w$ are multiples of
row $n$. This means that $B_w=\mathbf{0}$ for all $w\in\So$ and so
$\{B_a\,|\,a\in\Sigma\}$ is a stable set.
\end{proof}

One might ask if all automata with matrices of the form (\ref{preform}) and
$\{B_a\,|\,a\in\Sigma\}$ stable are uniformly convergent. We show that the
answer is yes and prove an even more general statement along the way
(we are going to need this more general form later when proving Theorem~\ref{equiv}).

\begin{lemma}\label{matrix2}
Let $\{A_a\,|\,a\in\Sigma\}$ be a finite set of matrices of the form
\[
A_a=
\begin{pmatrix}
 B_a & \vline & C_a \\ \hline
 \bf{0} & \vline & D_a
\end{pmatrix},
\]
where $B_a$ and $D_a$ are square matrices and the set $\{B_a\,|\,a\in\Sigma\}$ is
stable.
Then the following holds:
\begin{enumerate}[(1)]
\item If $\{D_a\,|\,a\in\Sigma\}$ is product-bounded then $\{A_a\,|\,a\in\Sigma\}$ is product-bounded.
\item If $\{D_a\,|\,a\in\Sigma\}$ is RCP then $\{A_a\,|\,a\in\Sigma\}$ is RCP.
\item If $\{D_a\,|\,a\in\Sigma\}$ is continuous RCP then $\{A_a\,|\,a\in\Sigma\}$ is continuous RCP.
\item If $\{D_a\,|\,a\in\Sigma\}$ is stable then $\{A_a\,|\,a\in\Sigma\}$ is stable.
\end{enumerate}
\end{lemma}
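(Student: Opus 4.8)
The plan is to analyze finite products $A_{v}$ for $v\in\Ss$ using the block upper-triangular structure, which is preserved under multiplication. Writing $A_v=\left(\begin{smallmatrix} B_v & C_v \\ \mathbf 0 & D_v\end{smallmatrix}\right)$, one checks by induction on $|v|$ that $B_v=B_{v_1}\cdots B_{v_{|v|}}$ is the corresponding product of the $B_a$ blocks, $D_v$ is the corresponding product of the $D_a$ blocks, and the off-diagonal block satisfies the recursion $C_{ua}=B_uC_a+C_uD_a$, hence the closed form $C_v=\sum_{i=1}^{|v|} B_{v_{[1,i-1]}}\, C_{v_i}\, D_{v_{[i+1,|v|]}}$. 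This identity is the workhorse for all four parts. Since $\{B_a\}$ is stable, Lemma~\ref{uniform} gives that $\|B_{v}\|$ converges to $0$ uniformly and (combined with the observation that a stable finite set is a continuous RCP set, hence product-bounded by Theorem~\ref{wang2}, and in fact by Corollary~\ref{cont-conv} the convergence is at a geometric rate) we may fix constants $K\ge 1$ and $0<\rho<1$ with $\|B_{v}\|\le K\rho^{|v|}$ for all $v\in\Ss$.

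For part (1): assume $\|D_v\|\le M$ for all $v$. Then from the closed form, $\|C_v\|\le \sum_{i=1}^{|v|} K\rho^{i-1}\,\max_a\|C_a\|\,M \le \frac{KM\max_a\|C_a\|}{1-\rho}$, a bound independent of $v$; together with $\|B_v\|\le K$ and $\|D_v\|\le M$ this bounds $\|A_v\|$ uniformly, so $\{A_a\}$ is product-bounded.

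For part (2): let $w\in\So$; I must show $A_{pref_k(w)}$ converges. The top-left block $B_{pref_k(w)}\to 0$. The bottom-right block $D_{pref_k(w)}\to D_w$ by the RCP assumption on $\{D_a\}$. For the off-diagonal block, split the sum at index $j$: the tail $\sum_{i=j+1}^{k}$ is bounded by $K\rho^{j}/(1-\rho)\cdot\max_a\|C_a\|\cdot M$ uniformly in $k$ (using part (1)'s bound $M$ on products of $D_a$, valid since RCP implies product-bounded by Theorem~\ref{wang2}), so it is small for $j$ large; for the head $\sum_{i=1}^{j}$, once $j$ is fixed each term $B_{w_{[1,i-1]}}C_{w_i}D_{w_{[i+1,k]}}$ converges as $k\to\infty$ because $D_{w_{[i+1,k]}}=D_{w_{[i+1,j]}}\cdot D_{pref_{k}(\sigma^{j}w)}$ where the second factor converges by the RCP property applied to the shifted word. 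Hence $C_{pref_k(w)}$ is Cauchy and converges; so $A_{pref_k(w)}$ converges and $\{A_a\}$ is RCP. Part (3) refines this: if $\{D_a\}$ is continuous RCP, then by Corollary~\ref{cont-conv} the convergence $D_{pref_k(w)}\to D_w$ is uniform in $w$ at a geometric rate, and the $B$-part convergence is uniform by Lemma~\ref{uniform}; feeding these uniform rates into the same head/tail splitting shows $C_{pref_k(w)}\to C_w$ uniformly in $w$, so $A_w$ is the uniform limit of the continuous functions $w\mapsto A_{pref_k(w)}$ and is therefore continuous, i.e. $\{A_a\}$ is continuous RCP. Part (4): if $\{D_a\}$ is stable then $\|D_v\|\le K'\rho'^{|v|}$ uniformly (Lemma~\ref{uniform} plus Corollary~\ref{cont-conv}, since stable sets are continuous RCP); choosing a common ratio $r=\max(\rho,\rho')<1$ and a common constant, the closed form gives $\|C_v\|\le C\cdot|v|\cdot r^{|v|-1}\cdot\max_a\|C_a\|\to 0$, and since $B_v, D_v\to 0$ as well we get $A_v\to 0$, i.e. $A_w=0$ for every $w\in\So$, so $\{A_a\}$ is stable.

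The main obstacle is the off-diagonal block $C_v$: one needs a single clean summation identity for it and then, for parts (2) and (3), a careful $\epsilon$-argument that splits this sum into a ``head'' (finitely many terms, each of which converges because the trailing $D$-product converges for the shifted word) and a ``tail'' (uniformly small because the leading $B$-product decays geometrically while the $D$-products stay bounded). Getting the uniformity right in part (3) — making sure every estimate is uniform in $w\in\So$ so that $A_w$ is a uniform limit of continuous maps — is where the bookkeeping is most delicate, but it reduces entirely to plugging the geometric rates from Corollary~\ref{cont-conv} into the head/tail split, with no genuinely new idea needed beyond part (2).
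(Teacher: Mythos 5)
Your proof is correct and rests on the same ingredients as the paper's: the block-triangular multiplication rule, the summation identity $C_v=\sum_{i} B_{v_{[1,i-1]}}C_{v_i}D_{v_{[i+1,|v|]}}$, Lemma~\ref{uniform} (plus the geometric rate from Corollary~\ref{cont-conv}) for the $B$-block, and Theorem~\ref{wang2} to get product-boundedness of the $D$-block in cases (2)--(4). The only organizational difference is in parts (2)--(3): you keep the fully expanded sum and run a head/tail split with explicit geometric rates, obtaining $C_{pref_k(w)}\to C_w$ uniformly and concluding continuity as a uniform limit of locally constant maps, whereas the paper makes a single cut via $C_{uv}=B_uC_v+C_uD_v$ and argues by the Cauchy criterion (and, for (3), by uniform continuity of $w\mapsto D_w$ on the compact space $\So$); both are sound, and your version trades the paper's shorter two-factor estimate for a more quantitative bound (e.g.\ the $|v|\,r^{|v|-1}$ decay in part (4)). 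The one cosmetic slip is the index in your part (2) factorization, which should read $D_{w_{[i+1,k]}}=D_{w_{[i+1,j]}}D_{pref_{k-j}(\sigma^{j}w)}$; this does not affect the argument.
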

\begin{proof}
As before, denote
\[
A_v=
\begin{pmatrix}
 B_v & \vline & C_v \\ \hline
 \bf{0} & \vline & D_v
\end{pmatrix}.
\]
It is easy to see that $B_v$ and $D_v$ are equal to the matrix products $B_{v_1}B_{v_2}\dots B_{v_n}$
and $D_{v_1}D_{v_2}\dots D_{v_n}$, respectively, while for $C_v$ the
equality $C_{uv}=B_uC_v+C_uD_v$ holds.

\begin{enumerate}[(1)]
\item
Let $K$ be a constant such that $\|D_u\|<K$ for all $u\in\Ss$. We need to prove
that there exists a constant $L$ such that $\|C_u\|<L$ for all $u\in\Ss$.

By Lemma~\ref{uniform}, there exists a $k$ such that for all words $u$ of
length at least $k$ we have $\|B_u\|<1/2$. Denote $M=\max\{\|
B_u\|\,|\,u\in\Ss, |u|< k\}$ and $N=\max \{\|C_a\|\,|\, a\in\Sigma\}$.

Let $m=|u|$. It is easy to see that when $m\geq k$, the inequality
$\|B_u\|<M\cdot 2^{-\lfloor
m/k\rfloor}$ holds. Moreover, a quick proof by induction yields that:
\[
C_u=\sum_{j=0}^m B_{u_1\cdots u_{j-1}}C_{u_j}D_{u_{j+1}\cdots u_m}.
\]
Hence, we can write (for $m>k$):
\begin{align*}
\|C_u\|&\leq \sum_{j=0}^{k-1} \|B_{u_1\cdots u_{j-1}}\|\|C_{u_j}\|\|D_{u_{j+1}\cdots
u_m}\|\,+\\
&\quad\quad\quad+\sum_{j=k}^{m} \|B_{u_1\cdots u_{j-1}}\|\|C_{u_j}\|\|D_{u_{j+1}\cdots
u_m}\|\\
&\leq \sum_{j=0}^{k-1} M N K +\sum_{j=k}^{m} M\cdot 2^{-\lfloor
j/k\rfloor}\cdot N K
\end{align*}
The first sum is exactly $kMNK$ while the second one can bounded from
the above by $kMNK$. All in all, we obtain that $\|C_u\|\leq 2kMNK$ and so
$\{A_a\,|\,a\in \Sigma\}$ is product-bounded.

\item
Using Theorem~\ref{wang2}, we obtain that the set
$\{D_a\,|\,a\in\Sigma\}$ is product-bounded. Therefore, using the part
(1) of this Lemma, we see that $\{A_a\,|\,a\in\Sigma\}$ is
product-bounded and so there exists some $L>0$ such that $\|C_u\|<L$
for all $u\in\Ss$.

We only need to show that for every $w\in\So$, the sequence
$\{C_{pref_k(w)}\}_{k=1}^\infty$ satisfies the
Bolzano-Cauchy condition.

Assume $\epsilon>0$ is given. Because $\{B_a\,|\,a\in\Sigma\}$ is stable, there exists a $k$
such that $\|B_{pref_k(w)}\|<\epsilon/(4L)$. Denote $u=pref_k(w)$ and let
$x\in\So$ be such that $w=ux$. The sequence $\{D_{pref_i(x)}\}_{i=1}^\infty$ converges so there exists
a number $j$ such that for all positive $i$ we have $\|D_{pref_j(x)} - D_{pref_{j+i}(x)}\| <
\epsilon/(2L)$.
Let $v=pref_j(x)$ and write $w=uvy$ where $y$ is an appropriate infinite suffix.

We will now prove that for all prefixes $uvs$ of $w$ we have $\|C_{uv} -
C_{uvs}\| < \epsilon$.
Using the equalities
\begin{align*}
     C_{uv}  &= B_uC_v    + C_uD_v\\
     C_{uvs} &= B_uC_{vs} + C_uD_{vs},
\end{align*}
we obtain
$$ \|C_{uv} - C_{uvs}\| \leq \|B_u\| \|C_v-C_{vs}\| + \|C_u\| \|D_v-D_{vs}\| <
\frac{\epsilon}{4L}\cdot 2L + L\cdot\frac{\epsilon}{2L} = \epsilon,$$
this means that the sequence $\{C_{pref_k(w)}\}_{k=1}^\infty$ is
Cauchy and so the proof is finished.
\item
By case (2) we have that $A_w$ exists for all $w\in\So$. As $B_w$, $D_w$ depend
continuously on $w$, all we need to show is that the map $w\mapsto C_w$ is also continuous.

As before, by Theorem~\ref{wang2} the set $\{C_a\,|\,a\in\Sigma\}$ is
product-bounded. By passing to limits, we see that there exists $L$
such that $\|C_w\|<L$ for all infinite $w\in\So$.

The function $w\mapsto D_w$ is continuous on a compact
space and so it is uniformly continuous. Given $\epsilon>0$, we find $k$ such
that for all $u,v$ of length $k$ and all $w,z\in\So$ we have:
\begin{align*}
\|B_u\|<\frac{\epsilon}{4L}\\
\|D_{vw}-D_{vz}\|<\frac{\epsilon}{2L}.
\end{align*}

We can now, similarly to case (2), write:
\begin{align*}
\|C_{uvz} - C_{uvw}\|& \leq \|B_u\| \|C_{vz}-C_{vw}\| + \|C_u\|
\|D_{vz}-D_{vw}\|\\
&<\frac{\epsilon}{4L}\cdot 2L + L\cdot\frac{\epsilon}{2L} = \epsilon,
\end{align*}
proving continuity.

\item
Using case (2), we obtain that $C_z$ exists for all $z\in\So$ and
moreover, by Theorem~\ref{wang2}, there exists $L>0$ such that
$\|C_z\|<L$ for all $z\in\So$.

Let $w\in\So$ and $\epsilon>0$. If we prove that $\|C_w\|<\epsilon$, we are
done. There is a finite prefix $u$ of $w$
such that $\|B_u\|<\epsilon/L$. Let $w=uz$, where word $z\in\So$ is the remaining infinite
suffix of $w$. We now have:
\[
\|C_w\|=\|C_{uz}\|=\|B_uC_z+C_uD_z\|=\|B_uC_z\|\leq \|B_u\|\|C_z\|<\frac{\epsilon}{L}L=\epsilon,
\]
where we have used the equality $D_z=0$. This means that $\|C_w\|=0$ and we are
done.\qedhere
\end{enumerate}
\end{proof}

Observe that by letting $D_a=1$ for all $a\in\Sigma$, we
obtain from case (3) of Lemma~\ref{matrix2} and Corollary~\ref{cont-conv} a
partial converse to Lemma~\ref{matrix1}: All automata of the form (\ref{preform}) with
$\{B_a\,|\,a\in\Sigma\}$ stable are uniformly convergent.

Therefore, putting Lemmas \ref{matrix1} and \ref{matrix2} together, we obtain
(under the assumption that $\A$ is minimal and $\fa\neq0$) a characterization
of uniformly convergent automata.

\begin{corollary}\label{matrix-reloaded}
Let $\A$ be a minimal automaton such that $\fa\neq0$. Then $\A$ is uniformly
convergent iff there exists a basis of ${\mathbb R}^n$ in which
all the transition matrices $A_a$
have the form (\ref{preform}) where $\{B_a\,|\,a\in\Sigma\}$ is a stable set.
\end{corollary}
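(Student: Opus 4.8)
The plan is to prove the two implications of the equivalence separately; both follow almost immediately from results already established, so the corollary is essentially a packaging of Lemmas~\ref{matrix1} and~\ref{matrix2}. First I would record that the hypotheses ``$\A$ minimal'' and ``$\fa\neq0$'' are invariant under change of basis, and that a change of basis alters neither the word function $\Fa$ nor the property of uniform convergence (by the change-of-basis remark together with Definition~\ref{uc}); this makes it legitimate to phrase one side of the equivalence as ``there exists a basis in which the $A_a$ have the form~(\ref{preform})''.

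For the direction ``$\A$ uniformly convergent $\Rightarrow$ canonical form'', the work is entirely done by Lemma~\ref{matrix1}: since $\A$ is minimal, uniformly convergent and computes $\fa\neq0$, that lemma produces (algorithmically) a basis of $\mathbb{R}^n$ in which every weight matrix $A_a$ has the block-triangular shape~(\ref{preform}) with $\{B_a\mid a\in\Sigma\}$ a stable set. Nothing further is required here.

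For the converse, suppose a basis is given in which each $A_a$ has the form~(\ref{preform}) with $\{B_a\mid a\in\Sigma\}$ stable. I would apply Lemma~\ref{matrix2} with the $1\times 1$ blocks $D_a=(1)$: the set $\{D_a\mid a\in\Sigma\}=\{(1)\}$ is trivially continuous RCP, since $D_w=1$ for every $w\in\So$ and this is a continuous function of $w$. Hence case~(3) of Lemma~\ref{matrix2} shows that $\{A_a\mid a\in\Sigma\}$ is a continuous RCP set, and Corollary~\ref{cont-conv} then gives that the products $A_{pref_k(w)}$ converge to $A_w$ uniformly in $w\in\So$ at a geometric rate. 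Finally, because $\Fa(pref_k(w))=IA_{pref_k(w)}F$ with the vectors $I$ and $F$ fixed, and multiplication by fixed vectors is a bounded (hence Lipschitz) linear map, uniform convergence of the matrices $A_{pref_k(w)}$ transfers to uniform convergence of the scalars $\Fa(pref_k(w))$ to $\fa(w)=IA_wF$; thus $\A$ is uniformly convergent. This is exactly the ``partial converse'' already observed in the remark preceding the statement.

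Since the genuinely substantial arguments --- the block-triangularization and the stability/RCP bookkeeping --- are carried out in Lemmas~\ref{matrix1} and~\ref{matrix2}, no real obstacle remains. The only point that needs a word of care is the last step, passing from uniform convergence of the matrix products to uniform convergence of the word function, and this is immediate precisely because $I$ and $F$ do not vary with $w$ or $k$.
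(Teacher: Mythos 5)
Your proposal is correct and follows essentially the same route as the paper: the forward implication is exactly Lemma~\ref{matrix1}, and the converse is obtained by applying case~(3) of Lemma~\ref{matrix2} with $D_a=1$ together with Corollary~\ref{cont-conv}, just as in the remark preceding the corollary. Your added care about basis-invariance and about passing from uniform convergence of the matrix products to that of $\Fa$ via the fixed vectors $I$ and $F$ is sound and merely makes explicit what the paper leaves implicit.
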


Note that we could have relied on Theorem~\ref{Daubechies} here: Together
with Lemma~\ref{WFA-RCP}, it directly gives us Corollary~\ref{matrix-reloaded}.
(We only need to realize that the dimension of the space $E_1$ is one in this
case, which follows from Lemma~\ref{onedim}.) However, we wanted to show how to
algorithmically obtain the form $(\ref{preform})$ and we will also need Lemma
\ref{matrix2} later on.

\begin{remark}
If $\A$ is a minimal ap automaton then it is easy to verify algorithmically
whether $\fa=0$, because $\fa=0$ iff $\F=0$.
\end{remark}

\begin{remark}
In \cite{Karh}, the authors define \emph{level automata} as automata
satisfying the following conditions:
\begin{enumerate}[(1)]
\item Only loops of length one (i.e. $q\to q$) are allowed.
\item The transition matrices and distribution vectors are non-negative.
\item For every state $p$, if there exist a state $q\neq p$ and a letter $a$ such
that $(A_a)_{p,q}\neq 0$ then $(A_a)_{p,q}<1$ for all $q$ and $a$. If there are no
such $q$ and $a$ then $(A_a)_{p,p}=1$ for all letters $a$.
\item The automaton is reduced; it does not have useless states.
\end{enumerate}

As all level automata have (after proper ordering of states) matrices of the
form 
\[
A_a=
\begin{pmatrix}
 B_a & \vline & C_a \\ \hline
 \bf{0} & \vline & E
\end{pmatrix},
\]
where $E$ is the unit matrix and $B_a$ are upper triangular matrices with
entries in the interval $[0,1)$, the automata from case (3) of Lemma
\ref{matrix2} are actually a generalization of level automata.
\end{remark}

\subsection{WFA and RCP sets}

In this part we explicitly connect the notions of RCP sets and functions
computed by WFA.
\begin{theorem}\label{WFA-RCP}
Let $\A$ be a WFA and let $B=\{A_a\,|\,a\in\Sigma\}$ be its set of transition matrices.
Then the following holds:
\begin{enumerate}[(1)]
\item If $B$ is an RCP set then $\fa$ is defined everywhere.
\item If $B$ is a continuous RCP set then $\A$ is uniformly convergent (and
therefore $\fa$ is continuous).
\end{enumerate}
For the converse, we need to assume minimality:
\begin{enumerate}[(1)]
\item[3.] If $\A$ is uniformly convergent and minimal then $B$ is a continuous RCP set.
\end{enumerate}
\end{theorem}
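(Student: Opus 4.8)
For parts (1) and (2), the idea is that the word function $\Fa(v) = IA_vF$ is obtained from the matrix-valued function $v \mapsto A_v$ by pre- and post-multiplying by the constant vectors $I$ and $F$, which are continuous (indeed linear) operations. So if $B$ is RCP, then $A_w = \lim_k A_{pref_k(w)}$ exists for every $w \in \So$, hence $\fa(w) = \lim_k IA_{pref_k(w)}F = IA_wF$ exists for every $w$, proving (1). For (2), if $B$ is continuous RCP, then by Corollary~\ref{cont-conv} the products $A_{pref_k(w)}$ converge to $A_w$ uniformly (at a geometric rate); pre- and post-multiplying by $I$ and $F$ only scales the error by the constant $\|I\|\,\|F\|$, so $\Fa(pref_k(w)) = IA_{pref_k(w)}F$ converges to $IA_wF$ uniformly in $w$. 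Thus $\A$ is uniformly convergent, and continuity of $\fa$ then follows from Lemma~\ref{remUCC}.

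For part (3), the converse, minimality is essential because we must recover the behaviour of the \emph{full} matrix $A_v$ from the scalar values $IA_vF$. The plan is to reduce to the coordinate automata. Assume $\A$ is minimal and uniformly convergent. If $\fa = 0$ then, since $\A$ is minimal, we have $A_v = 0$ for all nonempty $v$ (the word function $\Fa$ must vanish, and minimality forces the matrices to be zero), so $B$ is trivially stable, hence continuous RCP. Otherwise $\fa \neq 0$, and we invoke Corollary~\ref{Awlimit}: it already tells us that $A_w = \lim_k A_{pref_k(w)}$ exists for all $w$ and that the entries of $A_w$ are continuous functions of $w$. But "$A_w$ exists everywhere and depends continuously on $w$" is precisely the definition of $B$ being a continuous RCP set. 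So part (3) is essentially an immediate corollary of Corollary~\ref{Awlimit}; alternatively, one can cite Lemma~\ref{matrix1}, which puts the matrices in the form (\ref{preform}) with $\{B_a\}$ stable, and then apply case (3) of Lemma~\ref{matrix2} with $D_a = 1$ together with Corollary~\ref{cont-conv}.

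**The main obstacle.** There is not much of one: parts (1) and (2) are routine "sandwich the matrix between constant vectors" arguments, and part (3) is already packaged in Corollary~\ref{Awlimit} (which in turn rests on Lemma~\ref{keep-cont}, the hard work of changing $I$ and $F$ freely while preserving uniform convergence). The only point requiring a moment's care is the degenerate case $\fa = 0$ in part (3), where Corollary~\ref{Awlimit} and Lemma~\ref{onedim} do not directly apply; there one argues separately that minimality together with $\Fa = 0$ forces all $A_v = 0$, so $B$ is stable and a fortiori continuous RCP. One should also note in (3) that minimality cannot be dropped — Example~\ref{universal-counterexample} shows a (non-minimal... actually minimal but not ap) automaton that is \emph{not} uniformly convergent whose matrices form a non-RCP set, illustrating why a hypothesis beyond "$\fa$ defined everywhere" is needed.
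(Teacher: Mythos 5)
Your parts (1) and (2) are exactly the paper's argument (sandwich $A_w$ between the constant vectors $I$ and $F$, then invoke Corollary~\ref{cont-conv} and Lemma~\ref{remUCC}), and your main route for part (3) — cite Corollary~\ref{Awlimit}, whose conclusion is verbatim the definition of a continuous RCP set — is precisely the paper's proof. One correction, though: the case split on $\fa=0$ in part (3) is both unnecessary and incorrectly justified. Corollary~\ref{Awlimit} carries no hypothesis $\fa\neq 0$ (that hypothesis appears only in Lemma~\ref{onedim} and Lemma~\ref{matrix1}), so it already covers the degenerate case. Moreover, your claimed argument for that case — that $\fa=0$ together with minimality forces $\Fa=0$ and hence $A_v=0$ — is false for general (non-ap) automata: the paper's one-state example with $I=F=(1)$ and $A_0=A_1=\bigl(\tfrac12\bigr)$ is minimal and uniformly convergent with $\fa=0$, yet $\Fa(v)=2^{-|v|}\neq 0$ and no $A_v$ vanishes. (Its matrices do form a stable, hence continuous RCP, set, so the theorem is not threatened — but not for the reason you give.) Deleting the case split leaves a correct proof identical to the paper's.
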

\begin{proof}
\begin{enumerate}[(1)]
\item If the limit $A_w=\lim_{k\to\infty} A_{pref_k(w)}$ exists then
\[\fa(w)=\lim_{k\to\infty} \F(pref_k(w))=I A_w F.\]
As
$A_w$ is defined everywhere, so is $\fa$.
\item Similarly to the first proof, we have $\fa(w)=IA_wF$ where
$w\mapsto A_w$ is a continuous function, so $w\mapsto \fa(w)$ is continuous. Uniform convergence
follows from Corollary~\ref{cont-conv}, continuity from Lemma~\ref{remUCC}.
\item This is precisely Corollary~\ref{Awlimit}.\qedhere
\end{enumerate}
\end{proof}
The uniform convergence and minimality conditions in the third statement are both necessary,
as we can see from the following two examples where $\fa$ is continuous but $A$ is
not even RCP:
\begin{example}
We construct a counterexample that is ap (and thus uniformly convergent by
Lemma~\ref{sup}) but not minimal. Let
$I=(0,1),F=(0,1)^T$ and
\[
A_0=A_1=\begin{pmatrix}
-1&0\\
0&1\\
\end{pmatrix}
\]
This automaton is ap and computes the constant function $\fa(w)=1$, yet the set
$\{A_0,A_1\}$ is not RCP.
\end{example}
\begin{example}
To obtain a minimal automaton that computes a continuous function, but does not
have RCP set of transition matrices, take the
automaton $\A$ from Example~\ref{universal-counterexample}. This automation computes
the zero $\omega$-function and has transition matrices
\[
A_0=\begin{pmatrix}
-1&0\\
0&0\\
\end{pmatrix}
\qquad
A_1=\begin{pmatrix}
0&1\\
0&0\\
\end{pmatrix}.
\]
Now observe that
\[
A_0^n=\begin{pmatrix}
(-1)^n&0\\
0&0\\
\end{pmatrix},\]
so $\A$ is not RCP.
\end{example}

The next example shows that even if $\A$ is minimal and  ap, and if $\fa$ is everywhere defined
and continuous everywhere except at one point, we can
not infer that $A$ is RCP.

\begin{example}
Let $I=(1,0),F=(0,1)^T$ and
\[
A_0=\begin{pmatrix}
-1&0\\
0&1\\
\end{pmatrix},\quad
A_1=\begin{pmatrix}
0&1\\
0&1\\
\end{pmatrix},\quad
A_2=\begin{pmatrix}
0&-1\\
0&1\\
\end{pmatrix}.
\]
It is easy to see that $\A$ is both ap and minimal. Moreover, we have
\begin{align*}
F(0^n)&=(-1)^n IF=0\\
F(0^n1w)&=(-1)^n\\
F(0^n2w)&=(-1)^{n+1}
\end{align*}
for every $w\in\Ss$. This means that $\fa$ is defined everywhere. However,
$\fa$ is not continuous at $0^\omega$.
The set $A$ is not RCP, because we have
\[
A_0^n=\begin{pmatrix}
(-1)^n&0\\
0&1\\
\end{pmatrix}.
\]
\end{example}

\subsection{Decision problems for $\omega$-functions}
\label{omegadecidability}
In this part, we present several results about decidability of various
properties of the $\omega$-function $\fa$ in the case of ap automata.
In particular, we are interested to know how to determine if the
$\omega$-function $\fa$ is everywhere defined, or everywhere continuous.
It turns out that the questions are closely related to the decidability
status of the matrix stability problem: If it is undecidable whether a given
finite set of matrices is stable then it is also undecidable for a given ap WFA $\A$
whether $\fa$ is everywhere defined, or whether $\fa$ is continuous. We also show that
in this case it is undecidable if a given finite matrix set is RCP, or if it is continuous
RCP. Conversely, if it were the case that stability is decidable then continuity of
$\fa$ is decidable, as is the question of whether a given matrix set is continuous RCP.
The central algorithmic problem is therefore the following:

\bigskip
\noindent
{\sc Matrix Product Stability}:

{\bf Input:} A finite set $\{A_a\,|\,a\in\Sigma\}$ of $n\times n$ matrices.

{\bf Question:} Is $\{A_a\,|\,a\in\Sigma\}$ stable?
\bigskip

We begin with the equivalence problem of two ap WFA.
\begin{theorem}
\label{ap-equivalence}
Given two ap WFA $\A$ and $\B$ such that at least one of the $\omega$-functions $\fa$ and $\fb$
is everywhere defined, one can algorithmically decide whether $\fa=\fb$.
\end{theorem}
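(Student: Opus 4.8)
The plan is to reduce the equivalence question for $\fa$ and $\fb$ to the (decidable) equivalence question for word functions, using the uniqueness result Corollary~\ref{there-can-be-only-one}. First I would note that the difference of two ap word functions is again ap, so it suffices to handle the case $\B$ is the trivial (zero-state) automaton and ask whether $\fa=0$; more precisely, form the ``difference automaton'' $\C$ with state set $Q_{\A}\sqcup Q_{\B}$, initial distribution $(I_{\A}, I_{\B})$, final distribution $(F_{\A}, -F_{\B})$, and block-diagonal transition matrices. Then $\C$ is ap (the ap condition is preserved under this direct-sum-with-sign construction), $\Fc = \Fa - \Fb$, and hence $\fc = \fa - \fb$ wherever both sides are defined. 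Using Lemma~\ref{find-minimal}, replace $\C$ by a minimal ap WFA computing the same word function $\Fc$ (minimizing preserves the ap property). So the problem becomes: given a minimal ap WFA $\C$, at least one of whose $\omega$-function is everywhere defined — decide whether $\fc$ is the zero $\omega$-function.

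Next I would invoke the hypothesis that at least one of $\fa,\fb$ is everywhere defined. I'd want to argue this forces $\fc$ to be everywhere defined: if, say, $\fa$ is everywhere defined and $\fa=\fb$ as $\omega$-functions on the common domain of definition, then... actually the cleaner route is: the claim ``$\fa=\fb$'' should be interpreted as ``$\fb$ is also everywhere defined and the two functions agree''. Assuming WLOG $\fa$ is everywhere defined, I would show that $\fa=\fb$ holds iff $\fc$ (which is then everywhere defined, being a finite linear combination of everywhere-defined functions coming from $\fa$ together with the prefixes of $\fb$ — here I'd lean on left-minimality arguments as in Lemma~\ref{initial}) is everywhere defined and equals $0$. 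By Lemma~\ref{zero-ap-lemma}, an ap word function defining the zero $\omega$-function must itself be the zero function, so $\fc=0$ as an $\omega$-function iff $\Fc=0$ as a word function, which for the minimal form of $\C$ is the trivial zero-state condition — decidable. Conversely if $\Fc\neq 0$ then $\fc\neq 0$ by Lemma~\ref{zero-ap-lemma}, so $\fa\neq\fb$.

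Putting it together: the algorithm is (i) build $\C$, (ii) minimize it keeping it ap, (iii) check whether the minimal $\C$ is the zero-state automaton (equivalently whether $\Fc\equiv 0$, which for ap minimal WFA is decidable as noted in the remark after Lemma~\ref{zero-ap-lemma}), and answer ``$\fa=\fb$'' exactly in that case. The main obstacle is the bookkeeping around the phrase ``at least one of $\fa,\fb$ is everywhere defined'': I need to make sure that this one-sided hypothesis, combined with ap-ness, really does let me conclude that $\fc$ is everywhere defined whenever the answer to the equivalence question is ``yes'', and that when it is everywhere defined but nonzero we genuinely get a witness $w$ with $\fc(w)\neq 0$ — this is exactly what Lemma~\ref{zero-ap-lemma}'s argument (max over the children stays $\geq$ the parent value) provides, so I expect the hypothesis to be used only to guarantee the domain of $\fc$ is all of $\So$ in the affirmative case, and the rest is routine.
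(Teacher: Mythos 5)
Your proposal is correct and follows essentially the same route as the paper: build the ap difference automaton computing $F_{\mathcal{A}}-F_{\mathcal{B}}$, minimize it, and test whether the result is the trivial zero-state automaton, with Lemma~\ref{zero-ap-lemma} guaranteeing that a nonzero ap word function cannot define the zero $\omega$-function and the everywhere-defined hypothesis ruling out spurious disagreement caused by undefined points. The extra detail you supply (the explicit direct-sum construction and the accounting of where the one-sidedness of the hypothesis enters) only fleshes out what the paper's terser proof leaves implicit.
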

\begin{proof}
To decide $\fa=\fb$, we construct ap automaton $\C$ computing the difference $\fa-\fb$ and then
minimize $\C$, obtaining some automaton $\D$. Minimization is effective by Lemma~\ref{find-minimal}.
Now from Lemma~\ref{zero-ap-lemma}, we get that $\fa-\fb=0$ iff $\D$ is the trivial
automaton. Note that $\fa-\fb$ is not defined on those $w\in\So$ for which
exactly one of the functions $\fa$ and $\fb$ is undefined. Hence $\fa-\fb=0$
is equivalent to $\fa=\fb$.
\end{proof}
Note that the process in the previous proof fails if $\fa=\fb$ is not everywhere defined:
in this case also $\fa-\fb$ will be undefined for some $w\in\So$, yielding
(wrongly) a negative answer.

In contrast to Theorem~\ref{ap-equivalence}, if {\sc Matrix Product Stability}
is undecidable then the analogous question is undecidable without the ap assumption.
In this case one cannot even determine if a given non-ap WFA defines the zero-function.

\begin{theorem}
\label{is-zero}
{\sc Matrix Product Stability} is algorithmically reducible to the problem
of determining if $\fa=0$ for a given WFA $\A$.
\end{theorem}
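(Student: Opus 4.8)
The plan is to encode a given finite matrix set $B=\{A_a\mid a\in\Sigma\}$ of $n\times n$ matrices into a WFA $\A$ whose word function computes the squared Frobenius norm $\|A_v\|_F^2$ of the partial products, so that the $\omega$-function $\fa$ is the everywhere-defined zero function exactly when $B$ is stable. The gadget I would use is the Kronecker (tensor) square: let $\A$ have $n^2$ states, with the coordinates of $\mathbb R^{n^2}$ indexed by pairs $(i,j)$, $1\le i,j\le n$; take transition matrices $C_a=A_a\otimes A_a$, and let $I\in\mathbb R^{1\times n^2}$ and $F\in\mathbb R^{n^2\times1}$ be the vectors with $I=F^{T}$ whose $(i,j)$-coordinate is $1$ when $i=j$ and $0$ otherwise. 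The map $B\mapsto\A$ is plainly computable and preserves rationality of the entries.

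First I would carry out the linear-algebra bookkeeping. The identity $(A_uA_v)\otimes(A_uA_v)=(A_u\otimes A_u)(A_v\otimes A_v)$ shows $C_v=A_v\otimes A_v$ for every $v\in\Ss$, whence a one-line computation gives $\F(v)=IC_vF=\|A_v\|_F^2\ge 0$, the sum of squares of the entries of $A_v$. Since all matrix norms on $n\times n$ matrices are equivalent, for each fixed $w\in\So$ the scalar sequence $\F(pref_k(w))$ tends to $0$ as $k\to\infty$ if and only if the matrix sequence $A_{pref_k(w)}$ tends to the zero matrix.

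Then I would verify both directions of ``$B$ stable $\iff \fa=0$''. If $B$ is stable, then $A_{pref_k(w)}\to 0$ for every $w$ (indeed uniformly, by Lemma~\ref{uniform}), so $\F(pref_k(w))\to 0$ for every $w$, and hence $\fa$ is defined everywhere with $\fa=0$. Conversely, if $B$ is not stable, then for some $w\in\So$ the products $A_{pref_k(w)}$ do not converge to $0$, so $\F(pref_k(w))=\|A_{pref_k(w)}\|_F^2$ does not converge to $0$; therefore $\fa(w)$ is either undefined or strictly positive, and in either case $\fa\ne 0$. This exhibits the desired algorithmic reduction from {\sc Matrix Product Stability} to the problem of deciding whether $\fa=0$.

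I do not expect a real obstacle. The only substantive idea is the passage from ``all entries of all infinite products vanish'' — a matrix-valued condition that a single scalar-valued WFA cannot observe directly — to the scalar quantity $\|A_v\|_F^2$ via the tensor square; after that the argument is routine. The one point that deserves care is the intended reading of ``$\fa=0$'': it must mean that $\fa$ is total and identically zero, so in the non-stable case I must note that the mere failure of $\F(pref_k(w))$ to converge to $0$ already prevents $\fa$ from being the zero function, whether or not that limit exists.
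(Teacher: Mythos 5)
Your reduction is correct, but it takes a different route from the paper's. The paper constructs $n^2$ automata $\A_{ij}$, each with the original transition matrices $A_a$ but with initial distribution $I_i$ and final distribution $I_j^T$, so that $\A_{ij}$ tracks the single entry $(A_v)_{ij}$; stability of $B$ then holds iff \emph{all} of these automata define the zero $\omega$-function, which is a reduction making $n^2$ oracle queries. You instead build a single $n^2$-state automaton via the Kronecker square $C_a=A_a\otimes A_a$ with $I=F^T$ the indicator of the diagonal, so that $\F(v)=\|A_v\|_F^2$, and reduce to one query. The bookkeeping checks out: the mixed-product property gives $C_v=A_v\otimes A_v$, the contraction against the diagonal indicators does yield the squared Frobenius norm, and equivalence of norms gives the two-way implication. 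Your single-query (many-one style) reduction is formally a bit stronger than the paper's conjunctive multi-query reduction, at the cost of the tensor-square gadget and the nonnegativity argument; the paper's version is more elementary, needing no identity beyond $I_iA_vI_j^T=(A_v)_{ij}$. You are also right to flag, as the paper's proof leaves implicit, that ``$\fa=0$'' must be read as ``everywhere defined and identically zero,'' and your observation that nonnegativity of $\F$ forces any existing limit of a non-vanishing sequence to be strictly positive handles the non-stable case cleanly.
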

\begin{proof}
Given a set of matrices $B=\{A_a\,|\,a\in\Sigma\}$, we construct automata
$\A_{ij}$ with transition matrices $A_a$, initial distribution
$I_i$ and final distribution $I_j^T$ (where $I_1,\dots,I_n$ is a basis of
$\mathbb{R}^n$). Obviously, $B$ is stable iff all the
$\omega$-functions computed by $\A_{ij}$ are zero.
\end{proof}

We conjecture that Theorem~\ref{is-zero} holds even under the additional
assumption that $\fa$ is known to be everywhere defined and continuous, but we can
not offer a proof.

Recall that Theorem~\ref{redistribution} tells us that for every WFA computing
a continuous function there is an ap WFA that computes the same function. It
would be interesting to know whether this conversion can be done effectively.
One consequence of Theorems~\ref{ap-equivalence} and \ref{is-zero} is that, assuming {\sc Matrix Product
Stability} is undecidable, we cannot effectively convert a non-ap WFA into an
ap WFA with the same $\omega$-function.

In the following we reduce {\sc Matrix Product Stability} to the following decision problems:
\bigskip

\noindent
{\sc Ap-WFA convergence}:

\nobreak
{\bf Input:} An average preserving WFA $\A$.

\nobreak
{\bf Question:} Is $\fa$ everywhere defined?
\bigskip

\noindent
{\sc Ap-WFA continuity}:

\nobreak
{\bf Input:} An average preserving WFA $\A$.

\nobreak
{\bf Question:} Is $\fa$ everywhere continuous?
\bigskip

\noindent
{\sc Matrix Product Convergence}:

\nobreak
{\bf Input:} A finite set $\{A_a\,|\,a\in\Sigma\}$ of $n\times n$ matrices.

\nobreak
{\bf Question:} Is $\{A_a\,|\,a\in\Sigma\}$ an RCP set?
\bigskip

\noindent
{\sc Matrix Product Continuity}:

\nobreak
{\bf Input:} A finite set $\{A_a\,|\,a\in\Sigma\}$ of $n\times n$ matrices.

\nobreak
{\bf Question:} Is $\{A_a\,|\,a\in\Sigma\}$ a continuous RCP set?
\bigskip

To simplify our constructions, we use the fact that the problems {\sc Matrix Product Stability}, {\sc
Matrix Product Convergence} and {\sc Matrix Product Continuity} are as hard for
a pair of matrices as they are for any finite number of matrices, see
\cite{unsolvedblondel}. The elementary proof for {\sc Matrix Product Stability} we present is based
on~\cite{Blondel2}.

\begin{lemma}
\label{reductiontosizetwo}
The {\sc Matrix Product Stability} problem for a set
$\{A_1,A_2,\dots,A_m\}$ of matrices, is algorithmically reducible to {\sc
Matrix Product Stability} for a pair of matrices $\{B_0, B_1\}$.
\end{lemma}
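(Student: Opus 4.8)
The plan is to encode an arbitrary finite set $\{A_1,\dots,A_m\}$ of $n\times n$ matrices into a pair $\{B_0,B_1\}$ of larger matrices so that stability of the pair is equivalent to stability of the original family. The standard trick (following \cite{Blondel2}) is to let $B_1$ be a ``selector'' that cyclically permutes $m$ blocks while $B_0$ applies the matrices $A_i$ inside the appropriate block. Concretely, I would work in $\mathbb{R}^{nm}$, thought of as $m$ copies of $\mathbb{R}^n$, and take $B_1$ to be (a scaled version of) the cyclic block-shift permutation matrix, and $B_0$ to be the block-diagonal matrix $\mathrm{diag}(A_1,A_2,\dots,A_m)$, possibly also rescaled. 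Then a product of $B_0$'s and $B_1$'s, read appropriately, realizes (a scalar multiple of) an arbitrary product $A_{i_1}A_{i_2}\cdots A_{i_k}$ sitting in one block, with zeros elsewhere, plus ``junk'' terms that one must control.

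The key steps, in order, are: (i) fix the block structure and write down $B_0$ and $B_1$ explicitly; (ii) compute a general finite product of $B_0$'s and $B_1$'s and show that after grouping consecutive $B_0$'s between shifts, every such product is, up to a fixed scalar depending only on the word length, block-supported with entries that are products of the $A_i$'s in some order; (iii) deduce that if $\{A_1,\dots,A_m\}$ is stable then so is $\{B_0,B_1\}$, using Lemma~\ref{uniform} to get uniform decay of the relevant $A$-products and absorbing the bounded permutation/scalar factors; (iv) conversely, show that every product $A_{i_1}\cdots A_{i_k}$ appears (inside some block of) some product of $B_0$'s and $B_1$'s of comparable length, so stability of $\{B_0,B_1\}$ forces $A_w = 0$ for all $w\in\So$ over the alphabet $\{1,\dots,m\}$, i.e.\ $\{A_1,\dots,A_m\}$ is stable. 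Finally (v) note the reduction is clearly effective, since $B_0,B_1$ are built from the $A_i$ by copying entries and inserting $0$'s and $1$'s (or fixed rationals).

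The main obstacle I expect is step (ii)–(iii): a product of $B_0$'s and $B_1$'s is \emph{not} literally a single product of $A_i$'s, because $B_0$ is block-diagonal with \emph{different} blocks, so a run of $B_0$'s of length $\ell$ multiplies the $i$-th block by $A_i^\ell$ rather than by distinct matrices, and the interleaving with shifts must be arranged so that the composed map, restricted to one block, becomes exactly $A_{i_1}A_{i_2}\cdots$. The usual fix is to choose the length of each $B_0$-run to be $1$ — i.e.\ only consider words of the form $(B_0B_1)^k$ or similar — but then one must still argue about \emph{all} products, including those with longer $B_0$-runs and those not ending on a block boundary; here one uses that $A_i^\ell$ for large $\ell$ is small when the set is stable (again Lemma~\ref{uniform} applied to the singleton $\{A_i\}$, or to the whole set), and that the extra factors are product-bounded. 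One also has to be slightly careful that $B_1$ alone is stable, which is why a scaling factor $<1$ may be attached to $B_1$ (and compensated, or simply tolerated since scaling all matrices by a constant $<1$ preserves the reduction in one direction and can be handled in the other by noting stability is unaffected by the bookkeeping scalar). Getting the bookkeeping of scalars and the ``all products'' quantifier exactly right is the delicate part; the algebra of the block matrices themselves is routine.
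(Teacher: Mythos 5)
There is a genuine gap, and it is exactly at the point you flag as ``the delicate part'': the bookkeeping scalar. With $B_0=\mathrm{diag}(A_1,\dots,A_m)$ and $B_1=cP$ for a cyclic block shift $P$, you must take $|c|<1$ (otherwise $B_1^k=c^kP^k$ does not tend to $0$ and $\{B_0,B_1\}$ is never stable, killing the forward direction). But once $|c|<1$, the converse direction fails: a product realizing $A_{i_1}\cdots A_{i_k}$ needs roughly one shift per factor, so what actually appears in the blocks is $c^K A_{i_1}\cdots A_{i_k}$ with $K\to\infty$, and the vanishing of that tells you nothing about $A_{i_1}\cdots A_{i_k}$. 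Your remark that ``stability is unaffected by the bookkeeping scalar'' is false. Concretely, for any fixed $c<1$ pick $t$ with $1<t<1/c$ and set
\[
A_1=\begin{pmatrix}0&t\\0&0\end{pmatrix},\qquad A_2=\begin{pmatrix}0&0\\t&0\end{pmatrix}.
\]
Then $\|(A_1A_2)^k\|=t^{2k}\to\infty$, so $\{A_1,A_2\}$ is not stable. Yet every word in $\{B_0,B_1\}$ with $K$ occurrences of $B_1$ evaluates to $c^K$ times an alternating product of $P$ and blocks $\mathrm{diag}(A_1^{\ell},A_2^{\ell})$, which vanish for $\ell\ge 2$ and have norm $t$ for $\ell=1$; hence every nonzero finite product has norm at most $t\,(tc)^{K}$, every infinite product is $0$, and $\{B_0,B_1\}$ \emph{is} stable. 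So your reduction returns false positives. The obstruction is structural: an invertible selector forces a contracting scalar, and that scalar cannot be compensated uniformly because the number of shifts between consecutive $A$-applications varies with the indices $i_j$.

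The paper's construction avoids this entirely by reversing the roles and using a \emph{nilpotent} shift: $B_0$ is the block matrix with identity blocks on the superdiagonal (so $B_0^m=0$, which handles $0^\omega$ with no scaling), and $B_1$ has $A_1,\dots,A_m$ stacked in its first block column, i.e.\ an edge from block $i$ back to block $1$ with weight $A_i$. Reading a word $0^{i_1-1}1\,0^{i_2-1}1\cdots$ from block $1$ then produces \emph{exactly} $A_{i_1}A_{i_2}\cdots$ in a block, with coefficient $1$ and no junk terms (the underlying graph has at most one path per label word between any two blocks), so blocks of infinite $B$-products are precisely infinite $A$-products or $0$, and the equivalence of the two stability questions is immediate in both directions. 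If you want to salvage your write-up, replace the permutation/diagonal pair by this shift/stacked-column pair; the rest of your outline (effectiveness, use of Lemma~\ref{uniform} for uniformity) then goes through.
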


\begin{proof}
For given $m$ matrices $A_1,A_2,\dots ,A_m$ of size $n\times n$ we construct
two matrices of size $mn \times mn$ that in the block form are
$$
\begin{array}{rclcrcl}
B_0 &=&
\left(
\begin{array}{c|c}
\mathbf{0} & E_{m(n-1)} \\
\hline
\mathbf{0} & \mathbf{0}
\end{array}\right)
& \hspace*{5mm} &
B_1 &=&
\left(
\begin{array}{c|ccc}
A_1 & \mathbf{0}&\dots&\mathbf{0}\\
\hline
\vdots&\vdots&&\vdots\\
\hline
A_{m} & \mathbf{0}&\dots&\mathbf{0}
\end{array}\right).
\end{array}
$$
Here $E_{m(n-1)} $ is the $m(n-1)\times m(n-1)$ identity matrix, and $\mathbf{0}$ indicates the
zero matrix of appropriate size.

In the same way that we produce graphs of WFA, we construct the graph
in Figure~\ref{matrixgraph}. (We are actually constructing a WFA over the ring
of $n\times n$ matrices.)

Consider now the matrix $B_v$ where $v\in\{0,1\}^*$. This matrix can be divided
into $m\times m$ blocks of size $n\times n$. To calculate the value of
the block at the position $i,j$, we add up all the products along
all paths labeled by $v$ from vertex $i$ to vertex $j$. Due to the shape of the
graph in Figure~\ref{matrixgraph}, there will be always at most one such
path for each $i,j,v$ and each $B_v$ will have at most $m$ nonzero $n\times
n$ blocks.
\picture{Directed graph whose paths correspond to blocks
in the products of $B_0$ and $B_1$ in the proof of
Lemma~\ref{reductiontosizetwo}.}{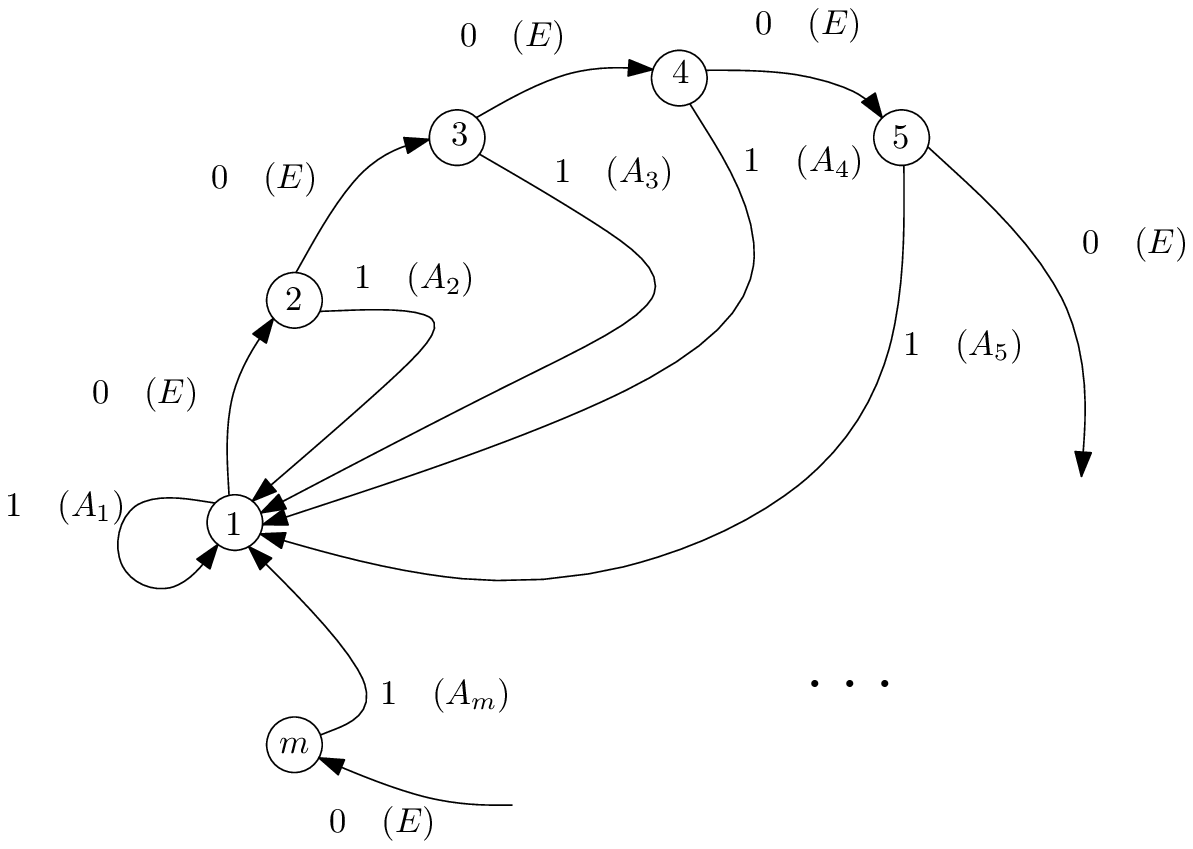}{matrixgraph}

Moreover, it is easy to see that the blocks in infinite products are exactly
all the infinite products of matrices $A_i$ (or zero matrices), so it is clear that $\{B_0, B_1\}$ is
stable if and only if $\{A_1,A_2,\dots,A_m\}$ is
stable.
\end{proof}

\begin{theorem}
\label{reduction1}
{\sc Matrix Product Stability} is algorithmically reducible to problems
{\sc Ap-WFA convergence},
{\sc Ap-WFA continuity},
{\sc Matrix Product Convergence} and
{\sc Matrix Product Continuity}
\end{theorem}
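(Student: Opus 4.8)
The plan is to reduce {\sc Matrix Product Stability} to each of the four problems using two small gadgets. For {\sc Matrix Product Convergence} and {\sc Matrix Product Continuity} I would send a finite set $\{B_a\mid a\in\Sigma\}$ of $N\times N$ matrices to the set $\{A_a\mid a\in\Sigma\}$ of $2N\times 2N$ matrices with $A_a=\begin{pmatrix}\mathbf{0}&B_a\\ B_a&\mathbf{0}\end{pmatrix}$. An immediate induction shows that $A_v$ has both diagonal blocks equal to $B_v$ and zero off-diagonal blocks when $|v|$ is even, and both off-diagonal blocks equal to $B_v$ and zero diagonal blocks when $|v|$ is odd. Since the prefixes of an infinite word $w$ alternate in parity, existence of $A_w=\lim_k A_{pref_k(w)}$ forces the even-length and the odd-length subsequences of $\bigl(B_{pref_k(w)}\bigr)_k$ to converge, to limits $M$ and $M'$ say, with $\begin{pmatrix}M&\mathbf 0\\ \mathbf 0&M\end{pmatrix}=\begin{pmatrix}\mathbf 0&M'\\ M'&\mathbf 0\end{pmatrix}$, hence $M=M'=\mathbf 0$ and $B_{pref_k(w)}\to\mathbf 0$; conversely $B_{pref_k(w)}\to\mathbf 0$ gives $A_{pref_k(w)}\to\mathbf 0$. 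Thus $\{A_a\}$ is RCP iff $B_{pref_k(w)}\to\mathbf 0$ for every $w$, i.e.\ iff $\{B_a\}$ is stable; in that case $A_w\equiv\mathbf 0$, so $\{A_a\}$ is stable and a fortiori continuous RCP. Hence the statements ``$\{A_a\}$ RCP'', ``$\{A_a\}$ continuous RCP'' and ``$\{B_a\}$ stable'' are equivalent, giving both reductions at once.

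For {\sc Ap-WFA convergence} and {\sc Ap-WFA continuity} I would use automata of the canonical form $(\ref{preform})$. Given $\{B_a\mid a\in\Sigma\}$ of size $N$, for each ordered pair $(p,q)\in\{1,\dots,N\}^2$ build the WFA $\A_{pq}$ with $N+1$ states over the doubled alphabet $\Sigma'=\Sigma\cup\overline\Sigma$ (a disjoint copy), with $F$ the last unit column vector, $I$ the $q$-th unit row vector, and transition matrices $A_a=\begin{pmatrix}B_a&\mathbf e_p\\ \mathbf 0&1\end{pmatrix}$ for $a\in\Sigma$ and $A_{\overline a}=\begin{pmatrix}B_a&-\mathbf e_p\\ \mathbf 0&1\end{pmatrix}$ for $\overline a\in\overline\Sigma$. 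The opposite signs $\pm\mathbf e_p$ make $\sum_{\ell\in\Sigma'}A_\ell F=|\Sigma'|\,F$, so $\A_{pq}$ is average preserving, and the recursion $C_{uv}=B_uC_v+C_uD_v$ of Lemma~\ref{matrix2} (with all $D_\ell=1$) gives, for $v=\ell_1\cdots\ell_m\in(\Sigma')^{*}$ with base letters $a_j$ and signs $\sigma_j=\pm1$,
\[
F_{\A_{pq}}(v)=\sum_{j=1}^{m}\sigma_j\,(B_{a_1\cdots a_{j-1}})_{qp}.
\]
If $\{B_a\}$ is stable, then the set of upper-left blocks of the $A_\ell$ is $\{B_a\}$ itself, still stable, so by case (3) of Lemma~\ref{matrix2} (with $D_\ell=1$; the singleton $\{1\}$ is continuous RCP) the transition matrices of $\A_{pq}$ form a continuous RCP set, whence $\A_{pq}$ is uniformly convergent (Theorem~\ref{WFA-RCP}) and $f_{\A_{pq}}$ is everywhere defined and continuous (Lemma~\ref{remUCC}), for all $(p,q)$.

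For the converse assume $\{B_a\}$ is not stable, so there are $w=a_1a_2\cdots\in\So$ and $\varepsilon>0$ with $\|B_{pref_k(w)}\|\ge\varepsilon$ for infinitely many $k$; each such $B_{pref_k(w)}$ has an entry of modulus $\ge\varepsilon/N$, so by the pigeonhole principle some fixed position $(q,p)$ satisfies $|(B_{pref_k(w)})_{qp}|\ge\varepsilon/N$ for infinitely many $k$. Lift $w$ to the word $w'\in(\Sigma')^{\omega}$ projecting to $w$ in which each sign is chosen so that $\sigma_j(B_{a_1\cdots a_{j-1}})_{qp}=|(B_{a_1\cdots a_{j-1}})_{qp}|$; then the numbers $F_{\A_{pq}}(pref_m(w'))$ are nondecreasing in $m$ with infinitely many increments $\ge\varepsilon/N$, hence tend to $+\infty$, so $f_{\A_{pq}}(w')$ is undefined and $f_{\A_{pq}}$ is therefore neither everywhere defined nor everywhere continuous. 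Combining the two directions, $\{B_a\}$ is stable iff all $N^2$ automata $\A_{pq}$ have everywhere-defined $\omega$-function, iff all of them have everywhere-continuous $\omega$-function; so querying the relevant oracle on the $N^2$ instances $\A_{pq}$ decides {\sc Matrix Product Stability}.

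I expect this last converse to be the main obstacle: non-stability has to be turned into a genuine point of divergence of some concrete $\A_{pq}$ without our being able to compute a witnessing word, and the entrywise pigeonhole together with the per-letter sign freedom in $\Sigma'$ --- which is also what renders the automata average preserving --- is precisely what supplies it. (If one wishes the inputs to stay binary throughout, Lemma~\ref{reductiontosizetwo} allows assuming $|\Sigma|=2$ from the start, though none of the four reductions above needs this.)
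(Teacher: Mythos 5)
Your proposal is correct, and it deviates from the paper's proof in two interesting ways. The paper uses a \emph{single} family of gadgets: after invoking Lemma~\ref{reductiontosizetwo} to assume $\Sigma=\{0,1\}$, it builds ap WFA $\A_{ij}$ with transition matrices in the form (\ref{preform}) whose last columns carry $+1$ (for letter $0$) and $-1$ (for letter $1$) in row $j$, and then reads off all four reductions from these same automata: stability of $B$ gives a continuous RCP transition set by Lemma~\ref{matrix2}(3), hence continuity of $\fa_{ij}$, while non-stability makes the increments $\pm(B_{pref_k(w)})_{ij}$ of $\Fa_{ij}(pref_k(w))$ fail the Cauchy condition, so $\fa_{ij}(w)$ is undefined and, by Theorem~\ref{WFA-RCP}, the transition set is not even RCP. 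Your WFA gadget is essentially the same telescoping construction, but you avoid the binary-alphabet reduction by doubling the alphabet to get the ap cancellation, and in the non-stable case you choose the signs adaptively along the lifted word $w'$ to make the partial sums monotone and unbounded --- a slightly more hands-on divergence argument than the paper's ``increments do not tend to zero, hence not Cauchy,'' but equally valid. Your reduction for the two matrix-product problems, by contrast, is genuinely different and arguably cleaner: the block-swap matrices $A_a=\bigl(\begin{smallmatrix}\mathbf{0}&B_a\\ B_a&\mathbf{0}\end{smallmatrix}\bigr)$ force any convergent product to have limit zero via the parity argument, collapsing ``RCP,'' ``continuous RCP'' and ``stable'' into one condition without needing Lemma~\ref{matrix2} or Theorem~\ref{WFA-RCP} at all; the price is that this gadget says nothing about WFA, whereas the paper's single construction serves all four target problems at once and also yields the recursive-inseparability statement of Remark~\ref{inseparable}. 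Your adaptive sign choice in fact yields the same inseparability conclusion (continuous versus not-everywhere-defined), so nothing is lost.
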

\begin{proof}
Let $B=\{B_a\,|\,a\in\Sigma\}$  be a set of matrices whose stability we want to
decide. Thanks to Lemma~\ref{reductiontosizetwo}, we can assume
$\Sigma=\{0,1\}$.

We create
several ap-automata $\A_{ij}$ such that:
\begin{itemize}
\item if $B$ is stable then  the function ${\fa}_{ij}$ is
continuous and the matrices of $\A_{ij}$ form a continuous RCP set for each $i,j$, while
\item if $B$ is not stable then for some $i,j$ the function ${\fa}_{ij}$ is
not everywhere defined and the transition matrices of $\A_{ij}$ are not an RCP set.
\end{itemize}
The result then follows directly.

We choose the transition matrices for $\A_{ij}$ as follows:
\[
A_0=\begin{pmatrix}
B_0&\vline& \mathbf{b}_{0,j}\\
\hline
\mathbf{0}&\vline& 1\\
\end{pmatrix},\quad
A_1=\begin{pmatrix}
B_1&\vline& \mathbf{b}_{1,j}\\
\hline
\mathbf{0}&\vline& 1\\
\end{pmatrix}.
\]
Here the column vectors $\mathbf{b}_{0,j}$ and $\mathbf{b}_{1,j}$ have all entries zero except for the $j$-th.
The $j$-th entry of $\mathbf{b}_{0,j}$ is $1$ while the $j$-th entry of
$\mathbf{b}_{1,j}$ is $-1$.

The initial distribution of $\A_{ij}$ is $I_i=(0,\dots,0,1,0,\dots,0)$ with
one on the $i$-th place. The final distribution is the same for all
automata: $F=(0,\dots,0,1)^T$.

First observe that
\[
(A_0+A_1) F=\begin{pmatrix}
\ds B_0+B_1&\vline& \mathbf{0}\\
\hline
\mathbf{0}&\vline& 2\\
\end{pmatrix}
\cdot
\begin{pmatrix}
0\\
\vdots\\
0\\
1\\
\end{pmatrix}
=\begin{pmatrix}
0\\
\vdots\\
0\\
2\\
\end{pmatrix}=|\Sigma|\cdot F,
\]
so each $\A_{ij}$ is ap. If $B$ is a stable set then from the case (3) of
Lemma~\ref{matrix2}
we obtain that for all $i,j$ the set $\{A_a\,|\,a\in\Sigma\}$ is a continuous RCP set, and therefore
all ${\fa}_{ij}$ are continuous.

Assume then that $B$ is not stable, so
for some $w$ the limit $\ds\lim_{k\to \infty} B_{pref_k(w)}$ is
not zero or does not exist. Then there exists a pair $(i,j)$ such that the
sequence 
$\left\{\left({B_{pref_k(w)}}\right)_{ij}\right\}_{k=1}^\infty$ does not converge to zero.
Consider the value of $\Faij(pref_k(w))$. The product of transition matrices will
be
\[A_{pref_k(w)}=
\begin{pmatrix}
 B_{pref_k(w)} & \vline & \mathbf{b}_k \\ \hline
  \mathbf{0} & \vline & 1
\end{pmatrix},
\]
where $\mathbf{b}_k$ is some column vector. The value of $\Faij(pref_k(w))$ is
equal to $I_i A_{pref_k(w)} F$, which, after a short calculation, turns out to be
the $i$-th element of $\mathbf{b}_k$.

Moreover, it is straightforward to verify that the vectors $\mathbf{b}_k$ satisfy the equation
$\mathbf{b}_{k+1}=\mathbf{b}_k+B_{pref_k(w)}\mathbf{b}_{w_{k+1},j}$. Taking
the $i$-th element of $\mathbf{b}_{k+1}$, we get the equation for
$\Faij(pref_k(w))$:
\[
\Faij(pref_{k+1}(w))=\Faij(pref_{k}(w))+c_{w_{k+1}}
\left({B_{pref_k(w)}}\right)_{ij},\]
where $c_{w_{k+1}}$ is the $j$-th element of $\mathbf{b}_{w_{k+1},j}$,
i.e. either 1 or $-1$. We obtain
\[
|\Faij(pref_{k+1}(w))-\Faij(pref_{k}(w))|=\left|\left({B_{pref_k(w)}}\right)_{ij}\right|.\]
Because the sequence
$\left\{\left({B_{pref_k(w)}}\right)_{ij}\right\}_{k=1}^\infty$ does not tend to zero, neither does
the difference $|\Faij(pref_{k+1}(w))-\Faij(pref_{k}(w))|$. But then
the sequence of values $\{{{\Fa}}_{ij}(pref_{k}(w))\}_{k=1}^\infty$ does not satisfy the Bolzano-Cauchy
condition and can not converge. Therefore, ${\fa}_{ij}(w)$ remains undefined.

By Theorem~\ref{WFA-RCP} the matrices of $\A_{ij}$ are then not an RCP set, which concludes the proof.
\end{proof}

\begin{remark}
\label{inseparable}
Note that the  proof showed, in fact, more: if {\sc Matrix Product Stability}
is undecidable then the continuous ap WFA  are recursively inseparable from
the ap WFA that are not everywhere defined. Recall that two disjoint sets $A,B$
are called recursively inseparable if there does not exist an algorithm that on input $x$
returns value 0 if $x\in A$, value 1 if $x\in B$ and may return either value if
$x\not\in A\cup B$. If membership in either $A$ or $B$ is decidable then clearly
$A$ and $B$ are not recursively inseparable, but the converse is not true. The reduction
in the previous proof always produced ap WFA whose $\omega$-function  is
either everywhere continuous, or not everywhere defined, so the recursive inseparability
follows directly.

Analogously, the proof shows that if {\sc Matrix Product Stability}
is undecidable then one cannot recursively separate  those finite matrix sets that
are continuously RCP  from those that are not RCP.
\end{remark}

Next we consider the implications if {\sc Matrix Product Stability} turns out to be decidable.

\begin{theorem}
\label{reduction2}
Problems {\sc Ap-WFA continuity} and
{\sc Matrix Product Continuity}
are algorithmically reducible to {\sc Matrix Product Stability}.
\end{theorem}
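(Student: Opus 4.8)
The plan is to handle the two problems in turn: first reduce {\sc Matrix Product Continuity} to {\sc Matrix Product Stability} directly, by turning the characterization in Theorem~\ref{Daubechies} into effective linear-algebra tests, and then reduce {\sc Ap-WFA continuity} to {\sc Matrix Product Continuity}. Composing the two reductions yields a reduction of {\sc Ap-WFA continuity} to {\sc Matrix Product Stability} as well, which is all that is claimed.

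For {\sc Matrix Product Continuity}, given $B=\{A_a\mid a\in\Sigma\}$ of $n\times n$ matrices, I verify condition~(2) of Theorem~\ref{Daubechies} step by step. First, for each $a$ I compute the left $1$-eigenspace $E_1(A_a)$ (the left null space of $A_a-E$) by Gaussian elimination over the rationals and check that all the $E_1(A_a)$ coincide; if not, $B$ is not continuous RCP and the answer is ``no''. Call the common space $E_1$ and put $d=\dim E_1$. Second, I check that $1$ is a semisimple eigenvalue of each $A_a$ --- this being the content of ``$E_1$ is simple for $A_a$'' --- which amounts to the rank identity $\mathrm{rank}(A_a-E)=\mathrm{rank}((A_a-E)^2)$; if this fails for some $a$, the answer is ``no''. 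Otherwise, conjugating all the matrices by a suitable rational invertible matrix --- which preserves the continuous RCP property --- I may assume that $E_1$ is spanned, as a space of row vectors, by $e_{n-d+1},\dots,e_n$. The equalities $e_iA_a=e_i$ for $n-d<i\le n$ then force the last $d$ rows of every $A_a$ to be $e_{n-d+1},\dots,e_n$, so
\[
A_a=\begin{pmatrix} B_a & C_a \\ \mathbf{0} & E \end{pmatrix}
\]
with $B_a$ of size $(n-d)\times(n-d)$. Taking $V=\langle e_1,\dots,e_{n-d}\rangle$, the operator $P_VA_aP_V$ restricts to $B_a$ on $V$ and to zero on $E_1$; since products of such operators are block-diagonal, $P_VBP_V$ is stable iff $\{B_a\mid a\in\Sigma\}$ is stable. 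I therefore finish with a single call to the {\sc Matrix Product Stability} oracle on $\{B_a\}$, answering ``yes'' exactly when it reports stability. Correctness is the equivalence (1) $\Leftrightarrow$ (2) of Theorem~\ref{Daubechies}, and by the further equivalence with~(3) the particular complement $V$ chosen does not matter.

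For {\sc Ap-WFA continuity}, given an ap WFA $\A$, I first minimize it via Lemma~\ref{find-minimal}, obtaining a minimal WFA $\A'$ with $\Fap=\Fa$, hence $\fap=\fa$; minimization preserves the ap property, so $\A'$ is still ap. Let $B'$ be the transition matrix set of $\A'$. I claim that $\fa$ is everywhere continuous iff $B'$ is a continuous RCP set. Indeed, if $\fap$ is continuous then $\A'$, being ap, is uniformly convergent by Lemma~\ref{sup}, and since $\A'$ is minimal the third statement of Theorem~\ref{WFA-RCP} gives that $B'$ is continuous RCP; conversely, if $B'$ is continuous RCP then $\A'$ is uniformly convergent by part~(2) of Theorem~\ref{WFA-RCP}, whence $\fap=\fa$ is continuous by Lemma~\ref{remUCC}. (The degenerate case $n=0$ causes no trouble: there $\fa\equiv 0$ is continuous and $B'$ passes the tests above vacuously.) So I simply run the {\sc Matrix Product Continuity} procedure above on $B'$ and return its answer.

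All the computations involved --- null spaces, the rank test, the change of basis, and the minimization --- are effective over the rationals by the standing arithmetic assumption, so the only genuine subtlety I foresee is getting the translation of Theorem~\ref{Daubechies}(2) into these effective tests exactly right: pinning down ``simple eigenspace'' as the rank identity above, and checking that the block reduction really turns $P_VBP_V$ into the set $\{B_a\}$ handed to the oracle. For the second reduction the one point to remember is that it is the minimality of $\A'$ (not of the input $\A$) that unlocks the third part of Theorem~\ref{WFA-RCP}, so the minimization step is indispensable rather than cosmetic.
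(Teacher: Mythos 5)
Your proposal is correct, but it routes both reductions differently from the paper. For {\sc Matrix Product Continuity} the paper simply cites \cite{Daubechies}; you instead unpack that citation into an explicit algorithm (compute and compare the left $1$-eigenspaces, test simplicity via $\mathrm{rank}(A_a-E)=\mathrm{rank}((A_a-E)^2)$, change basis to expose the block $\left(\begin{smallmatrix} B_a & C_a\\ \mathbf{0} & E\end{smallmatrix}\right)$, and query the stability oracle on $\{B_a\}$), correctly invoking the $(2)\Leftrightarrow(3)$ equivalence of Theorem~\ref{Daubechies} to justify fixing one complement $V$; the only soft spot is that the rank test presupposes the standard reading of ``simple eigenspace'' (equal algebraic and geometric multiplicity of the eigenvalue $1$), which the paper never spells out. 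For {\sc Ap-WFA continuity} the paper argues directly: minimize, dispose of the zero function, run the procedure of Lemma~\ref{matrix1} to find $I_c$ and the form (\ref{preform}), and test stability of the resulting $\{B_a\}$. You instead reduce to {\sc Matrix Product Continuity} via the equivalence ``$\fa$ continuous iff the minimized transition set is continuous RCP,'' proved from Lemma~\ref{sup} together with parts (2) and (3) of Theorem~\ref{WFA-RCP} and Lemma~\ref{remUCC}; this chaining is sound, and your observation that minimality of $\A'$ (not of the input) is what activates Theorem~\ref{WFA-RCP}(3) is exactly right. The trade-off: your approach unifies the two reductions and makes the imported Daubechies result self-contained, at the cost of handling a possibly higher-dimensional $E_1$; the paper's approach stays within its own machinery (Corollary~\ref{cons}, Lemma~\ref{matrix1}), where $E_1$ is automatically one-dimensional by Lemma~\ref{onedim}. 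The paper itself acknowledges your alternative in the remark following Corollary~\ref{matrix-reloaded}.
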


\begin{proof}
The reduction from {\sc Matrix Product Continuity} to {\sc Matrix Product Stability}
was proved in~\cite{Daubechies}. Let us prove the reduction from {\sc Ap-WFA continuity}, so
let $\A$ be a given ap automaton whose continuity we want to determine.

We begin by minimizing $\A$. If the resulting automaton computes the zero
function, we are done. Otherwise, we run the procedure from Lemma~\ref{matrix1}
to obtain the form (\ref{preform}) of transition matrices. If any step of the
algorithm fails (that is, nontrivial $I_c$ does not exist), $\A$ can not define a continuous
function. Otherwise, $\fa$ is continuous iff $\{B_a\,|\,a\in\Sigma\}$ in (\ref{preform})
is stable.
\end{proof}

From Theorems~\ref{reduction1} and \ref{reduction2} we conclude that decision problems
{\sc Matrix Product Stability}, {\sc Ap-WFA continuity} and
{\sc Matrix Product Continuity} are computationally equivalent.

If we drop the requirement that the WFA is ap, we can make the following observation:

\begin{theorem}
\label{reduction3}
{\sc Matrix Product Convergence} is algorithmically reducible to the problem of
determining if a given WFA is everywhere defined.
\end{theorem}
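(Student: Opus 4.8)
The plan is to reproduce, almost verbatim, the construction used in the proof of Theorem~\ref{is-zero}. Given a finite set $B=\{A_a\,|\,a\in\Sigma\}$ of $n\times n$ matrices whose RCP property we wish to decide, I would form for each pair $1\le i,j\le n$ the WFA $\A_{ij}$ whose transition matrices are the $A_a$, whose initial distribution is the $i$-th canonical basis vector $I_i$, and whose final distribution is $I_j^T$. Then for every $v\in\Ss$ we have $\Faij(v)=I_iA_vI_j^T=(A_v)_{ij}$, and therefore for $w\in\So$ the value $\faij(w)=\lim_{k\to\infty}(A_{pref_k(w)})_{ij}$ is precisely the $(i,j)$-entry of $\lim_{k\to\infty}A_{pref_k(w)}$, whenever this limit exists.

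Next I would establish the equivalence: $B$ is an RCP set if and only if $\faij$ is everywhere defined for every pair $(i,j)$. For the forward direction, if $B$ is RCP then $A_w=\lim_{k\to\infty}A_{pref_k(w)}$ exists for all $w\in\So$, so in particular every entry sequence converges and each $\faij$ is everywhere defined. For the converse, if each $\faij$ is everywhere defined then, for each fixed $w$, all $n^2$ entry sequences $\{(A_{pref_k(w)})_{ij}\}_{k}$ converge; since a sequence of matrices converges iff it converges entrywise, $A_w$ exists, and as $w$ was arbitrary, $B$ is RCP. This yields the desired algorithmic reduction: to decide whether $B$ is RCP, query for each of the finitely many automata $\A_{ij}$ whether it is everywhere defined, and answer affirmatively exactly when all the answers are affirmative.

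I do not expect a genuine obstacle here; the argument is essentially the same as for Theorem~\ref{is-zero}, with ``$\fa=0$'' replaced by ``$\fa$ everywhere defined'' and ``$A_w=0$'' replaced by ``$A_w$ exists''. The only point worth stating carefully is that ``algorithmically reducible'' is used here (as in Theorem~\ref{is-zero}) in the sense that finitely many oracle calls to the target problem suffice. If one insisted on a single instance, one could attempt to merge the $\A_{ij}$ into one block-diagonal WFA, but then one would have to rule out cancellation between distinct blocks, which is an unnecessary complication; the family-of-instances formulation is cleaner and is all that the statement requires.
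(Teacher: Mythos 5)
Your proposal is correct and is exactly the paper's argument: the paper's proof of this theorem consists of the single sentence ``Use the same reduction as in the proof of Theorem~\ref{is-zero},'' i.e.\ the same automata $\A_{ij}$ with initial distribution $I_i$ and final distribution $I_j^T$, together with the observation that $B$ is RCP iff every $\faij$ is everywhere defined. Your added remark about entrywise convergence and the finitely-many-oracle-calls reading of ``reducible'' just makes explicit what the paper leaves implicit.
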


\begin{proof}
Use the same reduction as in the proof of Theorem~\ref{is-zero}.
\end{proof}

\section{Real functions defined by WFA}
\label{realfunctionsection}

Let $\Sigma=\{0,1\}$ be the binary alphabet and $\A$ a WFA over $\Sigma$. Then we
can use $\fa$ to define the real function
$\fah:[0,1)\rightarrow\mathbb{R}$ via the binary addressing scheme on the half-open interval $[0,1)$.
For $w\in\So$ denote by $num(w)$
the real number $$num(w)=\sum _{i=1} ^{\infty}w_i2^{-i}.$$
Let $\Omega=\Sow$. It is easy to see that by taking $num_{|\Omega}$, we obtain a one-to-one
correspondence between words of $\Omega$ and numbers in the interval $[0,1)$.
Denote $bin$ the inverse mapping to $num_{|\Omega}$, i.e.
\[\forall w\in\Omega,\,bin(x)=w \Longleftrightarrow num(w)=x.\]
We emphasize that the correspondence is between sets
$[0,1)$ and $\Omega$, not $[0,1)$ and $\So$.
A point $x\in[0,1)$ with a word presentation of the form $bin(x)=v0^{\omega}$
for some $v\in\Ss$ is called \emph{dyadic}. Points without such a presentation
are \emph{non-dyadic}.

Let $f$ be a (partial) function from $\So$ to $\mathbb R$. Then we define
the corresponding (partial) real function $\fh:[0,1)\rightarrow\mathbb{R}$ by:
\[\fh(x)=f(bin(x)).\]
As usual, if $f(bin(x))$ is not defined then $\fh(x)$ remains undefined.

\subsection{Continuity of real functions defined by WFA}
We will call the real function $\fh$ \emph{continuous} resp. \emph{uniformly
continuous} if it is continuous resp. uniformly continuous in the whole $[0,1)$. Note that $\fh$ is
uniformly continuous iff it can be extended to a continuous function on the
whole closed interval $[0,1]$.

The following two examples show that the function
$\fah$ can be continuous without being uniformly continuous: in these examples the left limit
$\ds\lim_{x\to1_{-}}\fah(x)$ does not exist.

\begin{example}\label{not-converge}
The ap WFA in Figure~\ref{not-converge-figure} computes a piecewise linear function
$\fah:[0,1)\rightarrow\mathbb{R}$ that
does not have the left limit at point $1$ (see its graph in Figure~\ref{n-c-graph}). The $\omega$-function
$\fa$ is everywhere defined, but  the convergence at point $1^\omega$ is not uniform. Note that
$\fa(1^\omega)=1/2$. Function
$\fa$ is continuous at all points except $1^\omega$.

\picture{Automaton from Example~\ref{not-converge}}{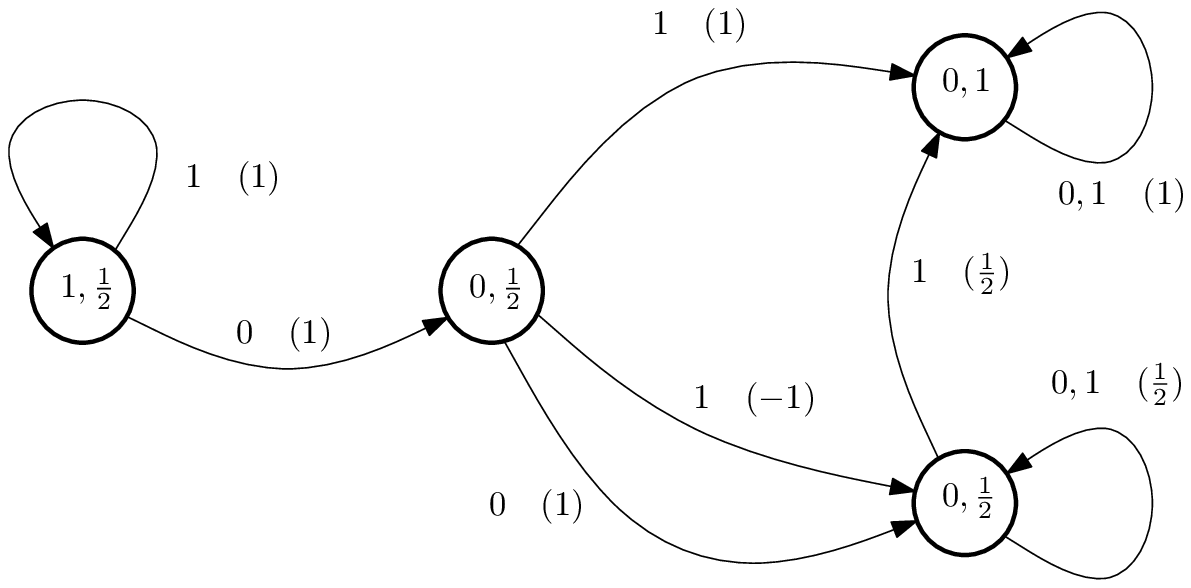}{not-converge-figure}

\picture{Graph from Example~\ref{not-converge}}{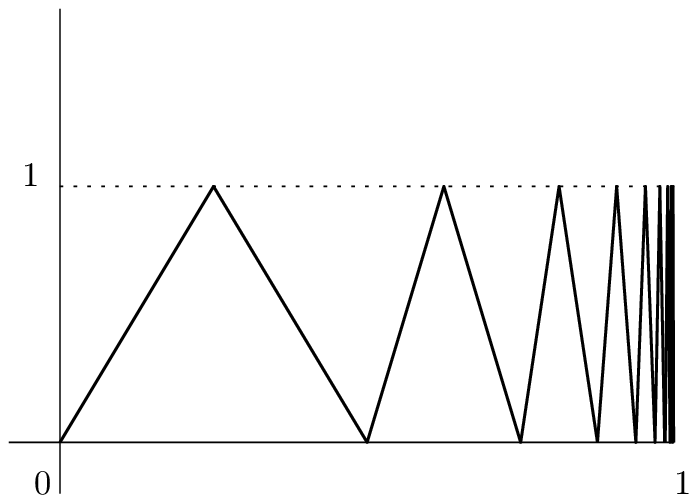}{n-c-graph}
\end{example}

\begin{example}\label{blow-up}
The ap WFA in Figure~\ref{blow-up-figure} computes a piecewise linear function that maps
$1-1/{2^n} \mapsto 2^n-1$ for $n\in\mathbb N$. See the graph in Figure~\ref{blow-up-graph}.
Obviously, $\ds\lim_{x\to1_{-}}f(x)=\infty$.  The $\omega$-function
$\fa$ is not defined at point $1^\omega$.
\picture{Automaton from Example~\ref{blow-up}}{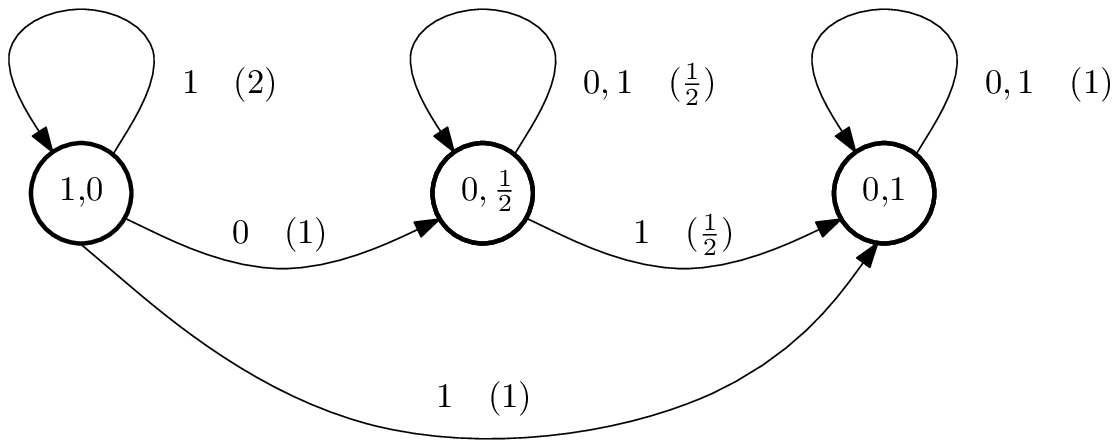}{blow-up-figure}
\picture{Graph from Example~\ref{blow-up}}{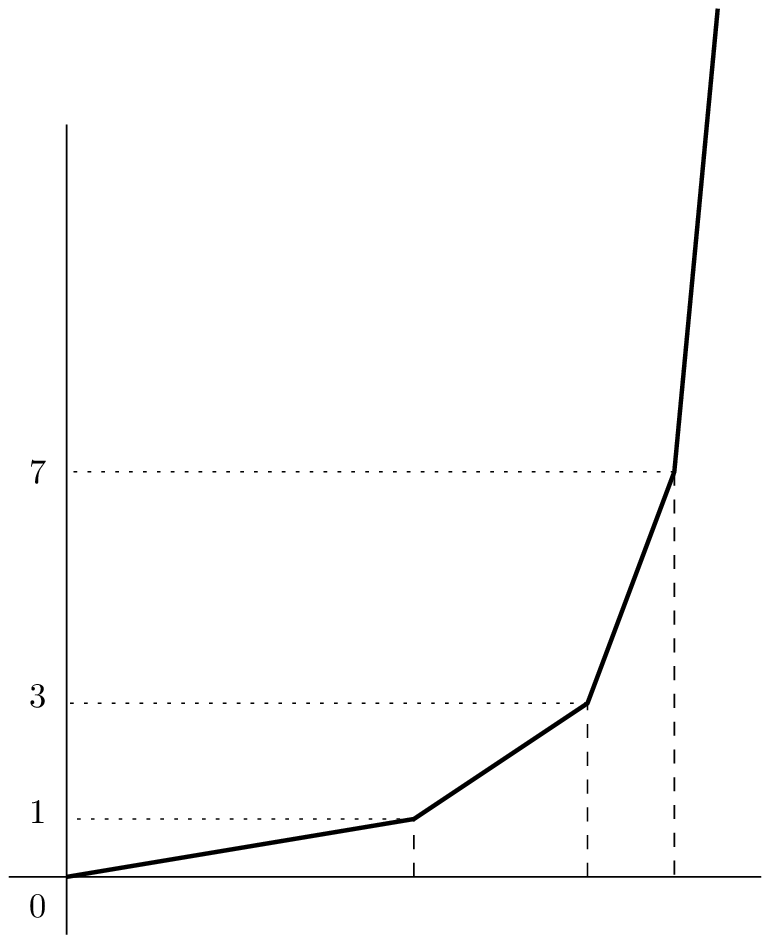}{blow-up-graph}
\end{example}

The following Lemma establishes correspondence between the continuity of the real function $\fh$
and the corresponding $\omega$-function $f$ in its relevant domain $\Omega$. Continuity of
$f$ in $\Omega$ corresponds to the
continuity of $\fh$ at all non-dyadic points together with continuity
of $\fh$ from the right at all dyadic points.

\begin{lemma}\label{realtowfa}
Let $f$ be any $\omega$-function, and let $\fh$ be the corresponding real function. Let
$x\in [0,1)$ and denote $w=bin(x)$.
Function $f$ is continuous at $w$ as a function $\Omega\to\mathbb{R}$
if and only if $\fh$ is continuous (continuous from the right) at the point $x$, provided
$x$ is non-dyadic (dyadic, respectively).
\end{lemma}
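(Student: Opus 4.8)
The statement is an ``iff'' relating continuity of $f$ at $w$ in the metric space $\Omega$ (with the Cantor metric) to one-sided or two-sided continuity of $\fh$ at $x$ in the Euclidean topology on $[0,1)$. The plan is to reduce everything to a comparison of neighborhood bases. The key observation is that the map $num_{|\Omega}:\Omega\to[0,1)$ and its inverse $bin$ let us transport neighborhoods back and forth, but the transport is not a homeomorphism: two words of $\Omega$ that are close in the Cantor metric need not have $num$-values that are close, and vice versa, \emph{except} that the discrepancy is entirely controlled by the dyadic points. So the proof splits into the two cases already flagged in the statement.

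First I would handle the \textbf{non-dyadic case}. Here $w=bin(x)$ contains infinitely many $0$'s and infinitely many $1$'s, so $x$ is an interior point of $[0,1)$ that is not dyadic. I would show that for a non-dyadic $x$, small Euclidean neighborhoods of $x$ correspond exactly (via $bin$) to small Cantor-metric neighborhoods of $w$ inside $\Omega$, in the following sense: given $k$, there is $\epsilon>0$ such that $|num(w')-x|<\epsilon$ with $w'\in\Omega$ forces $pref_k(w')=pref_k(w)$; conversely, $pref_k(w')=pref_k(w)$ forces $|num(w')-x|\le 2^{-k}$. The first direction uses non-dyadicity: because $w$ is not eventually constant, for every $k$ there are indices $>k$ where $w$ has a $0$ and where it has a $1$, so $x$ is strictly between the two dyadic endpoints of the length-$k$ cylinder, hence bounded away from them. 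This gives a genuine two-sided topological equivalence at $x$, and the equivalence of ``$f$ continuous at $w$ in $\Omega$'' with ``$\fh$ continuous at $x$'' follows by unwinding the $\epsilon$--$\delta$ definitions and using $\fh=f\circ bin$.

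Next the \textbf{dyadic case}: $x$ is dyadic, so $bin(x)=v0^\omega$ for some $v\in\Ss$, and there is also the word $v'1^\omega$ (where $v'$ is $v$ with its last $0$, if any, incremented appropriately) with the same $num$-value $x$ — but crucially $v'1^\omega\notin\Omega$, so $bin$ picks out $v0^\omega$. Here the asymmetry appears: points $y$ slightly \emph{above} $x$ have $bin(y)$ of the form $v0^m1\cdots$, which is Cantor-close to $v0^\omega=w$; but points $y$ slightly \emph{below} $x$ have $bin(y)$ of the form $v'1^m0\cdots$, which is Cantor-\emph{far} from $w$ (they share only the prefix up to where $v$ and $v'$ diverge). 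So approaching $x$ from the right in $[0,1)$ corresponds precisely to approaching $w$ in $\Omega$, while approaching from the left does not; hence continuity of $f$ at $w$ (as a function on $\Omega$) matches continuity of $\fh$ at $x$ \emph{from the right}. I would make this precise with the same two inequalities as above but only for $w'$ of the form $v0^m1\cdots$ (i.e.\ for $y>x$ close to $x$), and note there is no constraint coming from the left since those words are uniformly Cantor-distance $\ge 2^{-|v|}$ from $w$, so they are irrelevant to continuity of $f$ at $w$ in $\Omega$.

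The main obstacle is bookkeeping the dyadic endpoints correctly — in particular being careful that $f$ is regarded as a function on $\Omega$ only (so that the ``bad'' word $v1^\omega$, or more precisely the representatives with infinitely many trailing $1$'s, simply do not participate), and making sure that in the non-dyadic case the bound ``$|num(w')-x|<\epsilon \Rightarrow pref_k(w')=pref_k(w)$'' really does hold for $w'$ ranging over all of $\Omega$ and not just over words sharing a long prefix with $w$; this is exactly where non-dyadicity of $x$ is used and cannot be dropped. Once the neighborhood-base comparison is set up cleanly in each case, the equivalence of the continuity statements is a routine translation through $\fh=f\circ bin$ and $f=\fh\circ num_{|\Omega}$, so I would not belabor the $\epsilon$--$\delta$ details.
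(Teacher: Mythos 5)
Your proposal is correct and follows essentially the same route as the paper: both directions reduce to comparing Cantor neighborhoods of $w$ in $\Omega$ with (one- or two-sided) Euclidean neighborhoods of $x$, via the inequality $d_E(num(u),num(v))\le d_C(u,v)$ in one direction and the choice of $\delta$ as the distance from $x$ to the endpoints of the length-$k$ cylinder interval in the other. The paper organizes the argument by direction of implication rather than by the dyadic/non-dyadic dichotomy, but the content — including the observation that words Cantor-close to $v0^\omega$ have $num$-value at least $num(w)$, so only right-continuity is relevant at dyadic points — is the same.
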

\begin{proof}
Let us show first that for $u,v\in \So$, we have the inequality between the Euclidean
and Cantor metrics
\[
d_E(num(u),num(v))\leq d_C(u,v).
\]
Let $d_C(u,v)= 2^{-j}$. Then $u_i=v_i\textrm{ for
all }1\leq i\leq j$. Therefore
\begin{align*}
d_E(num(u),num(v))&=\left|\sum _{i=1}^{\infty}(u_i-v_i)2^{-i}\right|
=\left|\sum_{i={j+1}}^{\infty}(u_i-v_i)2^{-i}\right|\\
&\leq\sum_{i={j+1}}^{\infty}\left|(u_i-v_i)\right|2^{-i}\leq 2^{-j}=d_C(u,v).
\end{align*}
We have obtained for all $u,v\in\So$ the implication
\[
d_C(u,v)<\delta\Longrightarrow d_E(num(u),num(v))<\delta,
\]
so it follows directly that the continuity of $\fh$ at $num(w)$ implies the continuity of $f$
at $w\in\Omega$. Suppose now that $num(w)$ is dyadic. Then $w=v0^\omega$ for some
finite word $v$ of length $k$. We have
\[
d_C(u,w)\leq 2^{-k} \Longrightarrow num(u)\geq num(w),
\]
so in this case continuity of $\fh$ at $num(w)$ from the right is enough to obtain the
continuity of $f$ at $w$.

Let us prove the converse direction. Suppose that $f$ is continuous at $w\in\Omega$.
For every $k$ there exists $\delta>0$ such that whenever
$num(w)\leq num(v) <num(w)+\delta$, then $pref_k(w)=pref_k(v)$. We can
accomplish this by choosing $\delta=num(pref_k(w)1^\omega)-num(w)$.

Similarly, if $w$ does not end in $0^\omega$ (i.e. $num(w)$ is not dyadic), we
can choose $\delta=num(w)-num(pref_k(w)0^\omega)$ and see that
$num(w)-\delta<num(v)\leq num(w)$ implies $pref_k(w)=pref_k(v)$.

This means that for every $\varepsilon>0$ there exists $\delta>0$ such that
$$
d_E(num(u),num(w))<\delta \mbox{ (and $num(u)>num(w)$ if $num(w)$ is dyadic) }
\Longrightarrow
d_C(u,w)<\epsilon.
$$
This is enough to see that $\fh$ is continuous at $x=num(w)$ if $x$ is not dyadic, and
continuous from right if $x$ is dyadic.
\end{proof}
The following example shows that Lemma~\ref{realtowfa} can not be extended to
continuity from the left at dyadic points.
\begin{example}
Let $\A$ be a WFA with \[I=\begin{pmatrix} 1 & 0 \end{pmatrix},\,
F=\begin{pmatrix} 0 \\ 1 \end{pmatrix},\,
A_0 =\begin{pmatrix} 0 & 0 \\ 0 & 1 \end{pmatrix}\textrm{ and }
A_1 = \begin{pmatrix} 0 & 1 \\ 0 & 1 \end{pmatrix}.\]
It is easy to see that $\fa(1v)=1$ and $\fa(0v)=0$ for all $v\in\So$.
Clearly, $\fa$ is continuous: For each $w,w'\in \So$,
$d_C(w,w')<1$ implies $d_E(\fa (w),\fa (w'))=0$.
However, $\fah$ is not continuous at the point $x=1/2$,
as $\fah (1/2)=1$, but $\fah (y)=0$  for any $y<1/2$.
\end{example}

Based on Lemma~\ref{realtowfa} we can now characterize those real functions
$\fh$ whose corresponding $\omega$-function $f$ is continuous or uniformly
continuous in $\Omega$.

\begin{corollary}\label{ST-function}
Let $f$ be an $\omega$-function and let $\fh$ be the corresponding
real function. Then:
\begin{enumerate}[(1)]
\item Function $f$ is continuous in the set $\Omega$ if and only if
$\fh$ is continuous at every non-dyadic point and continuous from the
right at every dyadic point.
\item Function $f$ is uniformly continuous in the set $\Omega$
if and only if $\fh$ is continuous at every non-dyadic point, continuous from the
right at every dyadic point, and has a limit from the left at all nonzero 
dyadic points as well as at the point $x=1$.
\end{enumerate}
Note that  $f$ might not even be defined at points in $\Ss1^\omega$.
\end{corollary}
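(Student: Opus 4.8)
The plan is to derive Corollary~\ref{ST-function} directly from Lemma~\ref{realtowfa} together with the standard topological fact that a function on a dense subset extends continuously to the closure exactly when it is uniformly continuous (and, more concretely here, that a real function defined on $[0,1)$ extends to a continuous function on $[0,1]$ iff it is uniformly continuous iff it is continuous on $[0,1)$ and has a left limit at $1$; and that continuity at a point plus a one-sided limit from the other side forces the existence of the two-sided limit there).

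For part~(1): Lemma~\ref{realtowfa} says precisely that $f$ is continuous at $w=bin(x)$ as a function $\Omega\to\mathbb R$ iff $\fh$ is continuous at $x$ when $x$ is non-dyadic, and continuous from the right at $x$ when $x$ is dyadic. Quantifying over all $x\in[0,1)$ (equivalently, all $w\in\Omega$), this is exactly the claimed equivalence, so part~(1) is immediate and requires no further work beyond invoking the Lemma pointwise.

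For part~(2): I would argue in two directions. First suppose $\fh$ is continuous at every non-dyadic point, continuous from the right at every dyadic point, and has a left limit at every nonzero dyadic point and at $x=1$. At a nonzero dyadic $x$, right-continuity together with the existence of the left limit, say $\ell$, means $\fh$ is continuous at $x$ iff $\ell=\fh(x)$; but even if not, we may define a corrected value, and in any case the existence of both one-sided limits means $\fh$ extends continuously across each dyadic point after at most redefining it there, and likewise at $x=1$. Concretely: by part~(1), $f$ is already continuous on all of $\Omega$; I then show $f:\Omega\to\mathbb R$ is in fact uniformly continuous. The clean route is to build an auxiliary function $g$ on $[0,1]$ by setting $g(x)=\fh(x)$ at non-dyadic $x$, $g(x)=\lim_{y\to x^-}\fh(y)$ at nonzero dyadic $x$ and at $x=1$, and $g(0)=\fh(0)$; one checks $g$ is continuous on the compact interval $[0,1]$ (continuity at non-dyadic points and right-continuity at dyadic points come from $\fh$; left-continuity at dyadic points and at $1$ is built in by the choice of $g$; continuity at $0$ is right-continuity of $\fh$), hence uniformly continuous, and then transporting back through $num$ (using the metric inequality $d_E(num(u),num(v))\le d_C(u,v)$ from the proof of Lemma~\ref{realtowfa}, and the fact that $g$ agrees with $f\circ num$ on the dense set $\Omega\setminus\{\text{dyadics}\}$) gives uniform continuity of $f$ on $\Omega$. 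Wait — one must be careful that $g$ restricted to $\Omega$ actually equals $f\circ num$; this holds at non-dyadic points by construction, and at dyadic points one uses that $f$ is continuous at $bin(x)$ (from part~(1)) so that the left limit of $\fh$, which exists by hypothesis, must coincide with $\fh(x)=f(bin(x))$. So in fact the hypotheses force $\fh$ itself to be continuous at every dyadic point, and $g$ is literally the continuous extension of $\fh$ to $[0,1]$, whence $f=g\circ num$ is uniformly continuous on $\Omega$.

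Conversely, if $f$ is uniformly continuous on $\Omega$, then by part~(1) $\fh$ is continuous at non-dyadic points and right-continuous at dyadic points; it remains to produce the left limits. Here I would use that $num:\Omega\to[0,1)$ is a homeomorphism onto its image only in one direction — rather, I use directly that uniform continuity of $f$ on $\Omega$ together with the inequality $d_E(num(u),num(v))\le d_C(u,v)$ does not by itself give a modulus in the $d_E$ metric, so the argument must go the other way: given $\varepsilon$, uniform continuity of $f$ gives $\delta$ (in $d_C$) so that $d_C(u,v)<\delta$ implies $|f(u)-f(v)|<\varepsilon$; for a dyadic point $x$ with $w=bin(x)=v0^\omega$, $|v|=k$, and for $y$ slightly less than $x$, the word $bin(y)$ shares the prefix $pref_k(w)$ once $y$ is within $num(pref_k(w)1^\omega)-num(w)$ of $x$, exactly as in the converse direction of Lemma~\ref{realtowfa}; then all such $bin(y)$ lie within $d_C$-distance $2^{-k}<\delta$ (for $k$ large) of each other, so the values $\fh(y)$ form a Cauchy net as $y\to x^-$, giving the left limit. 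The point $x=1$ is handled the same way using $w=1^\omega$ (noting $f$ may be undefined there, but the Cauchy argument on $\fh(y)$ for $y\to1^-$ only uses values at $bin(y)$ with $pref_k(y)=1^k$, which are defined and within $d_C$-distance $2^{-k}$ of one another).

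The main obstacle — and the one point deserving care — is the consistency check in part~(2): verifying that the hypothesized left limits at dyadic points agree with the already-known values $\fh(x)=f(bin(x))$, so that the extension $g$ is genuinely continuous and not merely a function with removable discontinuities patched in. This is where one uses part~(1) (continuity of $f$ on $\Omega$, hence right-continuity of $\fh$ at $x$) in tandem with the existence of the left limit to conclude two-sided continuity at $x$. Everything else is routine: the forward direction of each equivalence is an application of Lemma~\ref{realtowfa} quantified over $x$, and the uniform-continuity bookkeeping is the standard compactness argument on $[0,1]$ transported across the addressing map $num$.
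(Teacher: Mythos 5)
Part (1) of your argument and the direction of part (2) that produces the left limits from uniform continuity of $f$ are both fine: the former is exactly the paper's proof (quantify Lemma~\ref{realtowfa} over all $x$), and for the latter your Cauchy-net argument (all words $bin(y)$ for $y$ in a small left-neighbourhood of a dyadic $x$ share a long prefix of the form $u01^k$, hence are $d_C$-close \emph{to each other}, so uniform continuity of $f$ makes the values $\fh(y)$ Cauchy) works, and differs only cosmetically from the paper, which instead extends $f$ continuously to $\So$ and reads off the left limit as $g(v1^\omega)$. (Note a slip in your phrasing: those words do not share a prefix with $bin(x)=u10^\omega$ itself; they only share prefixes with one another.)

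The other direction of part (2) has a genuine error. You claim that continuity of $f$ at $bin(x)$ together with the existence of the left limit of $\fh$ at a dyadic $x$ forces that left limit to equal $\fh(x)$, and hence that ``the hypotheses force $\fh$ itself to be continuous at every dyadic point.'' This is false. Lemma~\ref{realtowfa} relates continuity of $f$ at $bin(x)=u10^\omega$ only to \emph{right}-continuity of $\fh$ at $x$; it says nothing about values of $\fh$ just below $x$, because the corresponding words $u01^k\cdots$ stay at Cantor distance at least $2^{-|u|-1}$ from $u10^\omega$. The paper's example immediately after Lemma~\ref{realtowfa} ($\fa(1v)=1$, $\fa(0v)=0$) is a direct counterexample: $f$ is uniformly continuous on $\Omega$, all your hypotheses on $\fh$ hold, yet $\fh(1/2)=1$ while $\lim_{y\to 1/2^-}\fh(y)=0$. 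Consequently your auxiliary $g$ on $[0,1]$ (redefined at dyadic points to be the left limit) is in general \emph{not} continuous — it fails right-continuity wherever the left limit differs from $\fh(x)$ — and the composition argument $f=g\circ num$ collapses. Indeed, a continuous $g$ on $[0,1]$ agreeing with $\fh$ on $[0,1)$ exists only when $\fh$ is itself continuous at the dyadic points, which is the extra condition of Corollary~\ref{cont}, not a consequence of the hypotheses here. The repair is to build the extension on $\So$ rather than on $[0,1]$: there the two addresses $v10^\omega$ and $v01^\omega$ of a dyadic point are distinct, so $g$ may legitimately take the value $\fh(x)$ at the former and the left limit at the latter; one then checks continuity of $g$ on the compact space $\So$ (approximating each $z1^\omega$ by the words $z1^n0^\omega\in\Omega$) and concludes uniform continuity of $f=g_{|\Omega}$. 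This is exactly what the paper's proof does.
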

\begin{proof}
Part (1) follows directly from Lemma~\ref{realtowfa}, so we focus on part (2).

Suppose that $f$ is uniformly continuous in $\Omega$.
By part (1) it is sufficient to show that $\fh$ has a limit from the
left at each point $num(v1^\omega)$ for $v\in\Ss$.
As $\Omega$ is dense in $\So$, there exists a (unique) continuous
$g:\So\to\mathbb R$ such that $g_{|\Omega}=f_{|\Omega}$. Then
\[
\lim_{x\to num(v1^\omega)_{-}} \fh(x)=
\lim_{w\to v1^\omega\above0pt w\in\Omega} f(w)
=
\lim_{w\to v1^\omega\above0pt w\in\Omega} g(w)
=g(v1^\omega),
\]
so the limit exists.

For the other direction of (2), assume that $\fh$
has a limit from the left at all dyadic points, including 1. By (1)
we have that $f$ is continuous in $\Omega$. We need to prove that $f$ is uniformly
continuous in $\Omega$. We show this by constructing a continuous
$g:\So\to\mathbb R$ such that $g_{|\Omega}=f_{|\Omega}$. Uniform continuity of
$f$ then
follows from the compactness of $\So$. For every $v1^\omega$, set
$$
g(v1^\omega)=
\lim _{x\rightarrow num(v1^\omega)_{-}} \fh(x)=\lim_{w\to
v1^\omega\above0pt w\in\Omega} f(w),
$$
while for $w\in\Omega$ we let $g(w)=f(w)$. Because the limit from the left exists at every
$num(v1^\omega)$, the function $g$ is everywhere defined.
It remains to verify that $g$ is continuous in $\So$.

Let $v\in\So$ and $\epsilon>0$. From the
definition of $g$ and properties of $\fh$ we obtain that there exists $\delta>0$ such that
\[
\forall u\in\Omega, d_C(v,u)<\delta \Rightarrow |g(v)-g(u)|<\frac12\epsilon.
\]
Now whenever $u=z1^{\omega}$ and $d_C(v,u)<\delta$, the value $g(u)$ is
the limit of the sequence $\{g(z1^n0^\omega)\}_{n=1}^\infty$ whose elements
belong to $\Omega$. Observe that for all $n$ large enough we have
$d_C(v,z1^n0^\omega)<\delta$ and so $|g(v)-g(z1^n0^\omega)|<\epsilon/2$. Therefore, $|g(v)-g(u)|<\epsilon$. 

We have shown for all $u$ that if
$d_C(v,u)<\delta$ then $|g(v)-g(u)|<\epsilon$, proving continuity.
\end{proof}

\begin{remark}
\label{uniformremark}
By (2) of Corollary~\ref{ST-function}, uniform continuity of $f$ in $\Omega$ implies the existence of
$\ds\lim _{x\rightarrow 1_-} \fh(x)$. So in this case, if $\fh$ is continuous it is uniformly
continuous. In particular, continuity of $f$ in $\So$ and $\fh$ in $[0,1)$ imply uniform
continuity of $\fh$.
\end{remark}

Uniform continuity of $\fh$ is stronger than uniform
continuity of $f$. The additional requirement is the continuity of $\fh$ from the left
at all dyadic points:

\begin{corollary}\label{cont}
The function $\fh:[0,1)\to{\mathbb R}$ obtained from the $\omega$-function $f$
is uniformly continuous if and only if:
\begin{enumerate}[(1)]
\item Function $f$ is uniformly continuous in $\Omega$, and
\item for all finite words $v$, the equality $g(v10^{\omega})=g(v01^{\omega})$ holds, where
$g$ is the (unique) continuous function $g:\So\to{\mathbb
R}$ such that $f_{|\Omega}=g_{|\Omega}$.
\end{enumerate}
\end{corollary}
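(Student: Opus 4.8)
The plan is to deduce the statement from Corollary~\ref{ST-function}(2) and Remark~\ref{uniformremark} by translating ``$\fh$ uniformly continuous on $[0,1)$'' into one-sided continuity conditions at dyadic and non-dyadic points. I will use throughout the elementary fact that a bounded real function on $[0,1)$ is uniformly continuous exactly when it is continuous on all of $[0,1)$ and the limit $\lim_{x\to 1_-}\fh(x)$ exists (extend by this limit to get a continuous function on the compact interval $[0,1]$).

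The one technical ingredient I would isolate first is the identification of the left limit of $\fh$ at a nonzero dyadic point. If $x\in(0,1)$ is dyadic then $bin(x)=v10^\omega$ for a unique finite word $v$, so $\fh(x)=f(v10^\omega)=g(v10^\omega)$ since $v10^\omega\in\Omega$. On the other hand, the reals immediately below $x$ are precisely the numbers $num(w)$ with $w\in\Omega$ and $w\to v01^\omega$ in the Cantor metric --- this is the same computation already carried out in the proofs of Lemma~\ref{realtowfa} and Corollary~\ref{ST-function} --- and since $g$ is continuous on $\So$, $\Omega$ is dense, and $g_{|\Omega}=f_{|\Omega}$, we get $\lim_{x'\to x_-}\fh(x')=g(v01^\omega)$. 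Consequently, \emph{assuming $f$ is uniformly continuous in $\Omega$} (so that $g$ exists), $\fh$ is continuous from the left at $num(v10^\omega)$ if and only if $g(v10^\omega)=g(v01^\omega)$.

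For the forward implication, suppose $\fh$ is uniformly continuous. Then $\fh$ is continuous on $[0,1)$ and $\lim_{x\to1_-}\fh(x)$ exists; in particular $\fh$ is continuous at every non-dyadic point, continuous from the right at every dyadic point, and (being two-sidedly continuous there, except at $x=0$ where right-continuity already gives the assertion vacuously) admits a left limit at every nonzero dyadic point, as well as at $x=1$. By Corollary~\ref{ST-function}(2) this yields that $f$ is uniformly continuous in $\Omega$, which is (1); then the function $g$ of (2) is available, and by the identification of the preceding paragraph the left-continuity of $\fh$ at each $num(v10^\omega)$ gives $g(v10^\omega)=g(v01^\omega)$, i.e.\ (2). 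For the converse, assume (1) and (2). By (1) and Corollary~\ref{ST-function}(2), $\fh$ is continuous at every non-dyadic point, continuous from the right at every dyadic point, and has a left limit at every nonzero dyadic point and at $x=1$. By (2) and the identification, at every nonzero dyadic point this left limit equals the value $\fh(num(v10^\omega))$, so $\fh$ is continuous there; at the left endpoint $x=0$ right-continuity is all that is needed. Hence $\fh$ is continuous on all of $[0,1)$ and has a left limit at $1$, so it is uniformly continuous (equivalently, invoke Remark~\ref{uniformremark}: once $f$ is uniformly continuous in $\Omega$, continuity of $\fh$ automatically upgrades to uniform continuity).

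I do not expect a genuine obstacle: the content is essentially bookkeeping with the earlier results. The care required is in separating the dyadic point $x=0$ (where only right-continuity is available, and only right-continuity is demanded, since it is the left endpoint of the domain) from the nonzero dyadic points, and in correctly matching the left-limit-at-$1$ clause of Corollary~\ref{ST-function}(2) with the reformulation of uniform continuity on $[0,1)$. The only substantive geometric step is the identification $\lim_{x'\to x_-}\fh(x')=g(v01^\omega)$, and that is already present in the proof of Lemma~\ref{realtowfa}.
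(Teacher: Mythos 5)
Your proof is correct and follows essentially the same route as the paper: both directions are reduced to Corollary~\ref{ST-function}(2) together with the identification of the left limit of $\fh$ at a nonzero dyadic point $num(v10^\omega)$ with $g(v01^\omega)$, after which uniform continuity is obtained by extending $\fh$ continuously to $[0,1]$. Your explicit separation of the endpoint $x=0$ and the observation that boundedness is not actually needed in your opening equivalence are harmless refinements, not deviations.
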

\begin{proof}
If $\fh$ is uniformly continuous in $[0,1)$ then it has a right limit at $x=1$, so
$\fh$ satisfies the conditions in part (2) of Corollary~\ref{ST-function}.
Therefore, $f$ is uniformly continuous in $\Omega$. Let $g$ be the continuous extension of $f$ to $\So$.
Because $\gh$ is
continuous at dyadic points, we have
$$g(v10^{\omega})=\ds\lim_{w\to
v01^{\omega}}g(w)=g(v01^\omega).$$

Assume now that conditions (1) and (2) hold.

Using Lemma~\ref{realtowfa}, we obtain
continuity of $\fh$ at non-dyadic points and continuity from the right at
dyadic points. Now continuity of $\fh$ from the left at dyadic points follows
from (2) and the continuity of $g$.

We also have $\ds\lim_{x\to1_{-}}\fh(x)=g(1^\omega)$, so we can continuously
extend $\fh$ to the whole interval $[0,1]$, proving uniform continuity of $\fh$.
\end{proof}

If $\fah$ is uniformly continuous then we know that $f$ is uniformly continuous in $\Omega$.
Because $\Sigma^*0^\omega \subseteq \Omega$
we can choose $w=0^\omega$ and $\Delta=\Sigma^*0^\omega$ in
Theorem~\ref{redistribution} and obtain an average preserving WFA computing $f$. 
\begin{corollary}
\label{realuniformcorollary}
If a uniformly continuous function $\fah$
is computed by some WFA $\A$, then there is an
average preserving WFA $\B$ such that $\fah=\fbh$ and $\fb$ is continuous in $\So$.
Automaton $\B$ can be produced from $\A$ by first minimizing $\A$
and then changing the final distribution.
\end{corollary}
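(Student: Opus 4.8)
The plan is to reduce Corollary~\ref{realuniformcorollary} to Theorem~\ref{redistribution} by choosing the right dense set $\Delta$ and the right continuous function $g$. First I would record the key observation that $\Sigma^*0^\omega\subseteq\Omega$, so the binary-addressing correspondence $num$ behaves well on this set, and that $\Sigma^*0^\omega$ is dense in $\So$ (every infinite word is a Cantor-limit of words of the form $v0^\omega$). Since $\fah$ is uniformly continuous on $[0,1)$, part~(2) of Corollary~\ref{ST-function} (or Remark~\ref{uniformremark}) tells us that $\fa$ is uniformly continuous on $\Omega$, hence in particular $\fa$ restricted to the smaller set $\Delta=\Sigma^*0^\omega$ is uniformly continuous; equivalently, there is a unique continuous $g:\So\to\mathbb R$ with ${\fa}_{|\Delta}=g_{|\Delta}$ — this $g$ is precisely the continuous extension of $\fa_{|\Omega}$ furnished by Corollary~\ref{ST-function}.

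Next I would invoke Theorem~\ref{redistribution} with $w=0^\omega$, so that $\Delta=\Ss w=\Sigma^*0^\omega$. The hypotheses of that theorem are exactly that $\fa$ is continuous on $\Delta$ and that there is a continuous $g:\So\to\mathbb R$ agreeing with $\fa$ on $\Delta$, which we have just verified. The conclusion gives an average preserving WFA $\B$ with $\fb=g$; moreover, after first minimizing $\A$ (so that $\A$ becomes left-minimal, which does not change $\Fa$ and hence does not change $\fa$ or $\fah$), the theorem says $\B$ is obtained from $\A$ by changing only the final distribution. This yields the stated "minimize then change final distribution" construction.

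It remains to check that $\fah=\fbh$ and that $\fb$ is continuous on all of $\So$. Continuity of $\fb=g$ on $\So$ is immediate since $g$ was chosen continuous. For the equality of real functions, note $\fbh(x)=\fb(bin(x))=g(bin(x))$ and $\fah(x)=\fa(bin(x))$ for $x\in[0,1)$; since $bin(x)\in\Omega$ for every $x\in[0,1)$ and $g$ agrees with $\fa$ on $\Omega$ (that was how $g$ was defined), we get $g(bin(x))=\fa(bin(x))$, hence $\fbh=\fah$ on $[0,1)$.

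The only genuinely delicate point — and the step I would treat most carefully — is the passage "$\fah$ uniformly continuous $\Rightarrow$ $\fa$ uniformly continuous on $\Omega$ $\Rightarrow$ $\fa_{|\Delta}$ has a continuous extension agreeing with it on the smaller $\Delta$." The first implication is Corollary~\ref{ST-function}(2) applied in the reverse direction (uniform continuity of $\fah$, in particular existence of the left limit at $1$ and at all dyadic points, forces uniform continuity of $\fa$ on $\Omega$); the second is just that the continuous extension $g$ obtained there agrees with $\fa$ on the superset $\Omega\supseteq\Delta$, so a fortiori on $\Delta$. Everything else is bookkeeping with the definitions of $bin$, $\fah$, and the change-of-distribution clause already built into Theorem~\ref{redistribution}.
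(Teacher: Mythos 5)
Your proposal is correct and follows essentially the same route as the paper: deduce uniform continuity of $\fa$ on $\Omega$ from Corollary~\ref{ST-function}(2), then apply Theorem~\ref{redistribution} with $w=0^\omega$ and $\Delta=\Sigma^*0^\omega\subseteq\Omega$, using the uniqueness of the continuous extension $g$ to conclude $\fb=g$ agrees with $\fa$ on all of $\Omega$ and hence $\fbh=\fah$. The paper states this more tersely, but your added verification that $g$ restricted to $\Omega$ equals $\fa$ (so that the real functions coincide) is exactly the bookkeeping the paper leaves implicit.
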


Note that $\B$ itself need not be right-minimal but we can minimize it.
Putting together the Corollary~\ref{realuniformcorollary} and Lemma~\ref{matrix1}, we obtain
the main result of this section:
\begin{corollary}
\label{corollary47}
If a nonzero uniformly continuous
function $\fh$ is computed by some WFA $\A$
then $\fh$ is also computed by a minimal, average preserving WFA with transition matrices of the form
\[A_i=\begin{pmatrix}
 B_i & \vline & \mathbf{b}_i \\ \hline
 \mathbf{0} & \vline & 1
\end{pmatrix},
\]
where $i=0,1$ and $\{B_0,B_1\}$ is a stable set of matrices.
\end{corollary}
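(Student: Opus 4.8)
The plan is to combine Corollary~\ref{realuniformcorollary} with Lemma~\ref{matrix1}, taking care that the intermediate automaton satisfies the hypotheses of the latter. First I would apply Corollary~\ref{realuniformcorollary}: since $\fh$ is uniformly continuous and computed by the WFA $\A$ (so $\fh=\fah$), there is an average preserving WFA $\B$ with $\fbh=\fh$ and $\fb$ continuous on the whole of $\So$. This $\B$ need not be minimal, so the next step is to invoke Lemma~\ref{find-minimal} to pass to a minimal WFA $\B'$ with $f_{\B'}=\fb$; in particular $\hat f_{\B'}=\fbh=\fh$ and $f_{\B'}$ is still continuous. Here I would recall the remark, stated right after the definition of ap WFA, that minimization does not destroy the ap property, so $\B'$ is again average preserving.

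Next I would verify that $\B'$ meets the hypotheses of Lemma~\ref{matrix1}. It is minimal by construction, and since it is ap with continuous $\omega$-function, Lemma~\ref{sup} gives that $\B'$ is uniformly convergent. We also need $f_{\B'}\neq 0$: this holds because $\fh\neq 0$ and $\hat f_{\B'}=\fh$, so $f_{\B'}$ is nonzero at the point of $\Omega$ that witnesses $\fh\neq 0$. Then Lemma~\ref{matrix1} applies and produces a change of basis of $\mathbb{R}^n$ after which every transition matrix of $\B'$ has the form (\ref{preform}) with $\{B_a\mid a\in\Sigma\}$ a stable set. Since a change of basis preserves minimality and the ap property and does not alter the computed word function (hence not $\fh$ either), and since $\Sigma=\{0,1\}$, the resulting automaton is a minimal, average preserving WFA with transition matrices $A_0,A_1$ of the form (\ref{preform}), $\{B_0,B_1\}$ stable, still computing $\fh$.

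I do not expect a genuine obstacle here: the statement is essentially a bookkeeping assembly of results already in hand. The only points deserving a line of checking are that each transformation (replacing $\A$ by $\B$, minimizing to $\B'$, changing the basis) leaves the computed real function $\fh$ unchanged and preserves the ap property, that minimality and uniform convergence are both available for $\B'$ at the moment Lemma~\ref{matrix1} is invoked, and that the nonzero hypothesis on $\fh$ transfers to $f_{\B'}$ so that Lemma~\ref{matrix1} is applicable at all.
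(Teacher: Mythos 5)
Your proposal is correct and matches the paper's own (very terse) argument, which simply says the corollary follows by ``putting together'' Corollary~\ref{realuniformcorollary} and Lemma~\ref{matrix1}; you have filled in exactly the bookkeeping the paper leaves implicit (minimizing while preserving the ap property, invoking Lemma~\ref{sup} to get uniform convergence, and transferring the nonzero hypothesis so that Lemma~\ref{matrix1} applies).
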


\subsection{Decision problems concerning the real function continuity}

In this section we study how does the decision problem {\sc Matrix Product Stability} relate to the
problem of deciding the uniform continuity of the real function determined by a WFA.

Note that we do not address non-uniform continuity of $\fah$ for which
Corollary~\ref{corollary47} fails. On the other hand, by Corollary~\ref{realuniformcorollary}
any uniformly continuous $\fah$ is generated by an ap WFA with continuous $\fa$, so
we restrict the attention to such WFA. The decision problem of interest is then the following:
\bigskip

\noindent
{\sc Ap-WFA uniform continuity}:

\nobreak
{\bf Input:} An average preserving WFA $\A$ over the binary alphabet $\Sigma=\{0,1\}$.

\nobreak
{\bf Question:} Are both $\fa$ and $\fah$ everywhere continuous?
\bigskip

\noindent
Note that the question is equivalent to asking about the
uniform continuity of $\fa$ and $\fah$ (see Remark~\ref{uniformremark}).

To decide {\sc Ap-WFA uniform continuity} we need to verify that $\fa$ is
continuous and then check the condition (2) of Corollary~\ref{cont}. It turns
out that, if $\A$ is ap and $\fa$ continuous, condition (2) is easy to test.

\begin{lemma}\label{cond2}
Let $\A$ be an average preserving WFA such that $\fa$ is continuous on $\So$.
Then condition (2) of Corollary~\ref{cont} is decidable for the function $\fa$.
\end{lemma}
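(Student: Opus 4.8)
Since $\fa$ is ap and continuous on $\So$, Lemma~\ref{sup} tells us that $\F$ is uniformly convergent, and Corollary~\ref{there-can-be-only-one} guarantees that the ap function $\F$ is the unique ap word function with this $\omega$-function; in particular, the continuous extension $g:\So\to\mathbb R$ of $\fa_{|\Omega}$ is just $\fa$ itself, because $\fa$ is already continuous on all of $\So$. So condition (2) of Corollary~\ref{cont} reduces to checking the countable family of equalities
\[
\fa(v10^\omega)=\fa(v01^\omega)\qquad\text{for all }v\in\Ss.
\]
The plan is to reformulate each such equality in terms of the weight matrices and then reduce the infinitely many conditions (one per $v$) to finitely many linear conditions by exploiting left-minimality and the canonical form from Lemma~\ref{matrix1}.

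First I would normalize: apply Lemma~\ref{find-minimal} to assume $\A$ is minimal, and, since the ap property survives minimization, it is still ap; minimality plus continuity plus $\fa$ possibly zero are handled by the trivial case. If $\fa=0$ then $\F=0$ (as noted in the remark after Corollary~\ref{matrix-reloaded}) and condition (2) holds trivially, so assume $\fa\neq0$ and put $\A$ into the form (\ref{preform}) via Lemma~\ref{matrix1}, with $\{B_a\}$ stable. By Lemma~\ref{apu} (uniform convergence), $\fa(v a0^\omega)=\lim_{k}IA_vA_aA_{0^k}F=IA_vA_aA_{0^\omega}F$ and $\fa(va1^\omega)=IA_vA_aA_{1^\omega}F$, where $A_{0^\omega}=\lim_k A_0^k$ and $A_{1^\omega}=\lim_k A_1^k$ both exist by Corollary~\ref{Awlimit}. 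Hence the condition $\fa(v10^\omega)=\fa(v01^\omega)$ becomes
\[
IA_v\,\bigl(A_1A_{0^\omega}-A_0A_{1^\omega}\bigr)F=0 .
\]
Now the key observation: left-minimality of $\A$ means the vectors $\{IA_v\mid v\in\Ss\}$ span $\mathbb R^n$, so the family of equalities over all $v$ is equivalent to the single vector equation
\[
\bigl(A_1A_{0^\omega}-A_0A_{1^\omega}\bigr)F=0 .
\]
This is a finite, effectively checkable condition, provided we can compute $A_{0^\omega}$ and $A_{1^\omega}$ exactly — which we can, since in the canonical form $A_0,A_1$ have the block shape $\left(\begin{smallmatrix}B_a&\mathbf b_a\\\mathbf 0&1\end{smallmatrix}\right)$ with $B_a$ stable, giving $A_{a^\omega}=\left(\begin{smallmatrix}\mathbf 0&(E-B_a)^{-1}\mathbf b_a\\\mathbf 0&1\end{smallmatrix}\right)$ by summing the geometric series (the matrix $E-B_a$ is invertible because the spectral radius of the stable matrix $B_a$ is below $1$), using only exact rational arithmetic.

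The main obstacle is justifying that we may push the limit through and that the reduction via left-minimality is valid even though $\B=\A_v$ changes the initial distribution: for this I would invoke Lemma~\ref{apu} (which needs only uniform convergence of $\F$, already established) to handle $\lim_k IA_vA_aA_b^kF=IA_vA_aA_{b^\omega}F$, and Lemma~\ref{initial} to see uniform convergence is unaffected by changing $I$ to $IA_v$ — but in fact we only need the limit identity, which Lemma~\ref{apu} supplies directly. A secondary subtlety is exhibiting $A_{0^\omega}$ and $A_{1^\omega}$ explicitly so the final linear-algebra test is genuinely effective; this is where Lemma~\ref{matrix1}'s canonical form does the work, reducing the computation of these limit matrices to inverting $E-B_0$ and $E-B_1$. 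Putting it together, the algorithm is: minimize $\A$; if it computes $0$, answer yes; otherwise compute the canonical form, compute $A_{0^\omega},A_{1^\omega}$, and test whether $(A_1A_{0^\omega}-A_0A_{1^\omega})F=0$.
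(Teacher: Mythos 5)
Your proposal is correct and follows essentially the same route as the paper: minimize, dispose of the case $\fa=0$, pass to the canonical form of Lemma~\ref{matrix1}, compute the limit matrices $A_{01^\omega}$ and $A_{10^\omega}$ by summing the geometric series $(E-B_a)^{-1}$, and reduce the infinitely many equalities $\fa(v10^\omega)=\fa(v01^\omega)$ to a single effectively checkable linear condition via minimality. The only cosmetic difference is that you state the final test as the vector equation $(A_1A_{0^\omega}-A_0A_{1^\omega})F=0$ obtained from left-minimality, whereas the paper states it as the matrix equality $A_{01^\omega}=A_{10^\omega}$; these coincide here because the last entry of $F$ is nonzero in the canonical form.
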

\begin{proof}
As minimization is effective we can assume that
the input automaton $\A$ is minimal and average preserving. First we can effectively
check whether $\fa=0$, in which case condition (2) of Corollary~\ref{cont} is
satisfied. Suppose than that $\fa\neq 0$. By Lemma~\ref{matrix1} we can
effectively transform the automaton to the form with transition matrices
\[
A_0=\begin{pmatrix}
B_0&\vline&\mathbf{b}_0\\\hline
{\mathbf 0}&\vline&1\\
\end{pmatrix},
\hspace*{1cm}
A_1=\begin{pmatrix}
B_1&\vline&\mathbf{b}_1\\\hline
{\mathbf 0}&\vline&1\\
\end{pmatrix},
\]
where $\{B_0, B_1\}$ is a stable set.
Because $\fa$ is continuous on $\So$  the condition (2) says that for all $v\in\Ss$
\[
\fa(v10^\omega)=\fa(v01^\omega).
\]
From minimality we obtain that the sufficient and necessary condition for this
to hold is that $A_{01^\omega}=A_{10^\omega}$.

Consider matrices $A_{01^k}$ and $A_{10^k}$. They are of the following forms:
\begin{align*}
  A_{01^k}&=\begin{pmatrix}
  B_0B_1^k & \vline & \mathbf{b}_0+B_0(\sum _{i=0}^{k-1}B^i_1)\mathbf{b}_1\\ \hline
  \mathbf{0} & \vline & 1
\end{pmatrix}\\
A_{10^k}&=\begin{pmatrix}
  B_1B_0^k & \vline & \mathbf{b}_1+B_1(\sum _{i=0}^{k-1}B^i_0)\mathbf{b}_0\\ \hline
  \mathbf{0} & \vline & 1
\end{pmatrix}.\\
\end{align*}

Observe that
\[
\sum_{i=0}^{k-1}B^i_0(E-B_0)=E-B^k_0\;\textrm{ and }
\sum _{i=0}^{k-1}B^i_1(E-B_1)=E-B^k_1.
\]

As the set $\{B_0,B_1\}$ is stable, we must have $B_0^n,\,B_1^n\to 0$ and so
all eigenvalues of both $B_0$ and $B_1$ must lie inside the unit disc. Thus the sums
$\sum_{i=0}^\infty B_0^i$ and $\sum_{i=0}^\infty B_1^i$ converge. It follows that
\[
\sum _{i=0}^{\infty}B^i_0=(E-B_0)^{-1}\quad\textrm{ and }\quad
\sum _{i=0}^{\infty}B^i_1=(E-B_1)^{-1}.
\]
This means that we have the limits:
\begin{align*}
  A_{01^\omega}&=\begin{pmatrix}
  0 & \vline & \mathbf{b}_0+B_0(E-B_1)^{-1}\mathbf{b}_1\\ \hline
  \mathbf{0} & \vline & 1
\end{pmatrix}\quad \textrm{and}\\
A_{10^\omega}&=\begin{pmatrix}
0 & \vline & \mathbf{b}_1+B_1(E-B_0)^{-1}\mathbf{b}_0\\ \hline
  \mathbf{0} & \vline & 1
\end{pmatrix}.
\end{align*}

So we are left with the simple task of checking the equality
$$
\mathbf{b}_0+B_0(E-B_1)^{-1}\mathbf{b}_1 = \mathbf{b}_1+B_1(E-B_0)^{-1}\mathbf{b}_0.
$$
\end{proof}

We are ready to prove the main result of this section. Recall that for $\A$
average preserving, continuity of $\fa$ is computationally as hard as
stability. We show that also simultaneous continuity of $\fa$ and $\fah$ is as
hard.

\begin{theorem}\label{equiv}
Decision problems {\sc Matrix Product Stability} and {\sc Ap-WFA uniform continuity}
can be algorithmically reduced to each other.
\end{theorem}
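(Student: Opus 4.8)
\emph{Overview.} The statement splits into two reductions; the first is routine given the preceding lemmas, the second is the substantive one.

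\emph{Reducing {\sc Ap\nobreakdash-WFA uniform continuity} to {\sc Matrix Product Stability}.} Let $\A$ be an ap WFA over $\{0,1\}$. First minimize $\A$ (Lemma~\ref{find-minimal}); this preserves the ap property, and if the result is the zero automaton we answer ``yes''. Otherwise $\fa\neq 0$, and we try to find a common left $1$-eigenvector $I_c$ as in Corollary~\ref{cons} by solving the linear system $\{I_c(A_a-E)=0\}$: if there is no nonzero solution then $\A$ is not uniformly convergent, hence $\fa$ is not continuous (Lemma~\ref{sup}), and we answer ``no''. If a nonzero $I_c$ exists, change basis so that $I_c=(0,\dots,0,1)$; then the weight matrices acquire the form~(\ref{preform}), with upper-left blocks $B_0,B_1$. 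Now query the {\sc Matrix Product Stability} oracle on $\{B_0,B_1\}$. If it is not stable, Corollary~\ref{matrix-reloaded} shows $\A$ is not uniformly convergent, so $\fa$ is not continuous and we answer ``no''. If it is stable, $\fa$ is uniformly convergent, hence continuous, and by Remark~\ref{uniformremark} it remains only to test condition~(2) of Corollary~\ref{cont}; by Lemma~\ref{cond2} this is the explicit equality $\mathbf b_0+B_0(E-B_1)^{-1}\mathbf b_1=\mathbf b_1+B_1(E-B_0)^{-1}\mathbf b_0$ (the inverses exist because $\{B_0,B_1\}$ is stable), which is effectively verifiable. We answer ``yes'' iff it holds. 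One can even make this a many-one reduction, by routing the effectively decided sub-cases to a fixed stable, resp.\ fixed non-stable, instance.

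\emph{Reducing {\sc Matrix Product Stability} to {\sc Ap\nobreakdash-WFA uniform continuity}.} By Lemma~\ref{reductiontosizetwo} we may take the input to be a pair $\{B_0,B_1\}$ of $n\times n$ matrices. The plan is to build effectively an ap WFA $\A$ over $\{0,1\}$ which is uniformly continuous precisely when $\{B_0,B_1\}$ is stable. The starting gadget is the one from the proof of Theorem~\ref{reduction1}: block-triangular weight matrices $\begin{pmatrix} B_a & \ast \\ \mathbf 0 & \ast \end{pmatrix}$, automatically ap for a suitably chosen final distribution. When $\{B_0,B_1\}$ is stable, Lemma~\ref{matrix2}(3) (equivalently Corollary~\ref{matrix-reloaded}) makes the matrix set continuous RCP, so $\A$ is uniformly convergent and $\fa$ is continuous; when $\{B_0,B_1\}$ is not stable, some product $B_{pref_k(w)}$ fails to tend to $0$, and—exactly as in Theorem~\ref{reduction1}—the one-step differences $\F(pref_{k+1}(w))-\F(pref_k(w))$ then fail to tend to $0$, so $\fa(w)$ is undefined and $\A$ is not uniformly continuous. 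What remains to be engineered is that on the \emph{stable} instances the real function $\fah$ is also continuous, i.e.\ condition~(2) of Corollary~\ref{cont} holds; so the gadget must be chosen (the extra coordinates, the vectors glued into the off-diagonal block, the distributions, possibly after an appropriate direct sum over the ``coordinates'' $(i,j)$ so that non-stability cannot be masked by cancellation) so that the dyadic jumps $\fa(v10^\omega)-\fa(v01^\omega)$ all vanish whenever $\{B_0,B_1\}$ is stable. Granting that, $\A$ is uniformly continuous iff $\{B_0,B_1\}$ is stable, and combining this with the first reduction yields the theorem (and, with the earlier results, places {\sc Ap\nobreakdash-WFA uniform continuity} in the equivalence class of {\sc Matrix Product Stability}).

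\emph{Main obstacle.} The delicate point is precisely this last requirement in the second reduction: the very off-diagonal ``perturbation'' that forces $\fa$ to diverge on non-stable inputs also tends to create a nonzero jump on stable inputs—by the computation in Lemma~\ref{cond2} the jump at $v$ is $\bigl(B_v(M-2E)\bigr)_{ij}$ with $M=B_1(E-B_0)^{-1}+B_0(E-B_1)^{-1}$, which is not identically $0$ for a generic pair. One must arrange the glue so that these jumps cancel on stable inputs without destroying the divergence on unstable ones; once the automaton is built, the ``only if'' direction is immediate and the ``if'' direction is just the bookkeeping already carried out in Lemmas~\ref{matrix2} and~\ref{cond2}.
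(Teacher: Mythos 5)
Your first reduction ({\sc Ap-WFA uniform continuity} to {\sc Matrix Product Stability}) is correct and is essentially the paper's argument (Theorem~\ref{reduction2} plus Lemma~\ref{cond2}), so there is nothing to add there. The problem is the second reduction: you have correctly identified the crux --- building a gadget whose dyadic jumps $\fa(v10^\omega)-\fa(v01^\omega)$ vanish on stable inputs while $\fa$ still diverges on unstable ones --- but you have not actually built it; you say ``granting that'' and stop. Since this is the entire content of the hard direction, the proof is incomplete as it stands: the naive gadget from Theorem~\ref{reduction1} (bottom-right block $D_0=D_1=1$) generically violates condition~(2) of Corollary~\ref{cont}, exactly as your own computation of the jump $\bigl(B_v(M-2E)\bigr)_{ij}$ shows, and no replacement is supplied.

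The paper's resolution is to enlarge the bottom-right corner from the scalar $1$ to a fixed $3\times 3$ pair $D_0,D_1$ that by itself forms a minimal ap WFA computing a continuous real function vanishing at both endpoints of $[0,1]$, and to glue in $C_1=-C_0$ with a single $\pm1$ entry in row $j$. The endpoint-vanishing property forces $G=\lim_n A_0^nF=(0,\dots,0,0,1)^T$ and $H=\lim_n A_1^nF=(0,\dots,0,1,1)^T$ (using stability of each singleton $\{B_i\}$, which is checked separately via Lyapunov), whence $A_1G=A_0H$ and all jumps cancel when $\{B_0,B_1\}$ is stable. Note also that once the gadget is changed, your claim that the unstable case goes ``exactly as in Theorem~\ref{reduction1}'' no longer holds: the identity $|\Faij(pref_{k+1}(w))-\Faij(pref_k(w))|=|(B_{pref_k(w)})_{ij}|$ depended on $D=1$, and the paper must instead consider the combination $2\Faij(pref_n(w)0)-\faij(pref_n(w)10^\omega)-\faij(pref_n(w)0^\omega)=I A_{pref_n(w)}(2A_0F-A_1G-G)$, where $2A_0F-A_1G-G$ is the $j$-th canonical basis vector, to extract $(B_{pref_n(w)})_{ij}$ and contradict continuity via Lemmas~\ref{sup} and~\ref{apu}. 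Both of these steps are missing from your argument and neither is routine bookkeeping.
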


\begin{proof}
Suppose first that {\sc Matrix Product Stability} is decidable, and let $\A$ be a given
ap WFA over the binary alphabet. By Theorem~\ref{reduction2} we can effectively
determine if $\fa$ is continuous in $\So$. If the answer is positive then -- according to Lemma~\ref{cond2} --
we can effectively check whether the function $\fa$ satisfies the condition (2) in Lemma~\ref{cont}.
By Lemma~\ref{cont} this is enough to determine whether $\fah$ is uniformly continuous, so we
get the answer to {\sc Ap-WFA uniform continuity}.

For the converse direction, let us assume that {\sc Ap-WFA uniform continuity} is decidable.
By Lemma~\ref{reductiontosizetwo} it is enough to show how we can determine if a given pair
$\{B_0, B_1\}$ of $n\times n$ matrices is stable. Because we can
check whether $\ds\lim_{n\to\infty} B_i^n=0$ for $i=0,1$ (using the Lyapunov equation
method as in \cite[page 169]{Mahmoud}), we can assume that $\{B_0\}$ and
$\{B_1\}$ are stable sets.

In the following we effectively construct several  ap WFA $\A_{ij}$
over the binary alphabet such that
\begin{itemize}
\item if $\{B_0, B_1\}$  is stable then  the functions ${\fa}_{ij}$ and
${{\hat{f}}_{{\mathcal{A}}_{ij}}}$
are continuous for each $i,j$, while
\item if $\{B_0, B_1\}$  is not stable then for some $i,j$ the function ${\fa}_{ij}$ is
not continuous.
\end{itemize}
The result then follows directly.
The construction of $\A_{ij}$ is similar to the proof of
Theorem~\ref{reduction1}. Again, we write down the transition matrices in the
the block form
\[
A_0=
\begin{pmatrix}
 B_0 & \vline & C_0 \\
 \hline
 \mathbf{0} & \vline & D_0
\end{pmatrix},\quad
A_1=
\begin{pmatrix}
 B_1 & \vline & C_1 \\
 \hline
 \mathbf{0} & \vline & D_1
\end{pmatrix},
\]
only this time, instead of constant $D_0=D_1=1$, we use the $3\times 3$ matrices
\[
D_0=\begin{pmatrix}
0& 1& 0\\
0& \frac12& 0 \\
0& 0& 1\\
\end{pmatrix},
\quad
D_1=\begin{pmatrix}
0& -1& 1\\
0& \frac12& \frac12\\
0& 0& 1\\
\end{pmatrix}.
\]
These matrices (with initial and final distributions
$I=(1,0,0)$ and $F=(1/2,1/2,1)^T$)
form a minimal ap WFA $\D$ that computes
the continuous real function shown in Figure~\ref{broken-line}. An important
feature of this function, implicit in the proof below, is the fact that it has value
zero at both endpoints of the domain interval. Also, by Theorem~\ref{WFA-RCP}, $\{D_0, D_1\}$
is a continuous RCP set.
\picture{The graph of the automaton $\D$}{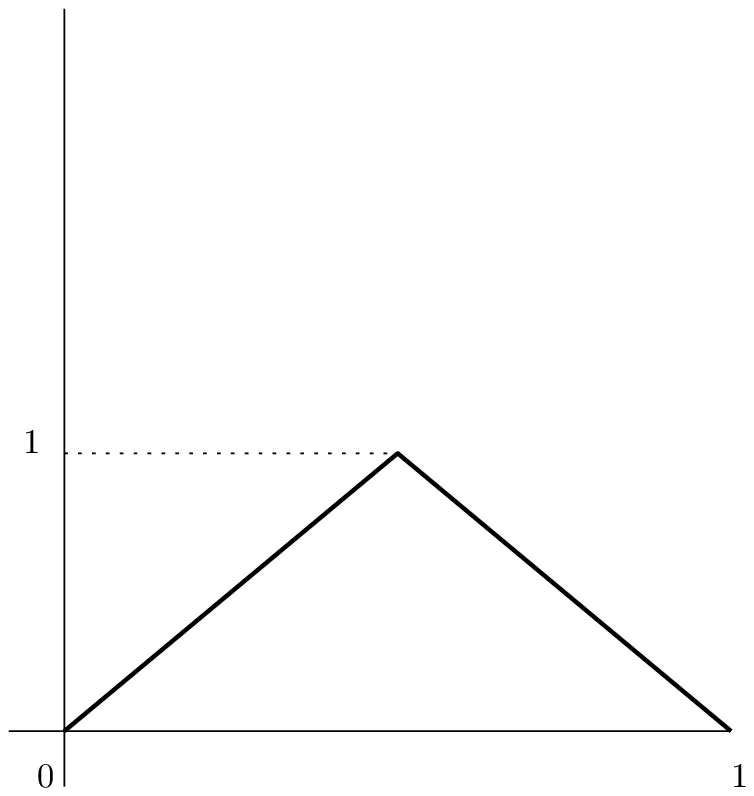}{broken-line}

Let $i,j\in\{1,\dots n\}$.
Denote by $C_0$ the following $n\times 3$ matrix:
\[
C_0=\begin{pmatrix}
0 &0&0\\
&\vdots&\\
0 &0&0\\
1 &0&0\\
0 &0&0\\
&\vdots&\\
0&0&0\\
\end{pmatrix},
\]
where the single 1 is in the $j$-th row. Let $C_1=-C_0$.

We now construct the ap WFA $\A_{ij}$ with transition matrices
\[
A_0=
\begin{pmatrix}
 B_0 & \vline & C_0 \\
 \hline
 \mathbf{0} & \vline & D_0
\end{pmatrix},
\hspace*{1cm}
A_1=
\begin{pmatrix}
 B_1 & \vline & C_1 \\
 \hline
 \mathbf{0} & \vline & D_1
\end{pmatrix},
\]
 initial distribution $I_i$ ($i$-th element of the canonical basis)
and final distribution $F=(0,\dots,0,1/2,1/2,1)^T$. 

Assume for a moment that $\faij$ is continuous. We show that then
\begin{eqnarray*}
G &=& \lim_{n\to\infty}A_0^nF = (0,\dots,0,0,1)^T,\\
H &=& \lim_{n\to\infty}A_1^nF = (0,\dots,0,1,1)^T.
\end{eqnarray*}
Consider only $G$; the case of $H$ is similar.
As $\{B_0\}$ is stable and $\{D_0\}$ is RCP,
an application of Lemma~\ref{matrix2} on the singleton set $\{A_0\}$
shows that the limit $G$ exists.
The vector $G$ is a 1-eigenvector of $A_0$ and by direct computation
we obtain that the last three elements of $G$ are $0,0,1$.

Notice now that the vector $G'=(0,\dots,0,1)^T$ is a 1-eigenvector of $A_0$. Were
$G\neq G'$, we would have the 1-eigenvector $G-G'$ whose last three elements
are zero. But then the first $n$ elements of $G-G'$ form a 1-eigenvector of
$B_0$ and so $\{B_0\}$ is not stable, a contradiction. Thus $G=G'$.
The proof that $H=(0,\dots,0,1,1)^T$ is analogous.

We are now ready to finish the proof.
Assume first that $\{B_0,B_1\}$ is a stable set. We claim that then
$\faij,\faijh$
are both continuous. %($\fah$ uniformly continuous).
Now the general form of Lemma~\ref{matrix2} comes into play: According to part
(3) of that Lemma, the set $\{A_0,A_1\}$ is a continuous RCP set and so, by Theorem~\ref{WFA-RCP},
the function $\faij$ is continuous.

By Corollary~\ref{cont}, we only need to show that condition (2) of that Corollary is satisfied.
We can compute the limits
\[
\lim_{n\to\infty}A_1A_0^nF=A_1G=\begin{pmatrix}
0\\
\vdots\\
0\\
1\\
\frac12\\
1\\
\end{pmatrix}=A_0H=\lim_{n\to\infty}A_0A_1^nF.
\]
This implies that $\faij(v01^\omega)=\faij(v10^\omega)$ for all $v\in\Sigma^*$, so by
Corollary~\ref{cont} the function $\faijh$ is continuous.

Suppose then that the set $\{B_0,B_1\}$ is not stable. Then there exist $i,j$ and
$w\in\So$ such that for some $\epsilon>0$ there are infinitely many $n$ such
that $|(B_{pref_n(w)})_{i,j}|>\epsilon$. Consider the automaton $\A_{ij}$ for these $i,j$.
We want to prove that $\faij$ is not continuous in this case.

We will proceed by contradiction, assuming that $\faij$ is continuous. Then
$\A_{ij}$
is uniformly convergent by Lemma~\ref{sup}. Then from Lemma~\ref{apu} and
continuity of $\faij$ we obtain that
\[
\lim_{n\to\infty}\left[2\Faij(pref_n(w)0)-\faij(pref_n(w)10^\omega)-\faij(pref_n(w)0^\omega)\right]=0.
\]

This means that $\ds\lim_{n\to\infty}IA_{pref_n(w)}(2A_0F-A_1G-G)=0$. However,
a straightforward calculation shows that the
vector $2A_0F-A_1G-G$ is the $j$-th element of the canonical basis and so
$IA_{pref_n(w)}(2A_0F-A_1G-G)=(A_{pref_n(w)})_{i,j}$ which does not converge to
zero. Therefore, $\faij$ can not be continuous.
\end{proof}

\begin{remark}
Analogously to Remark~\ref{inseparable} we can note that
in the case that {\sc Matrix Product Stability} is undecidable we have in fact showed
the recursive inseparability
of ap WFA whose $\fa$ and $\fah$ are both continuous from those ap WFA whose $\fa$ is not continuous.
\end{remark}

\subsection{Constructing WFA defining continuous real functions}
We end our paper by giving a few notes on how to construct nontrivial ap
WFA with continuous $\fa$ and $\fah$, for all initial distributions.

\begin{lemma}\label{constant}
Let $\A$ be a left-minimal ap automaton. Then the following statements
are equivalent:
\begin{enumerate}[(1)]
\item $\fa$ is constant
\item $\Fa$ is constant
\item $A_aF=F$ for all $a\in\Sigma$.
\end{enumerate}
\end{lemma}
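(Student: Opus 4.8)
The plan is to prove the cycle of implications $(1)\Rightarrow(3)\Rightarrow(2)\Rightarrow(1)$, since $(3)$ is the concrete algebraic condition and the two function statements should each drop out of it fairly directly. The ap hypothesis and left-minimality will be used in different places, so I will keep track of where each is needed.

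For $(3)\Rightarrow(2)$: if $A_aF=F$ for every $a\in\Sigma$, then by an immediate induction on word length $A_vF=F$ for every $v\in\Ss$, hence $\Fa(v)=IA_vF=IF$ is constant. For $(2)\Rightarrow(1)$: if $\Fa$ is constant then $\fa(w)=\lim_k \Fa(pref_k(w))$ exists and equals that same constant for every $w\in\So$, so $\fa$ is constant. Neither of these directions needs the ap or left-minimality assumptions.

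The substantive direction is $(1)\Rightarrow(3)$, and this is where I expect the main obstacle. Suppose $\fa\equiv c$ is constant. The first step is to observe that, since $\A$ is ap and $\fa$ is continuous (being constant), Lemma~\ref{sup} gives that $\Fa$ is uniformly convergent; then Lemma~\ref{apu} applies. Now I want to show $IA_u(A_aF-F)=0$ for all $u\in\Ss$, $a\in\Sigma$, because left-minimality of $\A$ (the vectors $IA_u$ span $\mathbb R^n$) would then force $A_aF-F=0$. To get that vanishing, fix $u$ and $a$ and look at $IA_uA_aF$ versus $IA_uF$: using Lemma~\ref{apu} with the word $w$ having prefix... actually the cleaner route is: for any $v\in\Ss$, $\Fa(va)=IA_vA_aF$ and I want to compare $\sum_{a}\Fa(va)$ with $L\,\Fa(v)$ — but that is just the ap identity and gives nothing new. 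Instead, I will use that $\fa$ constant plus uniform convergence forces $\Fa$ itself to be constant: by Lemma~\ref{apu}, $\lim_{k}\Fa(pref_k(w)u)=\fa(w)=c$ for every $u$, and in particular, taking $w$ to be any infinite extension of a given finite word $v$ and noting $\Fa(v\,0^m)\to c$... hmm, that only controls tails. The genuinely clean argument: since $\fa=c$ and $\A$ is ap, Corollary~\ref{there-can-be-only-one} would apply if I can exhibit the constant function $c$ as itself computed by an ap word function — but the zero-state / constant automaton computing $c$ may have different $I,F$. Rather, I will argue directly: the ap property says $\frac1L\sum_a \Fa(va)=\Fa(v)$, and combined with uniform convergence and $\fa\equiv c$, the maximum and minimum of $\Fa(va)$ over $a$ both converge to $c$ along any branch (as in the proof of Lemma~\ref{sup}), which forces $\Fa(v)=c$ for all $v$, i.e. $(2)$ holds. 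Once $\Fa\equiv c$: then for every $v$ and every $a$, $IA_v(A_aF-F)=\Fa(va)-\Fa(v)=0$, so by left-minimality $A_aF-F=0$, which is $(3)$. So the real content is the step "$\fa$ constant $+$ ap $+$ uniform convergence $\Rightarrow \Fa$ constant," which mirrors Lemma~\ref{sup}/Lemma~\ref{ap-zero}; I would in fact derive it by applying Lemma~\ref{ap-zero} to the ap word function $\Fa - c$ (it is ap since $\Fa$ is ap and constants are ap, and it defines the zero $\omega$-function), concluding $\Fa - c = 0$. That is the slickest form and avoids re-running the max/min argument.

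So the final outline is: (i) $(1)\Rightarrow(2)$ via Lemma~\ref{ap-zero} applied to $\Fa-c$ (needs ap only, not left-minimality); (ii) $(2)\Rightarrow(3)$ via $\Fa(va)-\Fa(v)=IA_v(A_aF-F)=0$ for all $v$, then left-minimality; (iii) $(3)\Rightarrow(2)$ by induction giving $A_vF=F$, hence $\Fa\equiv IF$; (iv) $(2)\Rightarrow(1)$ trivially. The only place left-minimality enters is step (ii), and the only place ap enters is step (i). I expect step (i) — recognizing that $\Fa-c$ is an ap word function with zero $\omega$-function so Lemma~\ref{ap-zero} applies — to be the one subtlety worth spelling out; everything else is routine.
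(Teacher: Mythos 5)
Your final outline is correct and coincides with the paper's own proof: the substantive steps are exactly $(1)\Rightarrow(2)$ by applying Lemma~\ref{zero-ap-lemma} to the ap word function $\Fa-c$ (the paper writes this as $\Fa-G$ with $G\equiv c$), and $(2)\Rightarrow(3)$ from $IA_u(A_aF-F)=0$ together with left-minimality, the remaining implications being immediate. The intermediate detour through Lemma~\ref{sup} and Lemma~\ref{apu} is unnecessary, as you yourself conclude.
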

\begin{proof}
Implications $(3)\Rightarrow (2)\Rightarrow (1)$ are obvious. We prove
$(1)\Rightarrow (2)$ and $(2)\Rightarrow (3)$.

Assume (1):
$\fa(w)=c$ for all $w\in\So$. Then $\fa$ is the $\omega$-function
corresponding to both $\Fa$ and the constant word
function $G(u)=c$. Then $\Fa-G$ is an average preserving
word function whose $\omega$-function is zero, so by Lemma~\ref{zero-ap-lemma} we have
$\Fa-G=0$, i.e. condition (2) holds.

To prove (3), assuming (2), we note that
for all $u\in\Sigma^*$ and $a\in\Sigma$ the equality
\[
IA_uA_aF=\Fa(ua)=\Fa(u)=IA_uF
\]
holds. By left minimality this implies $A_aF=F$.
\end{proof}

Notice that even without left-minimality we have the following:
if $\A$ is an
ap WFA such that $A_aF\neq F$ for some $a\in \Sigma$
then there exists a choice for the initial distribution $I$ such that $\fa$ is not constant.

\begin{lemma}\label{regular}
Let $\{B_0,B_1\}$ be a stable set of matrices. Then
$\det(B_0+B_1-2E)\neq 0$.
\end{lemma}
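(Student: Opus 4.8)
The plan is to argue by contradiction, exploiting the fact that a kernel vector of $B_0+B_1-2E$ behaves exactly like an average-preserving final distribution. Suppose $\det(B_0+B_1-2E)=0$. Then there is a nonzero column vector $v\in\mathbb{R}^{n}$ with $(B_0+B_1)v=2v$. Fix an arbitrary row vector $I\in\mathbb{R}^{1\times n}$ and define the word function $F\colon\{0,1\}^*\to\mathbb{R}$ by $F(u)=I B_u v$, where $B_u$ is the product of the $B_a$ along $u$ and $B_\epsilon=E$.

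First I would verify that $F$ is average preserving. For every $u\in\{0,1\}^*$,
\[
F(u0)+F(u1)=I B_u B_0 v + I B_u B_1 v = I B_u (B_0+B_1) v = I B_u (2v) = 2 F(u),
\]
which is precisely the ap condition for the binary alphabet. Next I would determine the $\omega$-function of $F$. Since $\{B_0,B_1\}$ is a stable set, $B_{pref_k(w)}\to 0$ for every $w\in\So$ (uniformly, by Lemma~\ref{uniform}), so $F(pref_k(w))=I B_{pref_k(w)} v \to 0$. Hence $F$ is an ap word function whose $\omega$-function is identically zero, and Lemma~\ref{zero-ap-lemma} forces $F\equiv 0$. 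In particular $I v = F(\epsilon) = 0$. As $I$ was arbitrary, this gives $v=0$, contradicting the choice of $v$; therefore $\det(B_0+B_1-2E)\neq 0$.

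I do not expect a genuine obstacle here. The only point requiring a little care is that Lemma~\ref{zero-ap-lemma} is stated for scalar word functions, so one must fix the row vector $I$ first, deduce $Iv=0$ for every $I$, and only then conclude $v=0$ — rather than attempting to apply the lemma directly to the vector-valued map $u\mapsto B_u v$.
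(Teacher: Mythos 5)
Your proof is correct, but it takes a genuinely different route from the paper's. The paper argues directly with norms: a nonzero $v$ in the kernel of $B_0+B_1-2E$ satisfies $\left(\frac{B_0+B_1}{2}\right)^n v = v$ for all $n$; expanding this power as the average $\frac{1}{2^n}\sum_{|w|=n}B_w$ and invoking Lemma~\ref{uniform} (stability implies \emph{uniform} convergence, hence $\|B_w\|<1$ for all $w$ of some fixed length $n$) yields $\|v\|<\|v\|$, a contradiction. You instead package the kernel vector as the final distribution of an ap word function $F(u)=IB_uv$ and invoke Lemma~\ref{zero-ap-lemma}: since stability already gives $B_{pref_k(w)}\to 0$ for every $w\in\So$, the $\omega$-function of $F$ is identically zero, so $F\equiv 0$, and in particular $Iv=F(\epsilon)=0$ for every $I$, forcing $v=0$. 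Your reduction is clean and, notably, does not need the uniformity supplied by Lemma~\ref{uniform} --- pointwise convergence, which is the very definition of a stable set, suffices --- whereas the paper's norm estimate genuinely uses the uniform bound to get all products of a common length below norm one. What the paper's version buys in exchange is a purely linear-algebraic, self-contained argument that does not route through the automata-theoretic Lemma~\ref{zero-ap-lemma}. Your closing remark about fixing the row vector $I$ first and applying Lemma~\ref{zero-ap-lemma} only to the scalar functions $u\mapsto IB_uv$ is exactly the right care to take; there is no gap.
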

\begin{proof}
Were it not the case, there would exist a vector $v\neq 0$ such that for each
$n$ we would have
\[
\left(\frac{B_0+B_1}2\right)^n v=v.
\]

Then we can write:
\[
\left\|v\right\|=\left\|\left(\frac{B_0+B_1}2\right)^nv\right\|=\left\|\sum_{w\in
\Sigma^n}\frac{B_wv}{2^n}\right\|\leq \sum_{w\in
\Sigma^n}\frac{\left\|B_w\right\|}{2^n}\left\|v\right\|.
\]
However, by Lemma~\ref{uniform} there
exists $n$ such that $\left\|B_w\right\|<1$ for each $w$ of length $n$.
For such $n$,
\[
\left\|v\right\|
< \sum_{w\in\Sigma^n}\frac{1}{2^{n}}\left\|v\right\|=\left\|v\right\|,
\]
a contradiction.
\end{proof}

The following theorem (and its proof) gives us tools to generate
ap WFA with non-constant, continuous $\fa$ and $\fah$. We know
from Corollary~\ref{corollary47} that we can limit the search to ap WFA with
transition matrices in the form (\ref{preform}), for stable $\{B_0,B_1\}$.
The minimality condition can be replaced by the weaker concept that
all initial distribution yield a continuous WFA.

\begin{theorem}\label{determinator}
Let $\{B_0,B_1\}$ be a stable set of matrices. Consider the problem of finding
vectors $\mathbf{b}_0,\mathbf{b}_1$ and a final distribution $F$
so that, for any choice of the initial distribution $I$, the transition matrices
\[
A_i=
\begin{pmatrix}
 B_i & \vline & \mathbf{b}_i \\ \hline
  \mathbf{0} & \vline & 1
\end{pmatrix},\,i=0,1,
\]
describe an ap WFA $\A$ with continuous $\fa$ and $\fah$. We also want $A_0F\neq F$,
so that for some initial distribution $\A$ does not define the constant function.
\begin{enumerate}[(1)]
\item If $\det(B_0+B_1-E)=0$ then
we can algorithmically find such vectors $\mathbf{b}_0$, $\mathbf{b}_1$ and $F$.
\item If $\det(B_0+B_1-E)\neq 0$ then such choices do not exist: only the constant function
$\fa$ can be obtained.
\end{enumerate}
\end{theorem}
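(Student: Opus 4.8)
The plan is to push everything down to the $n\times n$ blocks and use that, for a stable set $\{B_0,B_1\}$, the matrices $E-B_0$, $E-B_1$ and $B_0+B_1-2E$ are all invertible (the first two because a stable singleton has no eigenvalue $1$, the last by Lemma~\ref{regular}). First I would observe that no matter how $\mathbf{b}_0,\mathbf{b}_1,F$ are chosen, each $A_a$ already has the form (\ref{preform}) with $\{B_0,B_1\}$ stable, so by the partial converse to Lemma~\ref{matrix1} (case (3) of Lemma~\ref{matrix2} with $D_a=1$, together with Corollary~\ref{cont-conv}) the automaton is uniformly convergent for \emph{every} $I$; hence $\fa$ is always continuous, and by Remark~\ref{uniformremark} continuity of $\fah$ is the same as uniform continuity of $\fah$, which by Corollary~\ref{cont} (with $g=\fa$, since $\fa$ is already continuous on $\So$) reduces to the single requirement $\fa(v10^\omega)=\fa(v01^\omega)$ for all $v\in\Ss$.

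Next I would turn the remaining two demands into linear algebra. Writing $F=(F',t)^T$, the ap identity $(A_0+A_1)F=2F$ forces $t\neq 0$ for a nontrivial automaton (otherwise $F=0$), so I normalise $t=1$ and obtain $(B_0+B_1-2E)F'=-(\mathbf{b}_0+\mathbf{b}_1)$, i.e.\ $F'$ is determined by $\mathbf{b}_0,\mathbf{b}_1$. Introducing the change of variables $y_i=(E-B_i)^{-1}\mathbf{b}_i$ and computing $A_{10^k}$, $A_{01^k}$ as in Lemma~\ref{cond2}, a passage to the limit gives that $A_{10^\omega}-A_{01^\omega}$ has a single nonzero block $\mathbf{d}=M(y_0-y_1)$ in its upper-right corner, where $M=B_0+B_1-E$. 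Since then $A_v(A_{10^\omega}-A_{01^\omega})F=(B_v\mathbf{d},0)^T$, the condition $\fa(v10^\omega)=\fa(v01^\omega)$ holding for all $v$ and all $I$ is equivalent to $\mathbf{d}=0$, i.e.\ $y_0-y_1\in\Ker M$. Finally I would record the identity $F'-y_0=(B_0+B_1-2E)^{-1}(E-B_1)(y_0-y_1)$, obtained by substituting the definitions, which shows $A_0F=F\iff F'=y_0\iff (E-B_1)(y_0-y_1)=0\iff y_0=y_1$ (the last equivalence because $\{B_1\}$ stable excludes the eigenvalue $1$).

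From this dictionary both parts drop out. For (2): if $\det M\neq 0$ then $M$ is injective, so $\mathbf{d}=0$ forces $y_0=y_1$, hence $A_0F=A_1F=F$; then $A_vF=F$ for every word $v$ and $\fa$ is the constant $IF$ for every $I$, so no admissible choice with $A_0F\neq F$ can exist. For (1): if $\det M=0$ then $\Ker M\neq\{0\}$; fix any $0\neq z\in\Ker M$ and set $y_1=0$, $y_0=z$, i.e.\ $\mathbf{b}_1=\mathbf{0}$, $\mathbf{b}_0=(E-B_0)z$, and $F'=-(B_0+B_1-2E)^{-1}(E-B_0)z$, $F=(F',1)^T$. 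Then $\A$ is ap, $\fa$ is automatically continuous, $\mathbf{d}=Mz=0$ makes $\fah$ continuous for every $I$, and $y_0\neq y_1$ gives $A_0F\neq F$, so by the remark after Lemma~\ref{constant} some initial distribution yields a non-constant $\fa$. All the operations involved (computing $\Ker M$, inverting the fixed matrices, forming $\mathbf{b}_0$ and $F$) are effective, hence the search is algorithmic.

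The step I expect to be the real work is setting up these equivalences: realising that ``$\fah$ continuous for every $I$'' collapses, through the block shape of $A_{10^\omega}-A_{01^\omega}$ and Corollary~\ref{cont}, to the one linear condition $M(y_0-y_1)=0$; and spotting the identity $F'-y_0=(B_0+B_1-2E)^{-1}(E-B_1)(y_0-y_1)$, which is what makes both parts hinge on the same vector $y_0-y_1$ and on whether $\Ker M$ is trivial. Once stability is used to get invertibility of $E-B_i$ and $B_0+B_1-2E$ and to rule out the eigenvalue $1$, the rest is routine matrix bookkeeping.
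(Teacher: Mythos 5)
Your proposal is correct and follows essentially the same route as the paper's proof: normalizing the last coordinate of $F$ to $1$, using stability to invert $E-B_0$, $E-B_1$ and $B_0+B_1-2E$, reducing continuity of $\fah$ via Corollary~\ref{cont} to $A_{01^\omega}F=A_{10^\omega}F$, and observing that both this condition and $A_0F\neq F$ hinge on whether $(E-B_1)^{-1}\mathbf{b}_1-(E-B_0)^{-1}\mathbf{b}_0$ can be a nonzero element of $\Ker(B_0+B_1-E)$. Your substitution $y_i=(E-B_i)^{-1}\mathbf{b}_i$ and the identity $F'-y_0=(B_0+B_1-2E)^{-1}(E-B_1)(y_0-y_1)$ are just a tidier bookkeeping of the paper's equations (\ref{conteq}) and (\ref{nonconstanteq}), and your explicit choice $y_1=0$, $y_0=z$ in part (1) is a special case of the paper's construction.
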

\begin{proof}
We are going to obtain sufficient and necessary
conditions for the vectors $\mathbf{b}_0,\mathbf{b}_1$ and $F$.

By definition, the ap condition is $(A_0+A_1)F= 2F$. Let $F'$ be the vector
obtained from $F$ by removing the last element. Note that the last element of
$F$ cannot be zero, because then the ap condition would require $(B_0+B_1)F'=
2F'$, which only has the solution $F'=0$ by Lemma~\ref{regular}. Without loss
of generality, we fix the last element of $F$ to be 1. (We can do this because
multiplication of the final distribution by any non-zero constant $c$ has only
the effect of multiplying $\fa$ and $\fah$ by $c$.)

The ap condition becomes
\[
\begin{pmatrix}
 B_0+B_1 & \vline & \mathbf{b}_0+\mathbf{b}_1 \\ \hline
  \mathbf{0} & \vline & 2
\end{pmatrix}
\begin{pmatrix}
F'\\
1
\end{pmatrix}
=2\begin{pmatrix}
F'\\
1
\end{pmatrix},
\]
that is,
\begin{equation}
\label{apeq}
(B_0+B_1-2E)F'+\mathbf{b}_0+\mathbf{b}_1=0.
\end{equation}
From Lemma~\ref{regular}, we have that $B_0+B_1-2E$ is regular. This means
that for any choice of vectors $\mathbf{b}_0,\mathbf{b}_1$
there is a unique $F'$, given by (\ref{apeq}), that makes the WFA average preserving.

The requirement that $\fa$ is continuous is automatically satisfied as $\{B_0, B_1\}$ is stable
(the case (3) of Lemma~\ref{matrix2} and the case (2) of Theorem~\ref{WFA-RCP}).
By Corollary~\ref{cont} continuity of $\fah$ is then equivalent to
the condition $\fa(v10^{\omega})=\fa(v01^{\omega})$ for all $v\in\Sigma^*$.
Since we require $\fah$ to be continuous for all initial distributions, we have
the equivalent condition that
\[
\ds\lim
_{k\rightarrow\infty}
(A_0A_1^k)F=\ds\lim_{k\rightarrow\infty}(A_1A_0^k)F.
\]
As in the proof of Lemma~\ref{cond2}, we obtain

\begin{align*}
A_{01^\omega}&=\begin{pmatrix}
  0 & \vline & \mathbf{b}_0+B_0(E-B_1)^{-1}\mathbf{b}_1\\ \hline
  \mathbf{0} & \vline & 1
\end{pmatrix}
\quad \textrm{and}\\
A_{10^\omega}&=\begin{pmatrix}
0 & \vline & \mathbf{b}_1+B_1(E-B_0)^{-1}\mathbf{b}_0\\ \hline
  \mathbf{0} & \vline & 1
\end{pmatrix}
\end{align*}
Therefore, we can rewrite $A_{01^\omega}F=A_{10^\omega}F$ as an equation for
vectors $\mathbf{b}_0$ and $\mathbf{b}_1$:
\[
 \mathbf{b}_0+B_0(E-B_1)^{-1}\mathbf{b}_1=\mathbf{b}_1+B_1(E-B_0)^{-1}\mathbf{b}_0
.\]
This can be written equivalently as
\begin{equation}
\label{conteq}
(B_0+B_1-E)[(E-B_1)^{-1}\mathbf{b}_1-(E-B_0)^{-1}\mathbf{b}_0] = 0
\end{equation}
So choices of $\mathbf{b}_0$, $\mathbf{b}_1$ and $F$ that satisfy the requirements
of the theorem (except for $A_0F\neq F$) are exactly the ones that satisfy
(\ref{apeq}) and (\ref{conteq}).

Consider now the final requirement $A_0F\neq F$. This is equivalent to
\[
(B_0-E)F'+\mathbf{b}_0\neq 0,
\]
and further to
$F' \neq -(B_0-E)^{-1}\mathbf{b}_0$.
Substituting for $F'$ in the ap condition (\ref{apeq}), and recalling that matrix $B_0+B_1-2E$ is
regular, we obtain the equivalent condition
$$
-\mathbf{b}_0- (B_1-E)(B_0-E)^{-1}\mathbf{b}_0+\mathbf{b}_0+\mathbf{b}_1 \neq 0,
$$
which can be rewritten as
\begin{equation}
\label{nonconstanteq}
(E-B_1)^{-1}\mathbf{b}_1-(E-B_0)^{-1}\mathbf{b}_0 \neq 0.
\end{equation}
We have obtained  sufficient and necessary conditions (\ref{apeq}), (\ref{conteq}) and
(\ref{nonconstanteq}).

Now we can prove parts (1) and (2) of the theorem.
If $\det(B_0+B_1-E)\neq 0$ then (\ref{conteq}) and (\ref{nonconstanteq}) are contradictory,
so no choice of $\mathbf{b}_0$, $\mathbf{b}_1$ and $F$ can satisfy all the requirements.
On the other hand, if $\det(B_0+B_1-E) = 0$ we can choose
$\mathbf{b}_0,\mathbf{b}_1$ so that $(E-B_1)^{-1}\mathbf{b}_1-(E-B_0)^{-1}\mathbf{b}_0$ is a
nonzero element of the kernel of matrix $B_0+B_1-E$. This can be easily done by, for example,
choosing any nonzero $\mathbf{k}\in ker(B_0+B_1-E)$ and an arbitrary vector
$\mathbf{b}_0$, and setting
$$
\mathbf{b}_1 = (E-B_1) \left[\mathbf{k} + (E-B_0)^{-1}\mathbf{b}_0\right].
$$
These choices of $\mathbf{b}_0$ and $\mathbf{b}_1$ satisfy (\ref{conteq}) and (\ref{nonconstanteq}).
We can then calculate the unique $F'$ that satisfies (\ref{apeq}).
\end{proof}

We see that in order to generate non-constant functions we need a
stable pair of matrices $\{B_0,B_1\}$ such that
$\det(B_0+B_1-E)=0$.

The following numerical example illustrates the previous proof.
\begin{example}
\label{contex}
Let
$$
B_0=
\left(
\begin{array}{rr}
\frac{1}{3} & \frac{1}{3} \\
\frac{1}{3} & \frac{1}{3}
\end{array}\right), \hspace*{1cm}
B_1=
\left(
\begin{array}{rr}
\frac{2}{3} & 0 \\
-\frac{1}{3} & \frac{2}{3}
\end{array}\right).
$$
It is easy to see that $\{B_0, B_1\}$ is stable and $det(B_0+B_1-E)=0$. The kernel of $B_0+B_1-E$
is generated by $(1,0)^T$. If we (arbitrarily)
choose $\mathbf{k}=(9,0)^T$ and $\mathbf{b}_0=(3,0)^T$ we can solve
$$
\mathbf{b}_1 = (E-B_1) \left[\mathbf{k} + (E-B_0)^{-1}\mathbf{b}_0\right] = (5,6)^T.
$$
From (\ref{apeq}) we get
$$
F' = -(B_0+B_1-2E)^{-1}(\mathbf{b}_0+\mathbf{b}_1) = (10,6)^T.
$$
So we have the ap WFA
$$
A_0=
\left(
\begin{array}{rrr}
\frac{1}{3} & \frac{1}{3} &3 \\
\frac{1}{3} & \frac{1}{3} &0\\
0 & 0 & 1
\end{array}\right), \hspace*{1cm}
A_1=
\left(
\begin{array}{rrr}
\frac{2}{3} & 0 & 5\\
-\frac{1}{3} & \frac{2}{3} &6\\
0 & 0& 1
\end{array}\right),\hspace*{1cm}
F=
\left(
\begin{array}{c} 10 \\ 6 \\ 1\end{array}
\right)
$$
which with the initial distribution $(1,0,0)$ defines the real function $\fah$ whose graph is shown in Figure~\ref{contfig}.

\picture{The continuous function specified by the ap WFA in
Example~\ref{contex}}{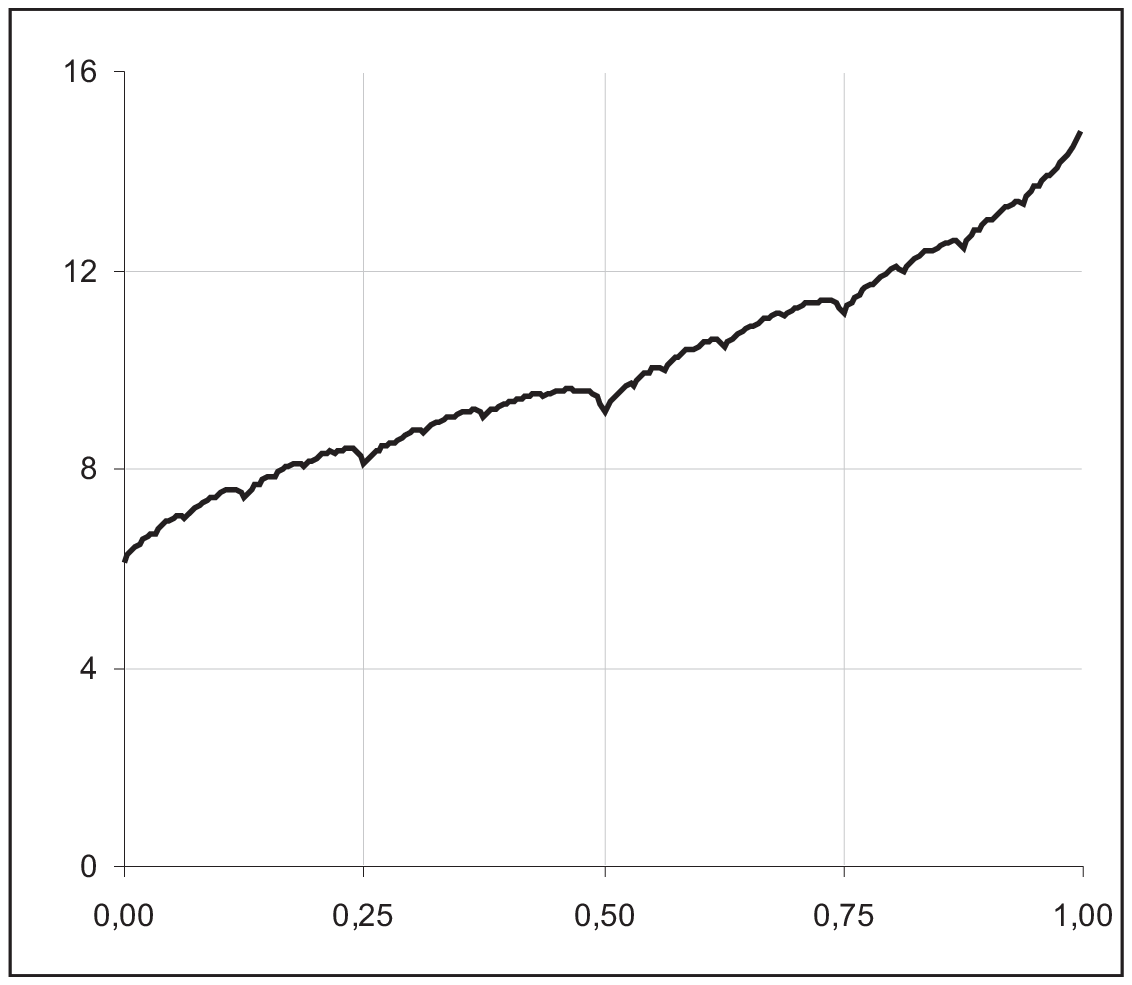}{contfig}

\end{example}

\begin{example}
It is easy to see that one-state continuous ap WFA can compute precisely all
constant functions.

Let us find all two-state ap WFA with continuous $\fa$ and $\fah$.
Now $B_0$ and $B_1$ are numbers, and the condition that $det(B_0+B_1-E)=0$ forces them to add up to one.
Stability requires both numbers to be less than 1 in absolute value, so we have $B_0=a$ and $B_1=1-a$
for some $0<a<1$. We can choose $b_0$ and $b_1$ arbitrarily, and calculate $F'=b_0+b_1$.
We get the continuous ap WFA with
$$
A_0=
\left(
\begin{array}{cc}
a & b_0 \\
0 & 1
\end{array}\right), \hspace*{1cm}
A_1=
\left(
\begin{array}{cc}
1-a & b_1\\
0 & 1
\end{array}\right),\hspace*{1cm}
F=
\left(
\begin{array}{c} b_0+b_1 \\ 1\end{array}
\right),
$$
for $0<a<1$ and $b_0, b_1\in \mathbb{R}$.
Note that we did not require (\ref{nonconstanteq}) to hold, which means that we also get
the constant functions when
$$
\frac{b_1}{a} = \frac{b_0}{1-a}.
$$
\end{example}

\section{Conclusions}
We have investigated the relationship between continuity of WFA and properties
of its transition matrices. We have obtained a
``canonical form''  for
ap WFA computing continuous functions
(the form (\ref{preform}) from Lemma~\ref{matrix1}). These
results generalize some of the theorems in \cite{Karh} and are
similar to those obtained in a slightly different setting in the
article \cite{Daubechies}. Moreover, we present a method of constructing
continuous WFA.

We have also asked questions about decidability of various incarnations of the
continuity problem. Mostly, these problems turn out to be equivalent to the
Matrix Product Stability problem. This is why we believe that any interesting
question about continuity of functions computed by WFA is at least as hard as
Matrix Product Stability.

There are numerous open questions in this area. Most obviously, settling
the decidability of the Matrix Product Stability problem would be a great
step forward. However, as this problem has resisted efforts of
mathematicians so far, we offer a few other open problems:

\begin{question}
Given an automaton computing a continuous $\omega$-function, can we
algorithmically find the ap automaton computing the same function?
\end{question}

\begin{question}
Given ap automaton computing $\omega$-function which is uniformly continuous on
$\Omega$, can we algorithmically find automaton computing the function $g$ from
Theorem~\ref{redistribution}?
\end{question}
\begin{question}
Is deciding the continuity of $\fah$ for ap automata computationally equivalent
with deciding Matrix Product Stability?
\end{question}

Other interesting  questions that can be posed on WFA are whether
a given $\fah$ converges everywhere, and whether it is bounded. We
know that all level WFA (as described in \cite{Karh}) are both everywhere convergent and bounded
but both properties remain to be characterized in the
general case. We also point out that similar results on higher differentiability classes
(e.g. continuously differentiable WFA functions) are likely to exist and should be
investigated.

\bibliographystyle{elsarticle-num}
\bibliography{citations}
\end{document}